\documentclass[journal,twoside,web]{ieeecolor}
\usepackage{generic}
\usepackage{cite}
\usepackage{amsmath,amssymb,amsfonts}
\usepackage{algorithmic}
\usepackage{relsize}
\usepackage{graphicx}
\usepackage{multirow}
\newtheorem{definition}{Definition}
\newtheorem{theorem}{Theorem}
\newtheorem{lemma}{Lemma}
\newtheorem{assumption}{Assumption}
\newtheorem{corollary}{Corollary}
\newtheorem{remark}{Remark}
\usepackage{steinmetz}
 \usepackage{cancel}
 \usepackage{subcaption}
\usepackage{tikz,pgfplots} 
\usetikzlibrary{calc,patterns,arrows,shapes.arrows,intersections}
\usepackage{textcomp}
\def\BibTeX{{\rm B\kern-.05em{\sc i\kern-.025em b}\kern-.08em
    T\kern-.1667em\lower.7ex\hbox{E}\kern-.125emX}}
\markboth{\journalname, VOL. XX, NO. XX, XXXX 2017}
{Author \MakeLowercase{\textit{et al.}}: Closed-loop frequency analysis of reset control systems}
\begin{document}
\title{Closed-Loop Frequency Analysis of Reset Control Systems}
\author{Ali Ahmadi Dastjerdi, \IEEEmembership{Member, IEEE}, Alessandro~Astolfi, \IEEEmembership{Fellow, IEEE}, Niranjan~Saikumar, \IEEEmembership{Member, IEEE}, Nima~Karbasizadeh, \IEEEmembership{Member, IEEE}, Duarte~Val\'erio, \IEEEmembership{Senior Member, IEEE} and S.~Hassan~HosseinNia,~\IEEEmembership{Senior Member, IEEE}
\thanks{This paper is submitted for review on 24-11-2020. This work has been partially supported by NWO through OTP TTW project $\#$16335, by the EACEA, by the European Union's Horizon 2020 Research and Innovation Programme under grant agreement No 739551 (KIOS CoE), and by the Italian Ministry for Research in the framework of the 2017 Program for Research Projects of National Interest (PRIN), Grant no. 2017YKXYXJ.}
\thanks{A. Ahmadi Dastjerdi, N. Saikumar, N. Karbasizadeh and S.H.~HosseinNia are with Department of Precision and Microsystems Engineering, Delft University of Technology, Delft, The Netherlands, (e-mail: A.AhmadiDastjerdi@tudelft.nl, N.Saikumar@tudelft.nl, N.KarbasizadehEsfahani@tudelft.nl, S.H.HosseinNiaKani@tudelft.nl). }
\thanks{A. Astolfi is with the Department of Electrical and Electronic Engineering, Imperial College London, London, SW7 2AZ, UK and with the Dipartimento di Ingegneria Civile e Ingegneria Informatica, Universita di Roma ``Tor Vergata", Rome, 00133, Italy (e-mail: a.astolfi@imperial.ac.uk).}
\thanks{D. Val\'erio is with faculty of IDMEC, Instituto Superior T\'ecnico, Universidade de Lisboa, Lisboa, Portugal (duarte.valerio@tecnico.ulisboa.pt).}}

\maketitle

\begin{abstract}
This paper introduces a closed-loop frequency analysis tool for reset control systems. To begin with sufficient conditions for the existence of the steady-state response for a closed-loop system with a reset element and driven by periodic references are provided. It is then shown that, under specific conditions, such a steady-state response for periodic inputs is periodic with the same period as the input. Furthermore, a framework to obtain the steady-state response and to define a notion of closed-loop frequency response, including high order harmonics, is presented. Finally, pseudo-sensitivities for reset control systems are defined. These simplify the analysis of this class of systems and allow a direct software implementation of the analysis tool. To show the effectiveness of the proposed analysis method the position control problem for a precision positioning stage is studied. In particular, comparison with the results achieved using methods based on the Describing Function shows that the proposed method achieves superior closed-loop performance.
\end{abstract}

\begin{IEEEkeywords}
Convergent, Frequency-Domain Analysis, Pseudo-Sensitivities, Reset Controllers.
\end{IEEEkeywords}
\section{Introduction}\label{sec:1}
\IEEEPARstart{P}{roportional} Integral Derivative (PID) controllers are used in more than $90\%$ of industrial control applications \cite{dastjerdi2018tuning,o2009handbook,chen2006ubiquitous}. However, cutting-edge industrial applications have control requirements that cannot be fulfilled by PID controllers. To overcome this problem linear controllers may be substituted by non-linear ones. Reset controllers are one such non-linear controllers which have attracted attention due to their simple structure and their ability to improve closed-loop performance \cite{clegg1958nonlinear,beker2004fundamental,beker2001plant,5712180,villaverde2011reset,banos2011reset,van2017frequency,wu2007reset,pavlov2013steady,panni2014position,hazeleger2016second,beerens2019reset,saikumar2019resetting,saikumar2019constant,7332744,7448867,4781997}.

A traditional reset controller consists of a linear element the state of which is reset to zero when the input equals zero. The simplest reset element is the Clegg Integrator (CI), which is a linear integrator with a reset mechanism \cite{clegg1958nonlinear}. To provide design freedom and applicability, reset controllers such as First Order Reset Elements (FORE) \cite{zaccarian2005first,horowitz1975non} and Second Order Reset Elements (SORE) have also been introduced \cite {hazeleger2016second}. These reset elements are utilized to construct new compensators to achieve significant performance enhancement \cite{hunnekens2014synthesis,van2018hybrid,palanikumar2018no,chen2019development,valerio2019reset,saikumar2019constant}. In order to further improve the performance of reset control systems several techniques, such as the considerations of non-zero reset values \cite{banos2011reset,horowitz1975non}, reset bands \cite{barreiro2014reset,banos2014tuning}, fixed reset instants, and $PI+CI$ configurations \cite{vidal2008qft,nair2018grid,hosseinnia2014general} have been introduced.

Frequency-domain analysis is preferred in industry since this allows ascertaining closed-loop performance measures in an intuitive way. However, the lack of such methods for non-linear controllers is one of the reasons why non-linear controllers are not widely popular in industry. The Describing Function (DF) method is one of the few methods for approximately studying non-linear controllers in the frequency-domain and this has been widely used also in the literature of reset controllers \cite{guo2009frequency,banos2011reset,valerio2019reset,saikumar2019constant}. The DF method relies on a quasi-linear approximation of the steady-state output of a non-linear system considering only the first harmonic of the Fourier series expansion of the input and output signals (assumed periodic). The general formulation of the DF method for reset controllers is presented in \cite{guo2009frequency}, which however does not provide any information on the closed-loop steady-state response.

In this paper, first, sufficient conditions for the existence of the steady-state response for a closed-loop system with a reset element and driven by a periodic input are given. Then, a notion of closed-loop frequency response for reset control systems, including high order harmonics, is introduced. Pseudo-sensitivities to combine harmonics and facilitate analyzing reset control systems in the closed-loop configuration are then defined. All of these ideas can be utilized to develop a toolbox which is briefly discussed. Furthermore, the method is used to analyze the performance of a precision positioning stage. Note finally that, contrary to the DF method, which provides only approximations for the periodic steady-state response of reset control systems, the proposed tools allow computing exact steady-state responses to periodic excitations.      

The paper is organized as follows. Preliminaries on the frequency analysis for reset controllers are presented in Section~\ref{sec:2}. In Section~\ref{sec:3} sufficient conditions to define a notion of frequency response are presented. Then, a method to obtain closed-loop frequency responses for reset control systems, including high order harmonics, is developed, and pseudo-sensitivities are defined. In Section~\ref{sec:4} the steady-state response of reset controllers to periodic inputs is studied. In Section~\ref{sec:5} the performance of our proposed methods is assessed on an illustrative example. Finally, some concluding remarks and suggestions for future studies are given in Section~\ref{sec:6}.
\section{Preliminaries}\label{sec:2}
In this section frequency-domain descriptions for reset controllers are briefly recalled. The state-space representation of a reset element is given by equations of the form
\begin{equation}\label{E-201}
\left\{
\begin{aligned}
\dot{x}_r(t) &=A_rx_r(t)+B_rr(t), & r(t)\neq0,  \\
x_r(t^+) &=A_\rho x(t), & r(t)=0, \\
u_r(t) &=C_rx(t)+D_rr(t),
\end{aligned}
\right.
\end{equation}
in which $x_r(t)\in\mathbb{R}^{n_r}$ is the reset states, $r(t)\in\mathbb{R}$ is an external signal, $u_r(t)$ is the control input, $A_r$, $B_r$, $C_r$ and $D_r$ are the dynamic matrices of the reset element, and $A_\rho$ determines the value of the reset states after the reset action. The transfer function $C_r(sI-A_r)^{-1}B_r+D_r$ is called the base transfer function of the reset controller. To study the reset controller~(\ref{E-201}) in the frequency-domain one could use various approaches. For example, in order to find the DF, a sinusoidal reference $r(t)=a_0\sin(\omega t)$, $\omega>0$ is applied and the output is approximated by means of the first harmonic of the Fourier series expansion of the steady-state response (provided if exists). In order to have a well-defined steady-state response we assume that $A_r$ has all eigenvalues with negative real part and $A_\rho e^{\frac{A_r\pi}{\omega}}$ has all eigenvalues with magnitude smaller than one \cite{guo2009frequency}. In this case, the state-space representation of the reset element~(\ref{E-201}) can be re-written as  
\begin{equation}\label{E-202}
\left\{
\begin{aligned}
\dot{x}_r(t) &=A_rx_r(t)+a_0B_r\sin(\omega t), & t\neq t_k, \\
x_r(t^+) &=A_\rho x(t), & t=t_k, \\
u_r(t) &=C_rx(t)+a_0D_r\sin(\omega t),
\end{aligned}
\right.
\end{equation}
with $\omega>0$, in which $t_k=\frac{k\pi}{\omega}$, with $k\in\mathbb{N}$, are the reset instants. According to \cite{guo2009frequency}, the DF associated to system~(\ref{E-202}) is given by
\begin{equation}\label{E-203}
\begin{aligned}
\mathcal{N}_{DF}(\omega) &=\dfrac{a_1(\omega)e^{j\varphi_1(\omega)}}{a_0} \\
&=C_r(j\omega I-A_r)^{-1}(I+j\theta(\omega))B_r+D_r,
\end{aligned}
\end{equation}
where 
\setlength{\arraycolsep}{0.0em}
\begin{eqnarray}\label{E-204}
\theta(\omega)&{=}&\frac{-2\omega^2}{\pi}(I+e^{\frac{\pi A_r}{\omega}})\Bigg((I+A_\rho e^{\frac{\pi A_r}{\omega}})^{-1}A_\rho(I+e^{\frac{\pi A_r}{\omega}})\nonumber\\
&&{-}\:I\Bigg)(\omega^2I+A_r^2)^{-1}.
\end{eqnarray}
\setlength{\arraycolsep}{5pt}Recently, a new tool, called Higher-Order Sinusoidal Input Describing Functions (HOSIDF), for studying non-linearities in the frequency-domain has been introduced in \cite{nuij2006higher}. In this method a non-linear system is considered as a virtual harmonic generator and the HOSIDF is defined as \cite{nuij2006higher}:
\begin{equation}\label{E-204-205}
H_n(j\omega)=\dfrac{a_n(\omega)e^{j\varphi_n(a_0,\omega)}}{a_0},
\end{equation} 
in which $a_n(\omega)$ and $\varphi_n(a_0,\omega)$ are the $n^{\text{th}}$ components of the Fourier series expansion of the steady-state output of the system to a sinusoidal input. This framework has been extended to the reset controller (\ref{E-201}) in \cite{ka} and $H_n(j\omega)$ for this controller is obtained as 
\begin{equation}\label{E-205}
\begin{cases}
C_r(j\omega I-A_r)^{-1}(I+j\theta(\omega))B_r+D_r, & n=1,\\
C_r(jn\omega I-A_r)^{-1}j\theta(\omega)B_r, & n>1\ \text{odd,}\\
0, & n\ \text{even.}
\end{cases}
\end{equation}
Note that the above frequency analysis is made simple by the fact that the reset instants are known, that is the reset controller is studied in the open-loop. Frequency properties of reset controllers as part of a closed-loop system in the presence of a periodic reference or disturbance input are much more difficult to study, and are the subject of the next section. 
\section{Closed-loop frequency response of reset control systems}\label{sec:3}  
Consider the single-input single-output (SISO) control architecture in the top diagram of Fig.~\ref{F-31}. This includes as particular cases all schemes discussed in Section~\ref{sec:1}. The closed-loop system consists of a linear plant with transfer function $G(s)$, two linear controllers with proper transfer function $C_{\mathfrak{L}_1}(s)$ and $C_{\mathfrak{L}_2}(s)$, and a reset controller with base transfer function $C_\mathcal{R}(s)$. Let $\mathcal{L}$ be the LTI part of the system and assume that $G(s)$ is strictly proper. The state-space realization of $\mathcal{L}$ is described by the equations 
\tikzset{
	pattern size/.store in=\mcSize, 
	pattern size = 5pt,
	pattern thickness/.store in=\mcThickness, 
	pattern thickness = 0.3pt,
	pattern radius/.store in=\mcRadius, 
	pattern radius = 1pt}
\makeatletter
\pgfutil@ifundefined{pgf@pattern@name@_3fy9swdp5}{
	\pgfdeclarepatternformonly[\mcThickness,\mcSize]{_3fy9swdp5}
	{\pgfqpoint{0pt}{0pt}}
	{\pgfpoint{\mcSize+\mcThickness}{\mcSize+\mcThickness}}
	{\pgfpoint{\mcSize}{\mcSize}}
	{
		\pgfsetcolor{\tikz@pattern@color}
		\pgfsetlinewidth{\mcThickness}
		\pgfpathmoveto{\pgfqpoint{0pt}{0pt}}
		\pgfpathlineto{\pgfpoint{\mcSize+\mcThickness}{\mcSize+\mcThickness}}
		\pgfusepath{stroke}
}}
\makeatother
\begin{figure*}[!t]
	\centering
	\resizebox{0.7\textwidth}{!}{
		\tikzset{every picture/.style={line width=0.75pt}} 
		\begin{tikzpicture}[x=0.75pt,y=0.75pt,yscale=-1,xscale=1]
	\draw  [color={rgb, 255:red, 72; green, 147; blue, 233 }  ,draw opacity=1 ][line width=1.5]  (535,336) -- (538.63,336) -- (538.63,306) -- (545.88,306) -- (545.88,336) -- (549.5,336) -- (542.25,356) -- cycle ;
	\draw  [color={rgb, 255:red, 155; green, 155; blue, 155 }  ,draw opacity=1 ] (213,387) -- (286.5,387) -- (286.5,572) -- (213,572) -- cycle ;
	\draw [color={rgb, 255:red, 0; green, 0; blue, 0 }  ,draw opacity=1 ][line width=1.5]    (7,477) -- (209,477.98) ;
	\draw [shift={(213,478)}, rotate = 180.28] [fill={rgb, 255:red, 0; green, 0; blue, 0 }  ,fill opacity=1 ][line width=0.08]  [draw opacity=0] (11.61,-5.58) -- (0,0) -- (11.61,5.58) -- cycle    ;
	\draw [color={rgb, 255:red, 155; green, 155; blue, 155 }  ,draw opacity=1 ][line width=1.5]    (286.5,416) -- (476.5,416.98) ;
	\draw [shift={(480.5,417)}, rotate = 180.3] [fill={rgb, 255:red, 155; green, 155; blue, 155 }  ,fill opacity=1 ][line width=0.08]  [draw opacity=0] (11.61,-5.58) -- (0,0) -- (11.61,5.58) -- cycle    ;
	\draw [color={rgb, 255:red, 155; green, 155; blue, 155 }  ,draw opacity=1 ][line width=1.5]    (286.5,457) -- (475.5,456.02) ;
	\draw [shift={(479.5,456)}, rotate = 539.7] [fill={rgb, 255:red, 155; green, 155; blue, 155 }  ,fill opacity=1 ][line width=0.08]  [draw opacity=0] (11.61,-5.58) -- (0,0) -- (11.61,5.58) -- cycle    ;
	\draw  [color={rgb, 255:red, 155; green, 155; blue, 155 }  ,draw opacity=1 ][fill={rgb, 255:red, 155; green, 155; blue, 155 }  ,fill opacity=1 ] (374,477.5) .. controls (374,476.12) and (375.12,475) .. (376.5,475) .. controls (377.88,475) and (379,476.12) .. (379,477.5) .. controls (379,478.88) and (377.88,480) .. (376.5,480) .. controls (375.12,480) and (374,478.88) .. (374,477.5) -- cycle ;
	\draw  [color={rgb, 255:red, 155; green, 155; blue, 155 }  ,draw opacity=1 ][fill={rgb, 255:red, 155; green, 155; blue, 155 }  ,fill opacity=1 ] (375,519.5) .. controls (375,518.12) and (376.12,517) .. (377.5,517) .. controls (378.88,517) and (380,518.12) .. (380,519.5) .. controls (380,520.88) and (378.88,522) .. (377.5,522) .. controls (376.12,522) and (375,520.88) .. (375,519.5) -- cycle ;
	\draw  [color={rgb, 255:red, 155; green, 155; blue, 155 }  ,draw opacity=1 ][fill={rgb, 255:red, 155; green, 155; blue, 155 }  ,fill opacity=1 ] (375,496.5) .. controls (375,495.12) and (376.12,494) .. (377.5,494) .. controls (378.88,494) and (380,495.12) .. (380,496.5) .. controls (380,497.88) and (378.88,499) .. (377.5,499) .. controls (376.12,499) and (375,497.88) .. (375,496.5) -- cycle ;
	\draw [color={rgb, 255:red, 155; green, 155; blue, 155 }  ,draw opacity=1 ][line width=1.5]    (286.5,552) -- (474.5,552) ;
	\draw [shift={(478.5,552)}, rotate = 180] [fill={rgb, 255:red, 155; green, 155; blue, 155 }  ,fill opacity=1 ][line width=0.08]  [draw opacity=0] (11.61,-5.58) -- (0,0) -- (11.61,5.58) -- cycle    ;
	\draw  [color={rgb, 255:red, 155; green, 155; blue, 155 }  ,draw opacity=1 ] (480,537) -- (510,537) -- (510,567) -- (480,567) -- cycle ;
	\draw  [color={rgb, 255:red, 155; green, 155; blue, 155 }  ,draw opacity=1 ] (479,441) -- (509,441) -- (509,471) -- (479,471) -- cycle ;
	\draw  [color={rgb, 255:red, 155; green, 155; blue, 155 }  ,draw opacity=1 ] (480,402) -- (510,402) -- (510,432) -- (480,432) -- cycle ;
	\draw [color={rgb, 255:red, 155; green, 155; blue, 155 }  ,draw opacity=1 ][line width=1.5]    (510.5,415) -- (770.5,415) -- (845,456.07) ;
	\draw [shift={(848.5,458)}, rotate = 208.87] [fill={rgb, 255:red, 155; green, 155; blue, 155 }  ,fill opacity=1 ][line width=0.08]  [draw opacity=0] (11.61,-5.58) -- (0,0) -- (11.61,5.58) -- cycle    ;
	\draw [color={rgb, 255:red, 155; green, 155; blue, 155 }  ,draw opacity=1 ][line width=1.5]    (509,455) -- (776.5,456) -- (827.23,474.15) ;
	\draw [shift={(831,475.5)}, rotate = 199.69] [fill={rgb, 255:red, 155; green, 155; blue, 155 }  ,fill opacity=1 ][line width=0.08]  [draw opacity=0] (11.61,-5.58) -- (0,0) -- (11.61,5.58) -- cycle    ;
	\draw [color={rgb, 255:red, 155; green, 155; blue, 155 }  ,draw opacity=1 ][line width=1.5]    (511.5,553) -- (774.5,552) -- (845.37,495.49) ;
	\draw [shift={(848.5,493)}, rotate = 501.43] [fill={rgb, 255:red, 155; green, 155; blue, 155 }  ,fill opacity=1 ][line width=0.08]  [draw opacity=0] (11.61,-5.58) -- (0,0) -- (11.61,5.58) -- cycle    ;
	\draw  [color={rgb, 255:red, 155; green, 155; blue, 155 }  ,draw opacity=1 ][line width=1.5]  (831,475.5) .. controls (831,465.84) and (838.84,458) .. (848.5,458) .. controls (858.16,458) and (866,465.84) .. (866,475.5) .. controls (866,485.16) and (858.16,493) .. (848.5,493) .. controls (838.84,493) and (831,485.16) .. (831,475.5) -- cycle ;
	\draw [color={rgb, 255:red, 0; green, 0; blue, 0 }  ,draw opacity=1 ][line width=1.5]    (866,476) -- (1225,473.03) ;
	\draw [shift={(1229,473)}, rotate = 539.53] [fill={rgb, 255:red, 0; green, 0; blue, 0 }  ,fill opacity=1 ][line width=0.08]  [draw opacity=0] (11.61,-5.58) -- (0,0) -- (11.61,5.58) -- cycle    ;
	\draw  [color={rgb, 255:red, 155; green, 155; blue, 155 }  ,draw opacity=1 ][fill={rgb, 255:red, 155; green, 155; blue, 155 }  ,fill opacity=1 ] (492.24,475.01) .. controls (493.62,474.87) and (494.84,475.87) .. (494.99,477.24) .. controls (495.13,478.62) and (494.13,479.84) .. (492.76,479.99) .. controls (491.38,480.13) and (490.16,479.13) .. (490.01,477.76) .. controls (489.87,476.38) and (490.87,475.16) .. (492.24,475.01) -- cycle ;
	\draw  [color={rgb, 255:red, 155; green, 155; blue, 155 }  ,draw opacity=1 ][fill={rgb, 255:red, 155; green, 155; blue, 155 }  ,fill opacity=1 ] (490,520.5) .. controls (490,519.12) and (491.12,518) .. (492.5,518) .. controls (493.88,518) and (495,519.12) .. (495,520.5) .. controls (495,521.88) and (493.88,523) .. (492.5,523) .. controls (491.12,523) and (490,521.88) .. (490,520.5) -- cycle ;
	\draw  [color={rgb, 255:red, 155; green, 155; blue, 155 }  ,draw opacity=1 ][fill={rgb, 255:red, 155; green, 155; blue, 155 }  ,fill opacity=1 ] (491,497.5) .. controls (491,496.12) and (492.12,495) .. (493.5,495) .. controls (494.88,495) and (496,496.12) .. (496,497.5) .. controls (496,498.88) and (494.88,500) .. (493.5,500) .. controls (492.12,500) and (491,498.88) .. (491,497.5) -- cycle ;
	\draw  [color={rgb, 255:red, 155; green, 155; blue, 155 }  ,draw opacity=1 ][fill={rgb, 255:red, 155; green, 155; blue, 155 }  ,fill opacity=1 ] (652,474.5) .. controls (652,473.12) and (653.12,472) .. (654.5,472) .. controls (655.88,472) and (657,473.12) .. (657,474.5) .. controls (657,475.88) and (655.88,477) .. (654.5,477) .. controls (653.12,477) and (652,475.88) .. (652,474.5) -- cycle ;
	\draw  [color={rgb, 255:red, 155; green, 155; blue, 155 }  ,draw opacity=1 ][fill={rgb, 255:red, 155; green, 155; blue, 155 }  ,fill opacity=1 ] (652,516.5) .. controls (652,515.12) and (653.12,514) .. (654.5,514) .. controls (655.88,514) and (657,515.12) .. (657,516.5) .. controls (657,517.88) and (655.88,519) .. (654.5,519) .. controls (653.12,519) and (652,517.88) .. (652,516.5) -- cycle ;
	\draw  [color={rgb, 255:red, 155; green, 155; blue, 155 }  ,draw opacity=1 ][fill={rgb, 255:red, 155; green, 155; blue, 155 }  ,fill opacity=1 ] (652,493.5) .. controls (652,492.12) and (653.12,491) .. (654.5,491) .. controls (655.88,491) and (657,492.12) .. (657,493.5) .. controls (657,494.88) and (655.88,496) .. (654.5,496) .. controls (653.12,496) and (652,494.88) .. (652,493.5) -- cycle ;
	\draw  [pattern=_3fy9swdp5,pattern size=5pt,pattern thickness=0.55pt,pattern radius=0pt, pattern color={rgb, 255:red, 155; green, 155; blue, 155}][line width=1.5]  (420.5,106) -- (491.5,106) -- (491.5,161) -- (420.5,161) -- cycle ;
	\draw [color={rgb, 255:red, 0; green, 0; blue, 0 }  ,draw opacity=1 ][line width=1.5]    (2.5,131) -- (209,130.02) ;
	\draw [shift={(213,130)}, rotate = 539.73] [fill={rgb, 255:red, 0; green, 0; blue, 0 }  ,fill opacity=1 ][line width=0.08]  [draw opacity=0] (11.61,-5.58) -- (0,0) -- (11.61,5.58) -- cycle    ;
	\draw [color={rgb, 255:red, 0; green, 0; blue, 0 }  ,draw opacity=1 ][line width=1.5]    (817,131) -- (1218.5,128.03) ;
	\draw [shift={(1222.5,128)}, rotate = 539.5799999999999] [fill={rgb, 255:red, 0; green, 0; blue, 0 }  ,fill opacity=1 ][line width=0.08]  [draw opacity=0] (11.61,-5.58) -- (0,0) -- (11.61,5.58) -- cycle    ;
	\draw  [color={rgb, 255:red, 0; green, 0; blue, 0 }  ,draw opacity=1 ][line width=1.5]  (210,131) .. controls (210,122.85) and (216.6,116.25) .. (224.75,116.25) .. controls (232.9,116.25) and (239.5,122.85) .. (239.5,131) .. controls (239.5,139.15) and (232.9,145.75) .. (224.75,145.75) .. controls (216.6,145.75) and (210,139.15) .. (210,131) -- cycle ;
	\draw [color={rgb, 255:red, 0; green, 0; blue, 0 }  ,draw opacity=1 ][line width=1.5]    (240.5,132) -- (276,132) ;
	\draw [shift={(280,132)}, rotate = 180] [fill={rgb, 255:red, 0; green, 0; blue, 0 }  ,fill opacity=1 ][line width=0.08]  [draw opacity=0] (11.61,-5.58) -- (0,0) -- (11.61,5.58) -- cycle    ;
	\draw [line width=1.5]    (835.5,132) -- (835.5,242) -- (225.5,240) -- (224.78,149.75) ;
	\draw [shift={(224.75,145.75)}, rotate = 449.54] [fill={rgb, 255:red, 0; green, 0; blue, 0 }  ][line width=0.08]  [draw opacity=0] (11.61,-5.58) -- (0,0) -- (11.61,5.58) -- cycle    ;
	\draw [color={rgb, 255:red, 0; green, 0; blue, 0 }  ,draw opacity=1 ][line width=1.5]    (492.5,132) -- (557.5,132) ;
	\draw [shift={(561.5,132)}, rotate = 180] [fill={rgb, 255:red, 0; green, 0; blue, 0 }  ,fill opacity=1 ][line width=0.08]  [draw opacity=0] (11.61,-5.58) -- (0,0) -- (11.61,5.58) -- cycle    ;
	\draw [color={rgb, 255:red, 0; green, 0; blue, 0 }  ,draw opacity=1 ][line width=1.5]    (632,132) -- (683,132) ;
	\draw [shift={(687,132)}, rotate = 180] [fill={rgb, 255:red, 0; green, 0; blue, 0 }  ,fill opacity=1 ][line width=0.08]  [draw opacity=0] (11.61,-5.58) -- (0,0) -- (11.61,5.58) -- cycle    ;
	\draw  [line width=1.5]  (279.5,106) -- (351.5,106) -- (351.5,159) -- (279.5,159) -- cycle ;
	\draw [color={rgb, 255:red, 0; green, 0; blue, 0 }  ,draw opacity=1 ][line width=1.5]    (353,131) -- (416.5,131) ;
	\draw [shift={(420.5,131)}, rotate = 180] [fill={rgb, 255:red, 0; green, 0; blue, 0 }  ,fill opacity=1 ][line width=0.08]  [draw opacity=0] (11.61,-5.58) -- (0,0) -- (11.61,5.58) -- cycle    ;
	\draw  [color={rgb, 255:red, 0; green, 0; blue, 0 }  ,draw opacity=1 ][line width=1.5]  (687,132) .. controls (687,123.85) and (693.6,117.25) .. (701.75,117.25) .. controls (709.9,117.25) and (716.5,123.85) .. (716.5,132) .. controls (716.5,140.15) and (709.9,146.75) .. (701.75,146.75) .. controls (693.6,146.75) and (687,140.15) .. (687,132) -- cycle ;
	\draw [color={rgb, 255:red, 0; green, 0; blue, 0 }  ,draw opacity=1 ][line width=1.5]    (716.5,132) -- (740,132) ;
	\draw [shift={(744,132)}, rotate = 180] [fill={rgb, 255:red, 0; green, 0; blue, 0 }  ,fill opacity=1 ][line width=0.08]  [draw opacity=0] (11.61,-5.58) -- (0,0) -- (11.61,5.58) -- cycle    ;
	\draw [color={rgb, 255:red, 0; green, 0; blue, 0 }  ,draw opacity=1 ][line width=1.5]    (702.5,62) -- (701.8,113.25) ;
	\draw [shift={(701.75,117.25)}, rotate = 270.78] [fill={rgb, 255:red, 0; green, 0; blue, 0 }  ,fill opacity=1 ][line width=0.08]  [draw opacity=0] (11.61,-5.58) -- (0,0) -- (11.61,5.58) -- cycle    ;
	\draw  [line width=1.5]  (559.5,107) -- (631.5,107) -- (631.5,160) -- (559.5,160) -- cycle ;
	\draw  [line width=1.5]  (744.5,106) -- (816.5,106) -- (816.5,159) -- (744.5,159) -- cycle ;
	\draw  [color={rgb, 255:red, 128; green, 128; blue, 128 }  ,draw opacity=1 ][dash pattern={on 5.63pt off 4.5pt}][line width=1.5]  (848.5,99) -- (848.5,271) -- (198.5,271) -- (198.5,99) -- (402.81,99) -- (402.81,167.44) -- (499.96,167.44) -- (499.96,99) -- (848.5,99) -- cycle ;
	\draw  [color={rgb, 255:red, 74; green, 144; blue, 226 }  ,draw opacity=1 ][dash pattern={on 6.75pt off 4.5pt}][line width=2.25]  (188.5,103.4) .. controls (188.5,76.12) and (210.62,54) .. (237.9,54) -- (817.1,54) .. controls (844.38,54) and (866.5,76.12) .. (866.5,103.4) -- (866.5,251.6) .. controls (866.5,278.88) and (844.38,301) .. (817.1,301) -- (237.9,301) .. controls (210.62,301) and (188.5,278.88) .. (188.5,251.6) -- cycle ;
	\draw  [color={rgb, 255:red, 74; green, 144; blue, 226 }  ,draw opacity=1 ][dash pattern={on 6.75pt off 4.5pt}][line width=2.25]  (196.5,405.8) .. controls (196.5,379.95) and (217.45,359) .. (243.3,359) -- (827.7,359) .. controls (853.55,359) and (874.5,379.95) .. (874.5,405.8) -- (874.5,546.2) .. controls (874.5,572.05) and (853.55,593) .. (827.7,593) -- (243.3,593) .. controls (217.45,593) and (196.5,572.05) .. (196.5,546.2) -- cycle ;
	
	\draw (251,478.5) node  [scale=1.2,color={rgb, 255:red, 74; green, 74; blue, 74 }  ,opacity=1 ] [align=left] { \ {\fontfamily{ptm}\selectfont \textbf{Virtual}}\\{\fontfamily{ptm}\selectfont \textbf{Harmonic}}\\{\fontfamily{ptm}\selectfont \textbf{Generator}}};
	\draw (94,459) node  [scale=1.2,font=\large]  {$\boldsymbol{r(t) =a_{0}\sin( \omega t+\varphi _{0})}$};
	\draw (359,399) node  [scale=1.4,font=\large,color={rgb, 255:red, 74; green, 74; blue, 74 }  ,opacity=1 ]  {$a_{0}\sin( \omega t+\varphi _{0})$};
	\draw (370,441) node  [scale=1.4,font=\large,color={rgb, 255:red, 74; green, 74; blue, 74 }  ,opacity=1 ]  {$a_{0}\sin( 2( \omega t+\varphi _{0}))$};
	\draw (371,536) node  [scale=1.2,font=\large,color={rgb, 255:red, 74; green, 74; blue, 74 }  ,opacity=1 ]  {$a_{0}\sin( n( \omega t+\varphi _{0}))$};
	\draw (495,552) node  [scale=1.2,font=\large,color={rgb, 255:red, 74; green, 74; blue, 74 }  ,opacity=1 ]  {$\mathbf{H_{n}}$};
	\draw (494,456) node  [scale=1.2,font=\large,color={rgb, 255:red, 74; green, 74; blue, 74 }  ,opacity=1 ]  {$\mathbf{H_{2}}$};
	\draw (495,417) node  [scale=1.2,font=\large,color={rgb, 255:red, 74; green, 74; blue, 74 }  ,opacity=1 ]  {$\mathbf{H_{1}}$};
	\draw (635,399) node  [scale=1.25,font=\large,color={rgb, 255:red, 74; green, 74; blue, 74 }  ,opacity=1 ]  {$a_{1}( \omega )\sin( \omega t+\varphi _{0} +\varphi _{1}( \omega ))$};
	\draw (644,437.5) node  [scale=1.25,font=\large,color={rgb, 255:red, 74; green, 74; blue, 74 }  ,opacity=1 ]  {$a_{2}( \omega )\sin( 2( \omega t+\varphi _{0}) +\varphi _{2}( \omega ))$};
	\draw (646,536) node  [scale=1.25,font=\large,color={rgb, 255:red, 74; green, 74; blue, 74 }  ,opacity=1 ]  {$a_{n}( \omega )\sin( n( \omega t+\varphi _{0}) +\varphi _{n}( \omega ))$};
	\draw (849.5,474) node  [scale=1.2,font=\large,color={rgb, 255:red, 155; green, 155; blue, 155 }  ,opacity=1 ]  {$\boldsymbol{\sum }$};
	\draw (1073,445) node  [scale=1.2,font=\large]  {$\boldsymbol{y( t) =\sum\limits ^{\infty }_{n=1} a_{n}( \omega )\sin( n( \omega t+\varphi _{0}) +\varphi _{n}( \omega ))}$};
	\draw (456,74) node   [scale=1.2,align=left] {{\fontfamily{ptm}\selectfont \textbf{ \ \ Reset}}\\{\fontfamily{ptm}\selectfont \textbf{Controller}}};
	\draw (219,128) node  [scale=1.2,font=\large]  {$\boldsymbol{+}$};
	\draw (225,139) node  [scale=1.2,font=\large]  {$\boldsymbol{-}$};
	\draw (460,130.5) node  [scale=1.2,font=\huge]  {$C\mathfrak{_{\mathfrak{R}}}$};
	\draw (780.5,131.5) node  [scale=1.2,font=\huge]  {$G$};
	\draw (255,111) node  [scale=1.2,font=\large]  {$\boldsymbol{e(t)}$};
	\draw (657,115) node  [scale=1.3,font=\large]  {$\boldsymbol{u(t)}$};
	\draw (782,90) node   [scale=1.3,align=left] {{\fontfamily{ptm}\selectfont \textbf{Plant}}};
	\draw (90,111) node  [scale=1.2,font=\large,color={rgb, 255:red, 0; green, 0; blue, 0 }  ,opacity=1 ]  {$\boldsymbol{r( t) =a_{0}\sin( \omega t+\varphi _{0})}$};
	\draw (1070,99) node  [scale=1.2,font=\large]  {$\boldsymbol{y( t) =\sum\limits ^{\infty }_{n=1} a_{n}( \omega )\sin( n( \omega t+\varphi _{0}) +\varphi _{n}( \omega ))}$};
	\draw (522,28) node   [scale=1.2,align=left] {{\fontfamily{ptm}\selectfont \textbf{{\large Non-Linear}}}\\{\fontfamily{ptm}\selectfont \textbf{{\large  \ \ \ System}}}};
	\draw (319.5,132) node  [scale=1.2,font=\huge]  {$C_{\mathfrak{L}_1}$};
	\draw (315,77) node   [scale=1.2,align=left] {{\fontfamily{ptm}\selectfont \textbf{ \ \ Linear}}\\{\fontfamily{ptm}\selectfont \textbf{Controller}}};
	\draw (694,132) node  [scale=1.2,font=\large]  {$\boldsymbol{+}$};
	\draw (701,122) node  [scale=1.2,font=\large]  {$\boldsymbol{+}$};
	\draw (685,78) node  [scale=1.2,font=\large]  {$\boldsymbol{d(t)}$};
	\draw (379,113) node  [scale=1.2,font=\large]  {$\boldsymbol{e_{R}(t)}$};
	\draw (531,114) node  [scale=1.2,font=\large]  {$\boldsymbol{u_{R}( t)}$};
	\draw (599.5,132) node  [scale=1.2,font=\huge]  {$C_{\mathfrak{L}_2}$};
	\draw (596,76) node   [scale=1.2,align=left] {{\fontfamily{ptm}\selectfont \textbf{ \ Linear}}\\{\fontfamily{ptm}\selectfont \textbf{Controller}}};
	\draw (211.5,255.5) node  [scale=1.2,font=\LARGE]  {$\mathcal{L}$};
		\end{tikzpicture}}
	\caption{Closed-loop architecture with reset controller (top). HOSIDF representation of the closed-loop configuration (bottom).}
	\label{F-31}
\end{figure*}
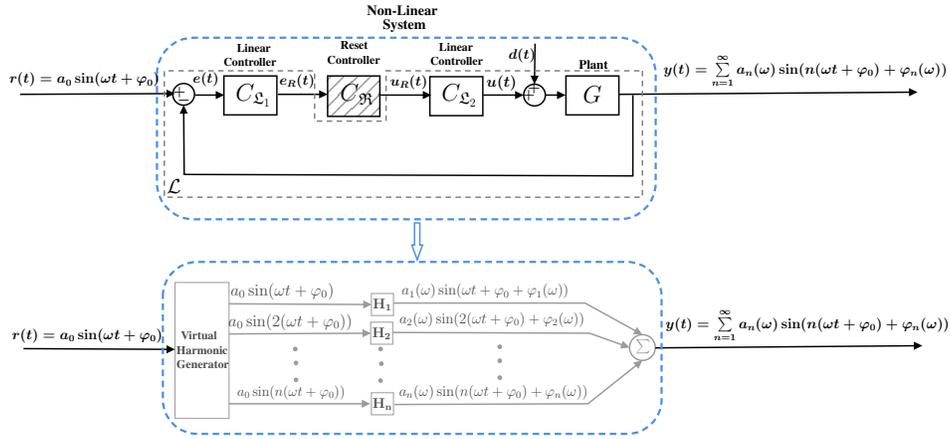
\begin{equation}\label{E-301}
\mathcal{L}:\left\{
\begin{aligned}
\dot{\zeta}(t) &=A\zeta(t)+Bw(t)+B_uu_R(t),\\
u(t) &=C_u\zeta(t)+D_{u}r(t),\\
e_R(t) &=C_{e_R}\zeta(t)+D_{e_R}r(t),\\
y(t) &=C\zeta(t),
\end{aligned}
\right.
\end{equation}
where $\zeta(t)\in\mathbb{R}^{n_p}$ describes the states of the plant and of the linear controllers ($n_p$ is the number of states of the linear part), $A$, $B$, $C$, $B_u$, $C_{e_R}$, $C_u$, $D_u$ and $D_{e_R}$ are the corresponding dynamic matrices, $y(t)\in\mathbb{R}$ is the output of the plant and $w(t)=[r(t)\ d(t)]^T\in\mathbb{R}^2$ is an external input. The state-space representation of the reset controller is given by the equations
\begin{equation}\label{E-302}
\left\{
\begin{aligned}
\dot{x}_r(t) &=A_rx_r(t)+B_re_R(t), & e_R(t)\neq0,  \\
x_r(t^+) &=A_\rho x_r(t), & e_R(t)=0, \\
u_R(t) &=C_rx_r(t)+D_re_R(t).
\end{aligned}
\right.
\end{equation}
The closed-loop state-space representation of the overall system can, therefore, be written as
\begin{equation}\label{E-303}
\left\{
\begin{aligned}
\dot{x}(t)&=\bar{A}x(t)+\bar{B}w(t), & e_R(t)\neq 0,\\
x(t^+)&=\bar{A}_\rho x(t), & e_R(t)=0,  \\
u(t)&=\bar{C}_ux(t)+\bar{D}_{u}r(t),\\
e_R(t)&=\bar{C}_{e_R}x(t)+D_{e_R}r(t),\\
y(t)&=\bar{C}x(t),
\end{aligned}
\right.
\end{equation} 
where $x(t)=[x_r(t)^T\quad \zeta(t)^T]^T\in\mathbb{R}^{n_p+n_r}$, and \newline\newline 
$\bar{A}=\begin{bmatrix} A_r & B_rC_{eR} \\ B_uC_r & A+B_uD_rC_{eR}\end{bmatrix}$, $\bar{C}=\begin{bmatrix} 0_{1\times n_r} & C \end{bmatrix}$, $\bar{B}=\begin{bmatrix}0_{n_r\times 2}\\B\end{bmatrix}+\begin{bmatrix} B_rD_{eR} & 0_{n_r\times 1} \\ B_uD_rD_{eR} & 0_{n_p\times 1} \end{bmatrix}$, $\bar{A}_\rho=\begin{bmatrix}A_\rho & 0_{n_r\times n_p} \\ 0_{n_p\times n_r} & I_{n_p\times n_p} \end{bmatrix}$, $\bar{C}_u=\begin{bmatrix} C_rD_{\mathfrak{L}_2} & C_{e_R}D_rD_{\mathfrak{L}_2}+C_u \end{bmatrix}$, $\bar{C}_{e_R}=\begin{bmatrix} 0_{1\times n_r} & C_{e_R} \end{bmatrix}$, and $\bar{D}_{u}=D_{u}D_{e_R}D_r$ with $D_{\mathfrak{L}_2}$ the feedthrough matrix of $C_{\mathfrak{L}_2}(s)$.  
\subsection{Stability and Convergence}\label{sec:31}
In this section sufficient conditions for the existence of a steady-state solution for the closed-loop reset control system~(\ref{E-303}) driven by periodic inputs is provided. This is based on the $H_\beta$ condition \cite{beker2004fundamental,guo2015analysis,hollot2001establishing,AliCDC}, which we recall in what follows. Let 
\begin{equation}\label{Ex-31}
\begin{array}{*{35}{c}}
C_0=\begin{bmatrix}\rho & \beta C_{eR}\end{bmatrix},\quad B_0=\begin{bmatrix} I_{n_r\times n_r} \\ 0_{n_p\times n_r} \end{bmatrix},\\
\rho=\rho^T>0,\quad \rho\in\mathbb{R}^{n_r\times n_r},\quad \beta\in\mathbb{R}^{n_r\times 1}.
\end{array}
\end{equation}
The $H_\beta$ condition states that the reset control system~(\ref{E-303}) with $w=0$ is quadratically stable if and only if there exist $\rho=\rho^T>0$ and $\beta$ such that the transfer function 
\begin{equation}\label{Ex-311}
H(s)=C_0(sI-\bar{A})^{-1}B_0
\end{equation} 
is Strictly Positive Real (SPR), $(\bar{A},B_0)$ and $(\bar{A},C_0)$ are controllable and observable, respectively, and
\begin{equation}\label{Ex-312}
A_\rho^T\rho A_\rho-\rho<0.
\end{equation}
\begin{definition}\label{DD0}
A time $\bar{T}>0$ is called a reset instant for the reset control system~(\ref{E-303}) if $e_R(\bar{T})=0$. For any given initial condition and input $w$ the resulting set of all reset instants defines the reset sequence $\{t_k\}$, with $t_k\leq t_{k+1}$, for all $k\in\mathbb{N}$. The reset instants $t_k$ of the reset control system (\ref{E-303}) have the well-posedness property if for any initial condition $x_0$ and any input $w$, all reset instants are distinct, and there exists a $\lambda>0$ such that for all $k\in\mathbb{N}$, $\lambda\leq t_{k+1}-t_k$ \cite{banos2016impulsive,banos2011reset}.
\end{definition}
\begin{remark}\label{R0}
{\rm If the $H_\beta$ condition holds, then the reset control system~(\ref{E-303}) has the uniform bounded-input bounded-state (UBIBS) property and the reset instants have the well-posedness property \cite{dastjerdistabil}. Therefore, the reset control system~(\ref{E-303}) has a unique well-defined solution for $t\geq t_0$ for any initial condition $x_0$ and input $w(t)$ which is a Bohl function~\cite{banos2016impulsive,banos2011reset}.}
\end{remark}
To develop a frequency analysis for the reset control system~(\ref{E-303}), the following assumption is required.
\begin{assumption}\label{AS1}
The initial condition of the reset controller is zero. In addition, there are infinitely many reset instants and $\displaystyle\lim_{k\to\infty} t_k=\infty$. 
\end{assumption}
The second term in Assumption~\ref{AS1} is introduced to rule out a trivial situation. In fact, if $\displaystyle\lim_{k\to\infty} t_k=T_K$, then for all $t\geq T_K$ the reset control system~(\ref{E-303}) is a stable linear system. Two important technical lemmas, which are used in the proof of the following theorem, are now formulated and proved. 
\begin{lemma}\label{L1}
Let $\{t_k\}$ and $\{\tilde{t}_k\}$ be the reset sequences of the reset control system~(\ref{E-303}) for two different initial conditions $\zeta_{0}$ and $\tilde{\zeta}_{0}$ of the linear part and for the same input. Suppose Assumption~\ref{AS1} and the $H_\beta$ condition hold and $w$ is a Bohl function. Then $\displaystyle\lim_{k\to\infty}(t_{k}-\tilde{t}_k)=0$.
\end{lemma}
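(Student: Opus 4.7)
The aim is to argue that, even though two trajectories reset at different times, the combination of UBIBS (from Remark~\ref{R0}) and the quadratic dissipation guaranteed by the $H_\beta$ condition forces their reset instants to synchronize in the limit. I would proceed in three stages: (i) prove asymptotic convergence of the two full-state trajectories $x(t)$ and $\tilde{x}(t)$; (ii) deduce convergence of the outputs $e_R(t)$ and $\tilde{e}_R(t)$; (iii) translate output convergence into convergence of their zero crossings using transversality.

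\textbf{Stage (i): incremental stability.} First I would fix the storage function $V(x)=x^T P x$ associated with the $H_\beta$ condition, chosen so that $\bar{A}^T P+P\bar{A}\prec 0$ and $\bar{A}_\rho^T P\bar{A}_\rho-P\preceq 0$ (the latter being the lift of (\ref{Ex-312}) to the full state, using $\bar{A}_\rho=\mathrm{diag}(A_\rho,I)$). Since $w$ is the same in both systems and the dynamics are affine in $w$, the error $\delta(t):=x(t)-\tilde{x}(t)$ obeys $\dot\delta=\bar{A}\delta$ on any interval during which \emph{neither} system resets, so $V(\delta)$ decays exponentially there. The delicate point is the intervals where only one of the two systems resets: there $\delta$ experiences a one-sided jump that need not be contractive in $V$. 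I would handle this by noting that, by the well-posedness property, reset instants of each trajectory are $\lambda$-separated, so in any window the total number of unmatched jumps is finite and each unmatched jump costs at most a bounded factor on $V(\delta)$ because $x(t)$ and $\tilde x(t)$ are bounded (UBIBS). Combining the strict exponential decay on flow intervals with this finite penalty—i.e.\ ``pairing'' each reset of one system with the nearest reset of the other and bounding the mismatch contribution by a constant times $\|\delta\|$ at the pairing time—yields $\lim_{t\to\infty}\|\delta(t)\|=0$.

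\textbf{Stages (ii) and (iii).} From $\|\delta(t)\|\to 0$ and $e_R=\bar{C}_{e_R}x+D_{e_R}r$ (the $r$-term being identical in both systems), it follows that $e_R(t)-\tilde{e}_R(t)\to 0$ as $t\to\infty$. Now fix a reset instant $t_k$ of the first system: $e_R(t_k)=0$ and, by well-posedness together with Assumption~\ref{AS1}, the crossing is transversal and $|\dot e_R(t_k^-)|$ is bounded below by a positive constant uniformly in $k$ (otherwise $t_{k+1}-t_k$ could not be bounded away from $0$). A first-order expansion around $t_k$ then gives a nearby zero $\tilde{t}_k$ of $\tilde{e}_R$ with
\begin{equation*}
|\tilde{t}_k-t_k|\;\leq\;\frac{|\tilde{e}_R(t_k)-e_R(t_k)|}{|\dot e_R(t_k^-)|}+o(1),
\end{equation*}
and both quantities on the right tend to zero as $k\to\infty$. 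This proves the claim.

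\textbf{Main obstacle.} The hard part is stage (i): quadratic stability is a statement about a single reset system, not about two systems whose reset times are misaligned. The Lyapunov function $V(\delta)$ is not automatically non-increasing at the unmatched jumps, and avoiding a spurious accumulation of positive jumps is exactly what the $\lambda$-separation and the boundedness of trajectories must be used for. I expect the cleanest execution to go via pairing reset events between the two trajectories and absorbing the mismatch into a term that vanishes with $\|\delta\|$; the citation to \cite{dastjerdistabil} in Remark~\ref{R0} suggests a companion incremental-stability result may be invoked to shortcut this step.
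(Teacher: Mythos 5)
Your plan runs in the opposite logical direction from the paper, and the step you yourself flag as the ``main obstacle'' is a genuine gap, not a technicality. Stage~(i) is essentially Lemma~\ref{L2} of the paper (uniform exponential convergence of the two trajectories), and in the paper that lemma is proved \emph{using} Lemma~\ref{L1}: the conclusion $|t_k-\tilde t_k|\to 0$ is exactly what makes the mismatch term $O(\delta,\tilde x(t_k),x(t_k))$ in~(\ref{K-04}) vanish. Your proposed repair of stage~(i) does not close. At an unmatched reset the perturbation to $\delta=x-\tilde x$ is $(\bar A_\rho-I)\tilde x(t)$, which is of the order of the \emph{state}, not of $\|\delta\|$; to reduce it to ``a constant times $\|\delta\|$'' you must first bound $|t_k-\tilde t_k|\le C\|\delta(t_k)\|$, which already requires the transversality estimate of stage~(iii) — so stages (i) and (iii) are mutually dependent. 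Even granting that bound, the resulting per-cycle estimate on $V(\delta)$ is a factor $(1+C')^2e^{-2\alpha\lambda}$, and nothing guarantees this is less than one: a penalty of order $\|\delta\|$ is the \emph{same} order as the quantity being contracted, whereas what is needed is a penalty that is $o(\|\delta\|)$ or at least $o(1)$. Separately, your claim that well-posedness forces $|\dot e_R(t_k^-)|$ to be uniformly bounded below is not a valid inference: $\lambda$-separation of reset instants is compatible with tangential or arbitrarily flat zero crossings.

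The paper avoids comparing two misaligned impulsive trajectories altogether. Since by Assumption~\ref{AS1} the reset states start at zero, the only difference between the two runs is $\zeta_0$ versus $\tilde\zeta_0$ in the linear part, and Lemma~\ref{AP1} lets one move that difference out of the initial condition and into an additive exogenous signal $w_I(t)=\begin{bmatrix}C_{eR}\\0\end{bmatrix}e^{At}\zeta_0$ entering the $e_R$-channel of a system started from the \emph{same} (zero) initial state, cf.~(\ref{E-305-306-01})--(\ref{E-305-306-00}). Because the loop contains the internal model of $w_I$ and $w$ is Bohl, the results of \cite{beker2004fundamental,hollot2001establishing} give that $e_R$ is asymptotically independent of $w_I$; the two error signals therefore coincide asymptotically and so do their zero sets, which is the claim. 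If you want to keep your architecture, you would need either an independent proof of incremental stability for reset systems with misaligned reset times (which is the open difficulty you identified) or to adopt the paper's exogenous-signal reduction as the first step instead.
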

\begin{proof}
To begin with note that, for any initial condition $x_0=\begin{bmatrix}0^T & \zeta_0^T\end{bmatrix}^T$, the signal $e_R(t)$ in~(\ref{E-303}) can be obtained through the equation (see Lemma~\ref{AP1} in the Appendix) 
\begin{equation}\label{E-305-306-01}
\left\{
\begin{aligned}
\dot{x}_I(t)&=\bar{A}x_I(t)+\bar{B}w(t)+\begin{bmatrix}B_r\\0_{n_p\times 2}\end{bmatrix}w_I(t), & e_R(t)&\neq 0,\\
x_I(t^+)&=\bar{A}_\rho x_I(t), & e_R(t)&=0,  \\
\\
e_R(t)&=\bar{C}_{e_R}x_I(t)+D_{e_R}r(t)+[1\ 0]w_I(t),\\
\end{aligned}
\right.
\end{equation} 
with $x_I(0)=0$ and
\begin{equation}\label{E-305-306-00}
\left\{
\begin{aligned}
\dot{Z}(t)&=AZ(t),\\
\\
w_I(t)&=\begin{bmatrix}C_{eR}\\0
\end{bmatrix}Z(t),
\end{aligned}
\right.
\quad Z(0)=\zeta_0.
\end{equation}  	
Since the linear part of the system contains the internal model~(\ref{E-305-306-00}) of $w_I$, and $w(t)$ is a Bohl function, based on~\cite{beker2004fundamental,hollot2001establishing} $e_R(t)$ is asymptotically independent of $w_I(t)$. This implies that $\displaystyle\lim_{k\to\infty}(t_{k}-\tilde{t}_{k})=0$.
\end{proof}
\begin{lemma}\label{L2}
Consider the reset control system~(\ref{E-303}). Suppose Assumption~\ref{AS1} holds, $w$ is a Bohl function, and the $H_\beta$ condition is satisfied. Then the reset control system~(\ref{E-303}) is uniformly exponentially convergent. 
\end{lemma}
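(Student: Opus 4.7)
The plan is to combine the quadratic stability supplied by the $H_\beta$ condition with the asymptotic synchronization of reset instants furnished by Lemma~\ref{L1}. From the $H_\beta$ condition and the Kalman--Yakubovich--Popov lemma one obtains a quadratic storage $V(x)=x^T P x$, $P=P^T>0$, such that $\bar A^T P+P\bar A\le-\alpha P$ for some $\alpha>0$ along the continuous flow, while the discrete inequality $A_\rho^T\rho A_\rho-\rho<0$ translates, after an appropriate enlargement of $\rho$ to $P$, into $\bar A_\rho^T P\bar A_\rho-P\le0$ on the reset map. Thus $V$ decays exponentially between resets and is non-increasing across them, which gives global exponential stability of the autonomous reset dynamics with $w=0$.

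To upgrade this to uniform exponential convergence I would pick two solutions $x(t)$ and $\tilde x(t)$ of (\ref{E-303}) generated by the \emph{same} Bohl input $w$ and initial conditions $x_0=[0^T\ \zeta_0^T]^T$, $\tilde x_0=[0^T\ \tilde\zeta_0^T]^T$ compatible with Assumption~\ref{AS1}, and analyse $e(t)=x(t)-\tilde x(t)$. On any sub-interval on which neither trajectory resets one has $\dot e=\bar A e$, and at any instant at which both trajectories reset simultaneously $e(t^+)=\bar A_\rho e(t)$. On such ``synchronised'' intervals $V(e)$ decays exponentially by Step~1, while the input $w$ drops out because it enters additively and identically in both trajectories.

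The main obstacle, as expected, is the mismatched sub-intervals $[\min(t_k,\tilde t_k),\max(t_k,\tilde t_k)]$ on which only one of the two trajectories undergoes a reset: on these windows $e$ does not obey the homogeneous reset dynamics and $V(e)$ can temporarily grow. Here Lemma~\ref{L1} is the decisive tool: the mismatch $|t_k-\tilde t_k|$ tends to zero as $k\to\infty$. Using the UBIBS property guaranteed by the $H_\beta$ condition (Remark~\ref{R0}) the trajectories $x,\tilde x$ are uniformly bounded by a constant depending only on $\|w\|_\infty$, so the possible increase of $V(e)$ across a single mismatched window is controlled by $c\,|t_k-\tilde t_k|$ for a constant $c$ independent of $\zeta_0,\tilde\zeta_0$. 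Summing the exponential contraction of $V(e)$ on the synchronised portion of each period against a summable perturbation arising from the shrinking mismatched windows then yields $\|e(t)\|\le M e^{-\gamma t}\|\zeta_0-\tilde\zeta_0\|$ with $M,\gamma>0$ uniform in the initial conditions and in $w$ within any bounded Bohl class.

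Finally, by the standard Pavlov-type argument, existence of a single bounded solution defined on $\mathbb{R}$ (provided by UBIBS together with backward propagation along the exponentially contracting flow) combined with the uniform exponential incremental bound above yields a unique such solution to which every other solution converges exponentially, uniformly in initial conditions: this is precisely uniform exponential convergence. I expect the delicate technical point to be the bookkeeping of the mismatched windows so that their cumulative contribution does not destroy the exponential rate; the key is that Lemma~\ref{L1} provides an \emph{effective} vanishing of $t_k-\tilde t_k$ that, combined with the uniform positive dwell time $\lambda$ between consecutive resets (well-posedness), ensures the perturbation series is dominated by the exponential contraction.
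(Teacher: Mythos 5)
Your proposal follows essentially the same route as the paper's proof: take two solutions driven by the same Bohl input, use Lemma~\ref{L1} to synchronize the reset instants asymptotically, use the $H_\beta$-derived Lyapunov pair $P\bar A+\bar A^TP\le-2\alpha P$ and $\bar A_\rho^TP\bar A_\rho-P\le0$ on the difference dynamics, and treat the mismatched reset windows as a perturbation that vanishes (by Lemma~\ref{L1} together with UBIBS boundedness and the dwell time), so that $V(\Delta x)$ eventually decays exponentially. The only cosmetic difference is that you phrase the mismatch contribution as a ``summable perturbation'' whereas the paper argues that the difference asymptotically obeys the homogeneous reset dynamics and then invokes its appendix lemma to convert eventual decay into a global exponential bound; both rest on the same vanishing-mismatch mechanism.
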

\begin{proof}
To begin with note that the property of uniformly exponentially convergence is as given in \cite{pavlov2007frequency}.  Since the $H_\beta$ condition is satisfied, according to Remark~\ref{R0}, the reset control system~(\ref{E-303}) has a unique well-defined solution for any initial condition $x_0$ and any $w$ which is a Bohl function. Let $x$ and $\tilde{x}$ be two solutions of the reset control system~(\ref{E-303}) corresponding to the some input $w$ and to two different initial conditions. Since the $H_\beta$ condition is satisfied $x(t)$ and $\tilde{x}(t)$ are bounded for all $t$. Let $\Delta x:=x(t)-\tilde{x}(t)$, and let $\{t_k\}$ and $\{\tilde{t}_k\}$ be the reset sequences of $x(t)$ and $\tilde{x}(t)$. Define $\mathcal{M}=\{t\in\mathbb{R}^+|\ t\neq t_k\land t\neq\tilde{t}_k \}$. By Lemma~\ref{L1}
\begin{equation}\label{K-01}
\forall\ \delta>0,\ \exists\ \Pi>0\text{ such that } k>\Pi\Rightarrow |t_k-\tilde{t}_k|<\delta.
\end{equation} 
Moreover, by the well-posedness property, there exists a $\lambda>0$ such that $\lambda\leq t_{k+1}-t_k$ and $\lambda\leq \tilde{t}_{k+1}-\tilde{t}_k$. Thus, selecting $\delta$ sufficiently small yields
\begin{equation}\label{K-02}
x(t_k+\delta)=e^{\bar{A}\delta}\bar{A}_\rho x(t_k)+\displaystyle\int_{t_k}^{t_k+\delta}e^{\bar{A}(t_k+\delta-\tau)}\bar{B}w(\tau)d\tau,
\end{equation} 
for all $t_k$ sufficiently large. By (\ref{K-01}), $\tilde{t}_k=t_{k}+\delta^\prime,$ with $0\leq \delta^\prime\leq\delta$. Thus
\setlength{\arraycolsep}{0.0em}
\begin{eqnarray}\label{K-03}
\tilde{x}(t_k+\delta)&{=}&e^{\bar{A}(\delta-\delta^\prime)}\bar{A}_\rho \Bigg(e^{\bar{A}\delta^\prime}\tilde{x}(t_k)\nonumber\\
&&{+}\:\displaystyle\int_{t_k}^{t_k+\delta^\prime}e^{\bar{A}(t_k+\delta^\prime-\tau)}\bar{B}w(\tau)d\tau\Bigg)\nonumber\\
&&{+}\:\displaystyle\int_{t_k+\delta^\prime}^{t_k+\delta}e^{\bar{A}(t_k+\delta-\tau)}\bar{B}w(\tau)d\tau.
\end{eqnarray}
\setlength{\arraycolsep}{5pt}Now, by~(\ref{K-02}) and~(\ref{K-03})
\setlength{\arraycolsep}{0.0em}
\begin{eqnarray}\label{K-04}
\Delta x(t_k+\delta)&{=}&\bar{A}_\rho\Delta x(t_k)+(e^{\bar{A}\delta}\bar{A}_\rho-e^{\bar{A}(\delta-\delta^\prime)}\bar{A}_\rho e^{\bar{A}\delta^\prime})\tilde{x}(t_k)\nonumber\\
&&{-}\:e^{\bar{A}(\delta-\delta^\prime)}\bar{A}_\rho\displaystyle\int_{t_k}^{t_k+\delta^\prime}e^{\bar{A}(t_k+\delta^\prime-\tau)}\bar{B}w(\tau)d\tau\nonumber\\
&&{+}\:\displaystyle\int_{t_k}^{t_k+\delta^\prime}e^{\bar{A}(t_k+\delta-\tau)}\bar{B}w(\tau)d\tau\nonumber\\
&&{+}\:(e^{\bar{A}\delta}-I)\bar{A}_\rho\Delta x(t_k)\nonumber\\
&&{=}\;\bar{A}_\rho\Delta x(t_k)+O(\delta,\tilde{x}(t_k),x(t_k)),
\end{eqnarray}
\setlength{\arraycolsep}{5pt}and, using~(\ref{K-01}),
\begin{equation}\label{K-05}
\lim_{k\to\infty}O(\delta,\tilde{x}(t_k),x(t_k))=0.
\end{equation} 
The same discussion applies for $\tilde{t}_k$. Hence, for $t$ sufficiently large we have
\begin{equation}\label{E-307}
\left\{
\begin{aligned}
\Delta \dot{x}(t)&=\bar{A}\Delta x(t), & t\in\mathcal{M},\\
\Delta x(t^+)&=\bar{A}_\rho \Delta x(t), & t\notin\mathcal{M}.\\
\end{aligned}
\right.
\end{equation}  	
Due to the satisfaction of the $H_\beta$ condition \cite{beker2004fundamental,guo2015analysis,hollot2001establishing}, there exist a matrix $P\in\mathbb{R}^{(n_p+n_r)\times(n_p+n_r)},\ P=P^T>0$, and a scalar $\alpha>0$ such that    	
\begin{align}
 P\bar{A}+\bar{A}^TP\leq-2\alpha P, \label{E-308}\\
  \bar{A}_\rho^TP \bar{A}_\rho-P\leq0. \label{E-3081}
\end{align}
Using the candidate Lyapunov function $V(\Delta x)=\dfrac{1}{2}(\Delta x)^TP(\Delta x)$ yields
\begin{equation}\label{E-3089} 
\begin{cases} 
\dot{V}\leq-\alpha V, & t\in\mathcal{M},\\
V(\Delta x(t^+))=V(\Delta x(t))+\Xi(t,\delta), & t\notin\mathcal{M}.\\
\end{cases}
\end{equation}
Thus, using~(\ref{E-307}) and~(\ref{E-3081}) for $t$ sufficiently large yields
\begin{equation}\label{E-3090} 
\Xi(t,\delta)\leq0.
\end{equation}
Hence, since $\Delta x$ is bounded, there exist $\alpha_m>0$ and $\mathcal{K}>0$ such that
\begin{equation}\label{E-309} 
||x_2(t)-x_1(t)||^2_P\leq \mathcal{K}e^{-\alpha_mt},
\end{equation}
for all $t\geq0$ (see Lemma~\ref{AP2} in the Appendix). This implies that the reset control system~(\ref{E-303}) is uniformly exponentially convergent.
\end{proof}
\begin{theorem}\label{T1}
Consider the reset control system~(\ref{E-303}). Suppose Assumption~\ref{AS1} holds, $w(t)=w_0\sin(\omega t)$\footnote{For ease of the notation we consider $w(t)=w_0\sin(\omega t)$. However, Theorem~\ref{T1} is also applicable in the case in which $w(t)=[r_0\sin(\omega_t+\phi_1)\ d_0\sin(\omega_t+\phi_2)]^T$.}, and the $H_\beta$ condition is satisfied. Then the reset control system~(\ref{E-303}) has a periodic steady-state solution which can be expressed as $\bar{x}(t)=\mathcal{S}(\sin(\omega t),\cos(\omega t),\omega)$ for some function $\mathcal{S}:\mathbb{R}^3\rightarrow\mathbb{R}^{n_r+n_p}$.
\end{theorem}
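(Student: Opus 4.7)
The plan is to chain together Lemma~\ref{L2} with the standard convergent-systems framework of Pavlov et al.\ cited in its proof. By Lemma~\ref{L2}, the system~(\ref{E-303}) driven by the Bohl signal $w(t)=w_0\sin(\omega t)$ is uniformly exponentially convergent. A uniformly exponentially convergent system admits a unique steady-state solution $\bar{x}(t)$, defined and bounded for all $t\in\mathbb{R}$, to which every trajectory of~(\ref{E-303}) converges exponentially as $t\to\infty$. This handles the existence part of the claim.

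Next, I would establish that $\bar{x}$ inherits the period $T=2\pi/\omega$ of $w$. The idea is to set $\tilde{x}(t):=\bar{x}(t+T)$ and verify that $\tilde{x}$ is itself a solution of~(\ref{E-303}) corresponding to the same input $w$. Time-invariance of $\bar{A},\bar{B},\bar{A}_\rho$ together with the $T$-periodicity of $w$ handles the flow equation, while the reset condition transports correctly: if $e_R$ computed from $\bar{x}$ vanishes at some $t_k$, then the same quantity computed from $\tilde{x}$ vanishes at $t_k-T$, and the jump prescribed by $\bar{A}_\rho$ is identical. Thus $\tilde{x}$ is a bounded solution on $\mathbb{R}$ of the same system with the same input. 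By uniqueness of the bounded steady-state solution (a consequence of uniform exponential convergence), $\tilde{x}\equiv\bar{x}$, i.e.\ $\bar{x}(t+T)=\bar{x}(t)$ for all $t$.

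Finally, $T$-periodicity converts to the required representation. Because $\bar{x}(t)$ depends on $t$ only through $\omega t\bmod 2\pi$, and the map $\phi\mapsto(\sin\phi,\cos\phi)$ is a bijection between $[0,2\pi)$ and the unit circle in $\mathbb{R}^2$, the prescription
\[
\mathcal{S}\bigl(\sin\phi,\cos\phi,\omega\bigr):=\bar{x}(\phi/\omega),\qquad \phi\in[0,2\pi),
\]
is well-defined, and yields $\bar{x}(t)=\mathcal{S}(\sin(\omega t),\cos(\omega t),\omega)$ as claimed.

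The main obstacle is the middle step: one has to be genuinely careful that the state-dependent reset rule in~(\ref{E-303}) is compatible with time translation when $w$ is periodic, so that $\tilde{x}(t)=\bar{x}(t+T)$ qualifies as a bona fide hybrid solution (reset instants of $\tilde{x}$ are simply those of $\bar{x}$ shifted by $-T$, and the jump map $\bar{A}_\rho$ is unchanged). Only once this is established can the uniqueness statement from the Pavlov framework be invoked to force $\tilde{x}\equiv\bar{x}$; the rest of the argument is essentially bookkeeping.
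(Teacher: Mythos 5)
Your proposal is correct and follows essentially the same route as the paper: invoke Lemma~\ref{L2} for uniform exponential convergence, then appeal to the convergent-systems framework of Pavlov et al.\ for the unique bounded steady-state solution, with uniqueness enforced by the exponential contraction estimate. The only difference is that you explicitly unpack what the paper delegates to its citation of that framework — namely the shift-and-uniqueness argument showing the steady state inherits the period $2\pi/\omega$ and the parametrization of a periodic signal by $(\sin(\omega t),\cos(\omega t))$ to build $\mathcal{S}$ — and your care with how the state-dependent reset rule transports under time translation is a worthwhile addition rather than a deviation.
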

\begin{proof}
Since the $H_\beta$ condition holds and $w(t)=w_0\sin(\omega t)$ is a Bohl function, by Remark~\ref{R0} the reset control system~(\ref{E-303}) has a unique solution for any initial condition $x_0$. In addition, the reset control system (\ref{E-303}) has the UBIBS property and, according to Lemma~\ref{L2}, it is uniformly exponentially convergent. Hence, the proof of the existence of the function $\mathcal{S}$ relies on the results in \cite{pavlov2007frequency}. We only need to show that $\mathcal{S}$ is unique. To this end, similarly to \cite{pavlov2004convergent}, assume that the reset control system~(\ref{E-303}) has two steady-state solutions $\bar{x}_2(t)=\mathcal{S}_2(\sin(\omega t),\cos(\omega t),\omega)(t)$ and $\bar{x}_1(t)=\mathcal{S}_1(\sin(\omega t),\cos(\omega t),\omega)(t)$, for $w(t)=w_0\sin(\omega t)$. Since the $H_\beta$ condition holds, by Lemma~\ref{L2} there exist $\alpha_m>0$ and $\mathcal{K}>0$ such that
\begin{equation}\label{E-309999} 
||\bar{x}_2(t)-\bar{x}_1(t)||^2_P\leq \mathcal{K}e^{-\alpha_mt},
\end{equation}
hence, the claim.
\end{proof}	
\begin{corollary}\label{co2}
Consider the reset control system~(\ref{E-303}) with $r(t)=r_0\sin(\omega t)$ and $d=0$, for all $t\geq0$. Then the even harmonics and the subharmonics of the steady-state response have zero amplitude, and the sequence of reset instants is periodic with period $\dfrac{\pi}{\omega}$.	
\end{corollary}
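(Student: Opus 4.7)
The plan is to exploit the half-wave (odd-harmonic) symmetry of the input $r(t)=r_0\sin(\omega t)$ together with the uniqueness of the steady-state solution established in Theorem~\ref{T1}. Since $d\equiv 0$, the driving signal satisfies $w(t+\pi/\omega)=-w(t)$, and both the continuous dynamics $\dot x=\bar A x+\bar B w$ and the jump map $x\mapsto\bar A_\rho x$ are linear in $x$. Moreover, the reset condition $e_R=\bar C_{e_R}x+D_{e_R}r=0$ is homogeneous in $(x,r)$. These three facts together will let me map solutions to solutions by combining sign flip and a half-period shift.

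Concretely, by Theorem~\ref{T1} the system admits a unique $2\pi/\omega$-periodic steady state $\bar x(t)$. I would define $y(t):=-\bar x(t+\pi/\omega)$ and verify that $y$ is also a solution of (\ref{E-303}) driven by the same $w(t)=w_0\sin(\omega t)$. For the flow portion this is a one-line calculation using $w(t+\pi/\omega)=-w(t)$ and linearity of $\bar A,\bar B$. For the reset portion I need to check two things: (i)~the instants at which $y$ resets coincide with the instants at which its error signal vanishes, and (ii)~the jump map on $y$ is the correct $\bar A_\rho$. Both follow because the reset condition $\bar C_{e_R}x(t+\pi/\omega)+D_{e_R}r(t+\pi/\omega)=0$ is equivalent, after sign inversion, to $\bar C_{e_R}y(t)+D_{e_R}r(t)=0$, and $-\bar A_\rho x=\bar A_\rho(-x)$. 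Hence $y$ is a periodic solution of (\ref{E-303}) for the same input, so by the uniqueness portion of Theorem~\ref{T1} one concludes $\bar x(t)=-\bar x(t+\pi/\omega)$ for all $t$; that is, $\bar x$ is half-wave symmetric.

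Once half-wave symmetry of $\bar x$ is in hand, the two remaining claims are short. Writing the Fourier expansion of (each component of) $\bar x$ as $\bar x(t)=\sum_{n\in\mathbb{Z}}c_n e^{jn\omega t}$, the identity $\bar x(t+\pi/\omega)=-\bar x(t)$ becomes $(-1)^n c_n=-c_n$, which forces $c_n=0$ for every even $n$. Subharmonics are ruled out directly by Theorem~\ref{T1}, since that theorem yields a steady state of the form $\mathcal{S}(\sin(\omega t),\cos(\omega t),\omega)$, hence purely a superposition of integer multiples of $\omega$. Finally, because $e_R(t)=\bar C_{e_R}\bar x(t)+D_{e_R}r(t)$ is a linear combination of two half-wave symmetric signals, $e_R(t+\pi/\omega)=-e_R(t)$, so the zero set of $e_R$ is invariant under the shift $t\mapsto t+\pi/\omega$, yielding periodicity of the reset sequence with period $\pi/\omega$.

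The main obstacle I anticipate is the reset-instant bookkeeping in the symmetry argument: one must argue that after the transient the shift-and-flip really maps reset instants to reset instants without producing spurious or missing jumps. This is where the uniqueness of the steady-state solution from Theorem~\ref{T1}, together with the well-posedness of the reset times under the $H_\beta$ condition (Remark~\ref{R0}), does the heavy lifting—once $y$ is recognized as a steady-state solution, its reset pattern must coincide with that of $\bar x$, ruling out any mismatch.
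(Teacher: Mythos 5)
Your proposal is correct, and it reaches the conclusion by a genuinely different route than the paper. The paper works with the explicit piecewise representation of the steady-state solution: it writes $\bar x$ on each inter-reset interval via the particular solution $\psi(t)$ (which is anti-periodic with period $\pi/\omega$), expresses the steady-state reset values $\xi_s$ as a series obtained after the transient has been forgotten, and then argues directly on the reset equation (\ref{E-320}) that the set of reset instants is invariant under the shift $t\mapsto t+\pi/\omega$, from which $\bar x(t)=-\bar x(t+\pi/\omega)$ follows by substitution into (\ref{E-316}). You instead define the candidate $y(t)=-\bar x(t+\pi/\omega)$, verify it is a solution of (\ref{E-303}) for the same input (using $w(t+\pi/\omega)=-w(t)$, linearity of the flow and of the jump map, and homogeneity of the reset condition), and invoke the uniqueness of the periodic steady state from Theorem~\ref{T1} to conclude $y\equiv\bar x$. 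Both arguments rest on the same underlying half-wave symmetry, but yours packages it abstractly through uniqueness, which buys a cleaner treatment of the reset-instant bookkeeping: the paper's step asserting that the shifted instants satisfy (\ref{E-320}) implicitly requires knowing $\bar x$ at the shifted times, which in turn depends on the shifted reset values --- a bootstrapping that the explicit-formula route glosses over and that your uniqueness argument resolves in one stroke. The price is that your route leans entirely on Theorem~\ref{T1} (and hence on Lemmas~\ref{L1} and~\ref{L2}), whereas the paper's computation additionally produces the explicit expression for $\xi_s$ that is reused later in Section~\ref{sec:32}; so the paper's version, while less tidy as a proof of the corollary itself, does more work toward the subsequent HOSIDF formulas.
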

\begin{proof}
The response of~(\ref{E-303}) for $r=r_0\sin(\omega t)$ and $d=0$, for all $t\geq0$, is given by
\begin{equation}\label{E-314}
\begin{array}{*{35}{c}}
x(t)=r_0\left(e^{\bar{A}(t-t_k)}\Big(\xi_k+\psi(t_k)\Big)-\psi(t)\right), &  t\in(t_k,t_{k+1}],
\end{array}
\end{equation}
where
\setlength{\arraycolsep}{0.0em}
\begin{eqnarray}\label{E-315}
\psi(t)&{=}&(\omega I\cos(\omega t)+\bar{A}\sin(\omega t))\mathcal{F},\nonumber\\
\mathcal{F}&{=}&(\omega^2I+\bar{A}^2)^{-1}\bar{B}\begin{bmatrix}1\\0\end{bmatrix},\nonumber\\
t_k&{=}&\{t_k\in\mathbb{R}^{+},\ k\in\mathbb{Z}^{+}\mid e_R(t_k)=0\},\nonumber\\
\xi_k&{=}&\dfrac{1}{r_0}x(t_k^{+})=\dfrac{1}{r_0}\bar{A}_\rho x(t_k).
\end{eqnarray}
\setlength{\arraycolsep}{5pt}Thus 
\begin{equation}\label{E-316}
\begin{array}{*{35}{c}}
\bar{x}(t)=r_0\left(e^{\bar{A}(t-t_s)}\Big(\xi_s+\psi(t_s)\Big)-\psi(t)\right), &  t\in(t_s,t_{s+1}],
\end{array}
\end{equation}	
with
\setlength{\arraycolsep}{0.0em}
\begin{eqnarray}\label{E-317}
\xi_{s}&{=}&\resizebox{0.94\hsize}{!}{$\bar{A}_\rho e^{\bar{A}(t_s-t_{s-1})}\Bigg(\bar{A}_\rho e^{\bar{A}(t_{s-1}-t_{s-2})}\dots\ \bar{A}_\rho e^{\bar{A}(t_{1}-t_{0})}(\xi_0+\psi(t_0))$}\nonumber\\
&&{+}\:\bar{A}_\rho e^{\bar{A}(t_{s-1}-t_{s-2})}\dots\ \bar{A}_\rho e^{\bar{A}(t_2-t_1)}(I-\bar{A}_\rho)\psi(t_1)\nonumber\\
&&{+}\:\bar{A}_\rho e^{\bar{A}(t_{s-1}-t_{s-2})}\dots\ \bar{A}_\rho e^{\bar{A}(t_2-t_1)}(I-\bar{A}_\rho)\psi(t_2)\nonumber\\
&&{+}\:\dots+(I-\bar{A}_\rho)\psi(t_{s-1})\Bigg)-\bar{A}_\rho\psi(t_s).
\end{eqnarray}
\setlength{\arraycolsep}{5pt}According to \cite{pavlov2005convergent}, uniformly convergent systems forget their initial conditions. By Lemma~\ref{L1} and Lemma~\ref{L2}, $\xi_s$ and the reset instants are unique for any $t_0$ and $\zeta_0$. Hence, the transient response of $\xi_s$ converges to zero which implies that 
\setlength{\arraycolsep}{0.0em}
\begin{eqnarray}\label{E-317}
\xi_{s}&{=}&\bar{A}_\rho e^{\bar{A}(t_s-t_{s-1})}\Bigg((I-\bar{A}_\rho)\psi(t_{s-1})\nonumber\\
&&{+}\:\bar{A}_\rho e^{\bar{A}(t_{s-1}-t_{s-2})}(I-\bar{A}_\rho)\psi(t_{s-2})\nonumber\\
&&{+}\:\bar{A}_\rho e^{\bar{A}(t_{s-1}-t_{s-2})}\bar{A}_\rho e^{\bar{A}(t_{s-2}-t_{s-3})}(I-\bar{A}_\rho)\psi(t_{s-3})\nonumber\\
&&{+}\:\dots+\bar{A}_\rho e^{\bar{A}(t_{s-1}-t_{s-2})}\dots\ \bar{A}_\rho e^{\bar{A}(t_{s-m+1}-t_{s-m})}\nonumber\\
&&{}\,(I-\bar{A}_\rho)\psi(t_{s-m})\Bigg)-\bar{A}_\rho\psi(t_s).
\end{eqnarray}
\setlength{\arraycolsep}{5pt}Therefore, since reset occurs when 
\begin{equation}\label{E-320} 
\bar{C}_{e_R}\bar{x}(t)+D_{e_R}r_0\sin(\omega t)=0,
\end{equation}
if $\{t_s,t_{s-1},...,t_{s-m}\}$ are reset instants and satisfy~(\ref{E-320}), then $\{t_s,t_{s-1},...,t_{s-m}\}+\dfrac{\pi}{\omega}$ also satisfy~(\ref{E-320}), which implies that the sequence of reset instants is periodic with period $\dfrac{\pi}{\omega}$. Using this property in (\ref{E-316}) shows that $\bar{x}(t)=-\bar{x}(t+\dfrac{\pi}{\omega})$ and $t_{s+q}-t_s=\dfrac{\pi}{\omega}$, hence $\xi_s=-\xi_{s+q}$. This means that the even harmonics of the steady-state response of the reset control system~(\ref{E-303}) have zero amplitude. In addition, $\bar{x}(t)=\bar{x}(t+\dfrac{2\pi}{\omega})$, which implies that the steady-state response of the reset control system~(\ref{E-303}) does not contain any subharmonic.  
\end{proof}
\begin{remark}\label{RR2}
{\rm The reset sequence $\{t_k\}$ and the reset values $\zeta_k$ are independent of the input amplitude for $r(t)=r_0\sin(\omega_t)$}.
\end{remark}
We now show that the function $\mathcal{S}$ can be derived explicitly for $r(t)=r_0\sin(\omega_t)$ and $d=0$. Suppose there are $q-1$ reset instants between $t_s$ and $t_s+\dfrac{\pi}{\omega}$ (Fig.~\ref{F-32}). Assume $\sin(\omega t_s)=\kappa$, then $\cos(\omega t_s)=\pm\sqrt{1-\kappa^2}$ (without loss of generality we consider the positive value). 
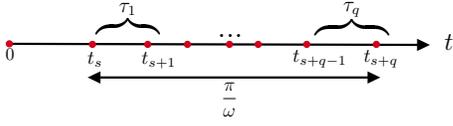
\begin{figure}
	\centering
\resizebox{0.7\hsize}{!}{
	\tikzset{every picture/.style={line width=0.75pt}} 
	\begin{tikzpicture}[x=0.75pt,y=0.75pt,yscale=-1,xscale=1]
	
	\draw [line width=1.5]    (12.38,36.5) -- (391.88,37.49) ;
	\draw [shift={(395.88,37.5)}, rotate = 180.15] [fill={rgb, 255:red, 0; green, 0; blue, 0 }  ][line width=0.08]  [draw opacity=0] (11.61,-5.58) -- (0,0) -- (11.61,5.58) -- cycle    ;
	\draw  [color={rgb, 255:red, 208; green, 2; blue, 27 }  ,draw opacity=1 ][fill={rgb, 255:red, 208; green, 2; blue, 27 }  ,fill opacity=1 ] (87.63,37) .. controls (87.63,35.48) and (88.86,34.25) .. (90.38,34.25) .. controls (91.89,34.25) and (93.13,35.48) .. (93.13,37) .. controls (93.13,38.52) and (91.89,39.75) .. (90.38,39.75) .. controls (88.86,39.75) and (87.63,38.52) .. (87.63,37) -- cycle ;
	\draw  [color={rgb, 255:red, 208; green, 2; blue, 27 }  ,draw opacity=1 ][fill={rgb, 255:red, 208; green, 2; blue, 27 }  ,fill opacity=1 ] (12.38,36.5) .. controls (12.38,34.98) and (13.61,33.75) .. (15.13,33.75) .. controls (16.64,33.75) and (17.88,34.98) .. (17.88,36.5) .. controls (17.88,38.02) and (16.64,39.25) .. (15.13,39.25) .. controls (13.61,39.25) and (12.38,38.02) .. (12.38,36.5) -- cycle ;
	\draw  [color={rgb, 255:red, 208; green, 2; blue, 27 }  ,draw opacity=1 ][fill={rgb, 255:red, 208; green, 2; blue, 27 }  ,fill opacity=1 ] (137.63,37) .. controls (137.63,35.48) and (138.86,34.25) .. (140.38,34.25) .. controls (141.89,34.25) and (143.13,35.48) .. (143.13,37) .. controls (143.13,38.52) and (141.89,39.75) .. (140.38,39.75) .. controls (138.86,39.75) and (137.63,38.52) .. (137.63,37) -- cycle ;
	\draw  [color={rgb, 255:red, 208; green, 2; blue, 27 }  ,draw opacity=1 ][fill={rgb, 255:red, 208; green, 2; blue, 27 }  ,fill opacity=1 ] (281.63,37) .. controls (281.63,35.48) and (282.86,34.25) .. (284.38,34.25) .. controls (285.89,34.25) and (287.13,35.48) .. (287.13,37) .. controls (287.13,38.52) and (285.89,39.75) .. (284.38,39.75) .. controls (282.86,39.75) and (281.63,38.52) .. (281.63,37) -- cycle ;
	\draw  [color={rgb, 255:red, 208; green, 2; blue, 27 }  ,draw opacity=1 ][fill={rgb, 255:red, 208; green, 2; blue, 27 }  ,fill opacity=1 ] (344.63,37) .. controls (344.63,35.48) and (345.86,34.25) .. (347.38,34.25) .. controls (348.89,34.25) and (350.13,35.48) .. (350.13,37) .. controls (350.13,38.52) and (348.89,39.75) .. (347.38,39.75) .. controls (345.86,39.75) and (344.63,38.52) .. (344.63,37) -- cycle ;
	\draw [line width=1.5]    (90,67) -- (347,67) ;
	\draw [shift={(351,67)}, rotate = 180] [fill={rgb, 255:red, 0; green, 0; blue, 0 }  ][line width=0.08]  [draw opacity=0] (11.61,-5.58) -- (0,0) -- (11.61,5.58) -- cycle    ;
	\draw [shift={(86,67)}, rotate = 0] [fill={rgb, 255:red, 0; green, 0; blue, 0 }  ][line width=0.08]  [draw opacity=0] (11.61,-5.58) -- (0,0) -- (11.61,5.58) -- cycle    ;
	\draw  [color={rgb, 255:red, 208; green, 2; blue, 27 }  ,draw opacity=1 ][fill={rgb, 255:red, 208; green, 2; blue, 27 }  ,fill opacity=1 ] (173.63,37) .. controls (173.63,35.48) and (174.86,34.25) .. (176.38,34.25) .. controls (177.89,34.25) and (179.13,35.48) .. (179.13,37) .. controls (179.13,38.52) and (177.89,39.75) .. (176.38,39.75) .. controls (174.86,39.75) and (173.63,38.52) .. (173.63,37) -- cycle ;
	\draw  [color={rgb, 255:red, 208; green, 2; blue, 27 }  ,draw opacity=1 ][fill={rgb, 255:red, 208; green, 2; blue, 27 }  ,fill opacity=1 ] (211.63,37) .. controls (211.63,35.48) and (212.86,34.25) .. (214.38,34.25) .. controls (215.89,34.25) and (217.13,35.48) .. (217.13,37) .. controls (217.13,38.52) and (215.89,39.75) .. (214.38,39.75) .. controls (212.86,39.75) and (211.63,38.52) .. (211.63,37) -- cycle ;
	\draw  [color={rgb, 255:red, 208; green, 2; blue, 27 }  ,draw opacity=1 ][fill={rgb, 255:red, 208; green, 2; blue, 27 }  ,fill opacity=1 ] (237.63,37) .. controls (237.63,35.48) and (238.86,34.25) .. (240.38,34.25) .. controls (241.89,34.25) and (243.13,35.48) .. (243.13,37) .. controls (243.13,38.52) and (241.89,39.75) .. (240.38,39.75) .. controls (238.86,39.75) and (237.63,38.52) .. (237.63,37) -- cycle ;
	
	\draw (413,35) node  [scale=1.2,font=\Large]  {$t$};
	\draw (10,39) node [anchor=north west][inner sep=0.75pt][scale=1.2]    {$0$};
	\draw (84,42) node [anchor=north west][inner sep=0.75pt][scale=1.3]    {$t_{s}$};
	\draw (92,-5) node [anchor=north west][inner sep=0.75pt]  [scale=1.2,font=\LARGE]  {$\overbrace{\ \ \ \ \ \ }^{\tau _{1}}$};
	\draw (202,26) node [anchor=north west][inner sep=0.75pt]  [scale=1.2,font=\LARGE]  {$...$};
	\draw (133,42) node [anchor=north west][inner sep=0.75pt][scale=1.3]    {$t_{s+1}$};
	\draw (270,40) node [anchor=north west][inner sep=0.75pt][scale=1.3]    {$t_{s+q-1}$};
	\draw (334,40) node [anchor=north west][inner sep=0.75pt][scale=1.3]    {$t_{s+q}$};
	\draw (290,-6) node [anchor=north west][inner sep=0.75pt]  [scale=1.2,font=\LARGE]  {$\overbrace{\ \ \ \ \ \ \ }^{\tau _{q}}$};
	\draw (206,70) node [anchor=north west][inner sep=0.75pt]  [scale=0.9,font=\Large]  {$\dfrac{\mathbf{\pi }}{\mathbf{\omega }}$};
	\end{tikzpicture}}
	\caption{Steady-state reset instants of the reset control system~(\ref{E-303})}
	\label{F-32}
\end{figure}  
Using trigonometry relations, one has that
\setlength{\arraycolsep}{0.0em}
\begin{eqnarray}\label{E-32333}
\psi(t_s)&=&f_0(\kappa),\nonumber\\
\psi(t_s+\tau_1)&=&f_1(\kappa,\tau_1),\nonumber\\ 
&\ \vdots\nonumber\\
\psi(t_s+\tau_1+...+\tau_q)&=&f_q(\kappa,\tau_1,\tau_2,...,\tau_q).
\end{eqnarray}
\setlength{\arraycolsep}{5pt}Moreover,
\setlength{\arraycolsep}{0.0em}
\begin{eqnarray}\label{E-323}
\xi_{s+i}&{=}&\bar{A}_\rho\Bigg(e^{\bar{A}\tau_i}(g_{i-1}(\kappa,\xi_s,\tau_1,..,\tau_{i-1})+f_{i-1}(\kappa,\tau_1,..,\tau_{i-1}))\nonumber\\
&&{-}\:f_i(\kappa,\tau_1,\tau_2,...,\tau_i)\Bigg)=g_i(\kappa,\xi_s,\tau_1,\tau_2,...,\tau_i),
\end{eqnarray}
\setlength{\arraycolsep}{5pt}with $i=1,2,...,q$ and $g_0(\kappa,\xi_s)=\xi_s$. Now, since $e_{R}(t)$ is zero at reset instants, one has that
\setlength{\arraycolsep}{0.0em}
\begin{eqnarray}\label{E-324}
&{}&\bar{C}_{e_R}\Bigg(e^{\bar{A}\tau_i}(g_{i-1}(\kappa,\xi_s,\tau_1,..,\tau_{i-1})+f_{i-1}(\kappa,\tau_1,..,\tau_{i-1}))\nonumber\\
&{-}&f_i(\kappa,\tau_1,\tau_2,...,\tau_i)\Bigg)+D_{eR}\sin(\omega(t_s+\tau_1+...+\tau_i) )\nonumber\\
&{=}&E_i(\kappa,\xi_s,\tau_1,...,\tau_i)=0,
\end{eqnarray}
\setlength{\arraycolsep}{5pt}with $i=1,2,...,q$. In addition,
\begin{equation}\label{E-325}
\begin{array}{*{35}{c}}
\tau_1+\tau_2+...+\tau_q=\dfrac{\pi}{\omega},\\
\xi_s=-\xi_{s+q}\Rightarrow g_q(\kappa,\xi_s,\tau_1,\tau_2,...,\tau_q)+\xi_s=0.
\end{array}
\end{equation}
Moreover, by the well-posedness property of reset instants (see Definition~\ref{DD0}), reset instants are distinct. Hence, there are $q+2$ independent equations and $q+2$ parameters $(\kappa,\xi_s,q,\tau_1,\tau_2,...,\tau_q),\ q\in\mathbb{N}$. In addition, the well-posedness property implies that the reset intervals are lower bounded \cite{banos2011reset}. Hence, 
\begin{equation}\label{E-326}
\exists\ \lambda\leq \tau_i\Rightarrow q\leq\dfrac{\pi}{\lambda\omega}-1.
\end{equation}
Furthermore, for $q=1$, the equations have always a unique solution. Thus, there exists a bounded non-empty set $Q=\{Q_i\in\mathbb{N}|Q_i\leq q_{\max}\}$ such that for $q\in Q$, the equations have a solution. Hence, $\bar{x}(t)$, the steady-state response of the reset control system~(\ref{E-303}) to $r(t)=r_0\sin(\omega t)$, is the solution of (\ref{E-324})-(\ref{E-325}) for $q=q_{\max}$. Since $\bar{x}(t)$ is periodic with period $\dfrac{2\pi}{\omega}$, one has
\begin{equation}\label{E-405} 
\bar{x}(t)=\mathlarger{\sum}\limits ^{\infty}_{n=1}a_n\cos((2n+1)\omega t)+b_n\sin((2n+1)\omega t).
\end{equation}
According to Theorem~\ref{T1}, $\bar{x}$ is unique and equal to the function $\mathcal{S}$. Thus,
\setlength{\arraycolsep}{0.0em}
\begin{eqnarray}\label{E-4055}
\bar{x}(t)&{=}&\mathlarger{\sum}\limits ^{\infty}_{n=1}a_n\cos((2n+1)\omega t)+b_n\sin((2n+1)\omega t)\nonumber\\
&{=}&\mathcal{S}(\sin(\omega t),\cos(\omega t),\omega).
\end{eqnarray}
\setlength{\arraycolsep}{5pt}Finally, one could also use De Moivre's formula to find a formal polynomial expansion for $\mathcal{S}$ in terms of $\sin(\omega t)$ and $\cos(\omega t)$.
\subsection{HOSIDF of The Closed-Loop Reset Control Systems}\label{sec:32}
In Section~\ref{sec:31} sufficient conditions for the existence of the steady-state solution for the reset control system~(\ref{E-303}) driven by periodic inputs have been presented. Moreover, the steady-state solution has been explicitly calculated. In this section the HOSIDF technique \cite{nuij2006higher} is applied to the steady-state response of the system to derive a notion of frequency response for the reset control system~(\ref{E-303}), which allows analyzing tracking and disturbance rejection performance (see the bottom diagram of Fig.~\ref{F-31}).
\subsubsection{Tracking Performance}\label{sec:321}
Consider the reset control system~(\ref{E-303}) with $r(t)=r_0\sin(\omega t)$ and $d(t)=0$, for all $t\geq0$. We now derive relations between the input $r(t)$ and the steady-state response of the output $y$, of the error $e$, and of the control input $u$. To this end, consider the steady-state reset instants $t_s,t_{s+1},...,t_{s+q}$ and their associated reset values $\xi_s,\xi_{s+1},...,\xi_{s+q}$ which are calculated through (\ref{E-324}) and (\ref{E-325}).
\begin{theorem}\label{thm1}
	Consider the reset control system~(\ref{E-303}) with $r(t)=r_0\sin(\omega t)$ and $d(t)=0$, for all $t\geq0$. Let $T_n(j\omega)$ be the ratio of the $n^\text{th}$ harmonic component of the output signal $y$ to the first harmonic component of $r$. Then 
	\begin{equation}\label{E-327}
T_n(j\omega)=\begin{cases}
		\mathfrak{T}(1,\omega)-\bar{C}(j\omega I+\bar{A})\mathcal{F}, &n=1,\\
		\mathfrak{T}(n,\omega), & n>1\text{ odd,}\\
		0, & n\text{ even,}
		\end{cases}
	\end{equation}
	in which
	\begin{equation}\label{E-3278}	
	\mathfrak{T}(n,\omega)=\dfrac{2j\omega \bar{C}}{\pi}(\bar{A}-jn\omega I)^{-1}(\mathlarger{\sum}\limits ^{q}_{i=1}\mathcal{R}(i,n,\omega)),
	\end{equation}  
\setlength{\arraycolsep}{0.0em}
\begin{eqnarray}\label{E-328}
\mathcal{R}(i,n,\omega)&{=}&\left(\dfrac{e^{\bar{A}(t_{s+i}-t_{s+i-1})}}{e^{jn\omega t_{s+i}}}-\dfrac{I}{e^{jn\omega t_{s+i-1}}}\right)\Bigg(\xi_{s+i-1}+\nonumber\\
&&{+}\:\psi(t_{s+i-1})\Bigg).
\end{eqnarray}
\setlength{\arraycolsep}{5pt}\end{theorem}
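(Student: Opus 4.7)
The plan is to compute the Fourier coefficients of $y(t)=\bar{C}\bar{x}(t)$ directly from the closed-form steady-state expression~(\ref{E-316}). On each inter-reset interval $(t_{s+i-1},t_{s+i}]$ the output splits additively into a \emph{linear} piece $-r_{0}\bar{C}\psi(t)$, which is a pure first harmonic, and a \emph{reset} piece $r_{0}\bar{C}e^{\bar{A}(t-t_{s+i-1})}(\xi_{s+i-1}+\psi(t_{s+i-1}))$, which is a matrix exponential governed by the state value immediately after the preceding reset. These two contributions will be integrated against $e^{-jn\omega t}$ separately.

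First I would invoke Corollary~\ref{co2}: because $\bar{x}(t+\pi/\omega)=-\bar{x}(t)$ and the reset sequence is $\pi/\omega$-periodic, the even Fourier coefficients of $y$ vanish (giving $T_n(j\omega)=0$ for even $n$), while for odd $n$ the integral over $[0,2\pi/\omega]$ reduces to twice the integral over the half-period $[t_s,t_s+\pi/\omega]$, which is partitioned by exactly $q$ inter-reset intervals. Since $\psi(t)=(\omega I\cos(\omega t)+\bar{A}\sin(\omega t))\mathcal{F}$ contains only the fundamental frequency, the linear piece contributes to $T_n$ only when $n=1$. For that case, the elementary integrals $\int_{0}^{2\pi/\omega}\cos(\omega t)e^{-j\omega t}\,dt=\pi/\omega$ and $\int_{0}^{2\pi/\omega}\sin(\omega t)e^{-j\omega t}\,dt=-j\pi/\omega$, combined with the HOSIDF normalization (dividing by $r_0$ and multiplying by the factor $2j$ that converts complex Fourier coefficients to the sinusoidal magnitude/phase convention), yield exactly $-\bar{C}(j\omega I+\bar{A})\mathcal{F}$.

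For the reset piece I would change variable to $u=t-t_{s+i-1}$ on each subinterval and use
\[
\int_{0}^{\tau_{i}}e^{(\bar{A}-jn\omega I)u}\,du=(\bar{A}-jn\omega I)^{-1}\!\left(e^{(\bar{A}-jn\omega I)\tau_{i}}-I\right),
\]
then combine the prefactor $e^{-jn\omega t_{s+i-1}}$ with $e^{-jn\omega\tau_{i}}=e^{-jn\omega(t_{s+i}-t_{s+i-1})}$ to recover the bracketed expression defining $\mathcal{R}(i,n,\omega)$. Summing over $i=1,\ldots,q$ and reapplying the $\tfrac{2j\omega}{\pi}$ normalization delivers $\mathfrak{T}(n,\omega)$. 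For $n\geq 3$ odd this is the whole formula, while for $n=1$ the linear contribution supplies the additive correction $-\bar{C}(j\omega I+\bar{A})\mathcal{F}$. The main obstacle is not dynamical but bookkeeping: keeping straight the sign and normalization between the complex-exponential Fourier coefficients and the HOSIDF convention, and confirming that the antisymmetry over the half-period really doubles (rather than cancels) the integral for every odd~$n$. Once that accounting is fixed, no ingredient beyond Corollary~\ref{co2} and formula~(\ref{E-316}) is required.
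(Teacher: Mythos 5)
Your proposal is correct and follows essentially the same route as the paper: both start from the ratio of Fourier integrals over one period, split the steady-state expression~(\ref{E-316}) into the matrix-exponential (reset) piece and the pure first-harmonic piece $-\bar{C}\psi(t)$, invoke Corollary~\ref{co2} to kill the even harmonics and to double the half-period sum for odd $n$, and evaluate $\int_0^{\tau_i}e^{(\bar{A}-jn\omega I)u}\,du$ to obtain $\mathcal{R}(i,n,\omega)$. The normalization bookkeeping you flag works out exactly as you describe, reproducing the $\tfrac{2j\omega}{\pi}$ prefactor and the $-\bar{C}(j\omega I+\bar{A})\mathcal{F}$ correction at $n=1$.
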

\begin{proof}
By \cite{guo2009frequency,nuij2006higher}
\begin{equation}\label{E-328-329}
	T_n(j\omega)=\dfrac{\mathlarger{\int}_{t_{s}}^{t_{s}+\frac{2\pi}{\omega}}y(t)e^{-jn\omega t}dt}{\mathlarger{\int}_{t_{s}}^{t_{s}+\frac{2\pi}{\omega}}r_0\sin(\omega t)e^{-j\omega t}dt}.
	\end{equation}
	Using (\ref{E-316}), (\ref{E-328-329}) is rewritten as
	\setlength{\arraycolsep}{0.0em}
\begin{eqnarray}\label{E-329}
T_n(j\omega)&{=}&\dfrac{j\omega \bar{C}}{\pi}\Bigg(\mathlarger{\sum}\limits ^{2q}_{i=1}\big(\mathlarger{\int}_{t_{s+i-1}}^{t_{s+{i}}}\mathcal{X}_{i}(t)e^{-jn\omega t}dt\big)\nonumber\\
&&{-}\:\mathlarger{\int}_{t_{s}}^{t_{s}+\frac{2\pi}{\omega}}\psi(t)e^{-jn\omega t}dt\Bigg),
\end{eqnarray}
\setlength{\arraycolsep}{5pt}where 
	\begin{equation}\label{E-330}
	\mathcal{X}_i(t)=e^{\bar{A}(t-t_{s+{i-1}})}\Big(\xi_{{s+{i-1}}}+\psi(t_{s+{i-1}})\Big).
	\end{equation}
	For $n$ even the first part of (\ref{E-329}) is zero by Corollary \ref{co2}, while for $n$ odd one has
\setlength{\arraycolsep}{0.0em}
\begin{eqnarray}\label{E-331}
&{}&\mathlarger{\int}_{t_{s+{i-1}}}^{t_{s+{i}}}\mathcal{X}_{i}(t)e^{-jn\omega t}dt\nonumber\\
&{=}&\mathlarger{\int}_{t_{s+{i-1}}}^{t_{s+{i}}}e^{\bar{A}(t-t_{s+{i-1}})}\Big(\xi_{{s+{i-1}}}+\psi(t_{s+{i-1}})\Big)e^{-jn\omega t}dt\nonumber\\
&{=}&\mathlarger{\int}_{t_{s+{i-1}}+\frac{\pi}{\omega}}^{t_{s+{i}}+\frac{\pi}{\omega}}e^{\bar{A}(t-t_{s+{i-1}})}\Big(-\xi_{{s+{i-1}}}-\psi(t_{s+{i-1}})\Big)\dfrac{e^{-jn\omega t}}{e^{-jn\pi}}dt\nonumber\\
&{=}&\mathlarger{\int}_{t_{s+{i-1}}+\frac{\pi}{\omega}}^{t_{s+{i}}+\frac{\pi}{\omega}}\mathcal{X}_{i}(t)e^{-jn\omega t}dt=(A-jn\omega I)^{-1} \mathcal{R}_{(i,n)},
\end{eqnarray}
\setlength{\arraycolsep}{5pt}while the second term in (\ref{E-329}) is given by  
	\begin{equation}\label{E-332}
	\mathlarger{\int}_{t_{ss_0}}^{t_{ss_m}}\psi(t)e^{-jn\omega dt}=\begin{cases}
	\pi(I-\dfrac{j\bar{A}}{\omega})\mathcal{F}, & n=1,\\
	0, & n>2.
	\end{cases}
	\end{equation}
Thus, substituting (\ref{E-331}) and (\ref{E-332}) to (\ref{E-329}) yields the claim.
\end{proof} 
\begin{definition}\label{D1}
The family of complex valued functions $T_n(j\omega)$, $n=1,2,...$ is the complementary sensitivity of the reset control system (\ref{E-303}). 
\end{definition}
\begin{corollary}\label{coro1}
Consider the reset control system (\ref{E-303}) with $r(t)=r_0\sin(\omega t)$ and $d(t)=0$, for all $t\geq0$. Let $S_n(j\omega)$ be the ratio of the $n^\text{th}$ harmonic component of the error signal $e$ to the first harmonic component of $r$. Then
	\begin{equation}\label{E-333}
	S_n(j\omega)+T_n(j\omega)=\begin{cases}
		1, & n=1,\\
		0, & n>1.
		\end{cases}
\end{equation}
\end{corollary}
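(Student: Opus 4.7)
The plan is to exploit the linearity of the Fourier series together with the identity $e(t)=r(t)-y(t)$ that defines the tracking error. In complete analogy with (\ref{E-328-329}) used in the proof of Theorem~\ref{thm1}, $S_n(j\omega)$ admits the integral representation
\begin{equation*}
S_n(j\omega)=\dfrac{\mathlarger{\int}_{t_{s}}^{t_{s}+\frac{2\pi}{\omega}}e(t)e^{-jn\omega t}dt}{\mathlarger{\int}_{t_{s}}^{t_{s}+\frac{2\pi}{\omega}}r_0\sin(\omega t)e^{-j\omega t}dt},
\end{equation*}
where the lower limit $t_s$ may be taken to be the same steady-state reset instant used in Theorem~\ref{thm1} because, by Corollary~\ref{co2}, the steady-state response is periodic with period $\frac{2\pi}{\omega}$ and the integral over one period is independent of the chosen starting instant.

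Next, I would substitute $e(t)=r_0\sin(\omega t)-y(t)$ into the numerator and split the integral into an $r$-part and a $y$-part. By the very definition of $T_n(j\omega)$, the $y$-part yields $-T_n(j\omega)$ times the common denominator. For the $r$-part, I would compute the elementary integral $\int_{t_s}^{t_s+2\pi/\omega}r_0\sin(\omega t)e^{-jn\omega t}dt$: since $r_0\sin(\omega t)=\frac{r_0}{2j}(e^{j\omega t}-e^{-j\omega t})$ consists of a single positive-frequency harmonic at $\omega$, orthogonality of the complex exponentials over one period shows that this integral equals the denominator itself when $n=1$ and vanishes for every $n>1$.

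Assembling the two contributions yields $S_1(j\omega)=1-T_1(j\omega)$ and $S_n(j\omega)=-T_n(j\omega)$ for $n>1$, which is the claim in the stated piecewise form. For even $n$, both summands vanish independently, in agreement with Corollary~\ref{co2}, which guarantees that the steady-state response, and hence $e$ and $y$, contain no even harmonics. I do not foresee any essential technical obstacle: the argument is simply a linearity/orthogonality computation on one period of the steady-state signals, whose existence, uniqueness, and periodicity are already secured by Theorem~\ref{T1} and Corollary~\ref{co2}.
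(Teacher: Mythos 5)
Your proof is correct; the paper states Corollary~\ref{coro1} without proof, and your argument — writing $e(t)=r(t)-y(t)$, using the common integral representation of the harmonics over one steady-state period, and noting that $\int_{t_s}^{t_s+2\pi/\omega} r_0\sin(\omega t)e^{-jn\omega t}\,dt$ equals the denominator for $n=1$ and vanishes for $n>1$ — is exactly the intended linearity/orthogonality justification. Nothing further is needed.
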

\begin{corollary}\label{coro11}
Consider the reset control system (\ref{E-303}) with $r(t)=r_0\sin(\omega t)$ and $d(t)=0$, for all $t\geq0$. Let $CS_n(j\omega)$ be the ratio of the $n^\text{th}$ harmonic component of the control input signal $u$ to the first harmonic component of $r$. If the plant is stable, then 
\begin{equation}\label{E-334}
CS_n(j\omega)=\dfrac{T_n(j\omega)}{G(nj\omega)}.
\end{equation}
\end{corollary}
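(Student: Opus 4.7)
The plan is to exploit the fact that, in steady state, the signal $u(t)$ entering the plant is periodic with period $2\pi/\omega$, so that the linear, stable plant $G$ acts on each of its harmonics independently through its frequency response $G(jn\omega)$.

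First, I would invoke Theorem~\ref{T1} and Corollary~\ref{co2} to conclude that the steady-state closed-loop state $\bar{x}(t)$ is periodic with period $2\pi/\omega$ and contains only odd harmonics. Since $u(t)=\bar{C}_u x(t)+\bar{D}_u r(t)$ in~(\ref{E-303}) is an affine function of $x(t)$ and of the sinusoidal reference, the steady-state $u(t)$ inherits these two properties and therefore admits a Fourier expansion of the form $u(t)=\sum_{n\ \mathrm{odd}} U_n(\omega)\,e^{jn\omega t}$ with $U_n=0$ for $n$ even.

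Next, I would use the assumption that $G(s)$ is stable: its steady-state response to a complex exponential $e^{jn\omega t}$ is $G(jn\omega)e^{jn\omega t}$. Applying superposition to the Fourier expansion of $u$, the steady-state plant output is $y(t)=\sum_{n\ \mathrm{odd}} G(jn\omega)\,U_n(\omega)\,e^{jn\omega t}$, so that, identifying $n$-th Fourier coefficients, $Y_n(\omega)=G(jn\omega)\,U_n(\omega)$. Dividing both sides by the first harmonic amplitude of $r$ yields, by the very definitions of $T_n$ and $CS_n$, the relation $T_n(j\omega)=G(jn\omega)\,CS_n(j\omega)$, which rearranges to~(\ref{E-334}) (well-posed since $G$ stable implies $G(jn\omega)$ is finite, and the formula is understood wherever $G(jn\omega)\neq 0$).

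The only subtle step is the rigorous transfer of the transfer-function relation $Y_n=G(jn\omega)U_n$ from the scalar exponential case to the steady-state Fourier series; this, however, is a standard consequence of stability of $G$, which guarantees that the forced response to each harmonic converges to the corresponding sinusoidal steady state and that the superposition is unconditionally justified. No further difficulty is anticipated.
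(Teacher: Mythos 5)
Your argument is correct and is essentially the (implicit) argument the paper intends: the paper states Corollary~\ref{coro11} without proof, and the justification is exactly that in periodic steady state the stable plant maps each harmonic of $u$ through $G(jn\omega)$, so $T_n(j\omega)=G(jn\omega)\,CS_n(j\omega)$. Your added caveats (uniqueness of the periodic forced response because $G$ is stable, and the formula being understood where $G(jn\omega)\neq 0$) are appropriate and do not change the substance.
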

\begin{definition}\label{D1}
The families of complex valued functions $S_n(j\omega)$ and $CS_n(j\omega)$, $n=1,2,...$, are the sensitivity and the control sensitivity of the reset control system (\ref{E-303}), respectively. 
\end{definition}
\subsubsection{Disturbance Rejection}\label{sec:322}
In this section relations between $d(t)=\sin(\omega t)$ and the error $e(t)$ and the control input $u(t)$ are found in the case in which $r(t)=0$ for the reset control system~(\ref{E-303}) using the same procedure provided in Section \ref{sec:321}. The matrix $\psi(t)$ has to be replaced by
\setlength{\arraycolsep}{0.0em}
\begin{eqnarray}\label{E-336}
\psi_\mathcal{D}(t)&{=}&(\omega I\cos(\omega t)+\bar{A}\sin(\omega t))\mathcal{F}_\mathcal{D},\nonumber\\
 \mathcal{F}_\mathcal{D}&{=}&(\omega^2I+\bar{A}^2)^{-1}\bar{B}\begin{bmatrix}0\\1\end{bmatrix}.
\end{eqnarray}
\setlength{\arraycolsep}{5pt}Let $t^\prime_s,t^\prime_{s+1},...,t^\prime_{s+q^\prime}$ and $\xi^\prime_s,\xi^\prime_{s+1},...,\xi^\prime_{s+q^\prime}$ be the steady-state reset instants and their associated reset values for the reset control system (\ref{E-303}) with $d(t)=d_0\sin(\omega t)$ and $r(t)=0$, respectively. In addition, since $r(t)=0$, (\ref{E-324}) is changed to  
\setlength{\arraycolsep}{0.0em}
\begin{eqnarray}\label{E-335}
&{}&\bar{C}_{e_R}\Bigg(e^{\bar{A}\tau^\prime_i}(g_{i-1}(\kappa^\prime,\xi^\prime_s,\tau^\prime_1,..,\tau^\prime_{i-1})+f_{i-1}(\kappa^\prime,\tau^\prime_1,..,\tau^\prime_{i-1}))\nonumber\\
&{-}&f_i(\kappa^\prime,\tau^\prime_1,\tau^\prime_2,...,\tau^\prime_i)\Bigg)=E_i(\kappa^\prime,\xi^\prime_s,\tau^\prime_1,...,\tau^\prime_i)=0,
\end{eqnarray}
\setlength{\arraycolsep}{5pt}with $i=1,2,...,q^\prime$. Now, substituting $\psi(t)$ with $\psi_\mathcal{D}(t)$ in relations (\ref{E-32333}) and (\ref{E-323}), and considering (\ref{E-335}) instead of (\ref{E-324}), the steady-state response of the reset control system (\ref{E-303}) for $d(t)=d_0\sin(\omega t)$ and $r(t)=0$ can be found using the same procedure provided in Section \ref{sec:31}.
\begin{corollary}\label{coro3}
Consider the reset control system (\ref{E-303}) with $d(t)=d_0\sin(\omega t)$ and $r(t)=0$, for all $t\geq0$. Let $PS_n(j\omega)$ be the ratio of the $n^\text{th}$ harmonic component of the error signal $e$ to the first harmonic component of $d$. Then
\begin{equation}\label{E-337}
	PS_n(j\omega)=\begin{cases}
		\mathfrak{P}(1,\omega)+\bar{C}(j\omega I+\bar{A})\mathcal{F}_\mathcal{D}, & n=1,\\
		\mathfrak{P}(n,\omega), & n>1\text{ odd,}\\
		0, & n\text{ even,}
		\end{cases}
	\end{equation}
in which
\begin{equation}\label{E-3278}	
	\mathfrak{P}(n,\omega)=\dfrac{2j\omega \bar{C}}{\pi}(jn\omega I-\bar{A})^{-1}(\mathlarger{\sum}\limits ^{q^\prime}_{i=1}\mathcal{R}_\mathcal{D}(i,n,\omega)),
	\end{equation}  
\setlength{\arraycolsep}{0.0em}
\begin{eqnarray}\label{E-328}
\mathcal{R}_\mathcal{D}(i,n,\omega)&{=}&\left(\dfrac{e^{\bar{A}(t^\prime_{s+i}-t^\prime_{s+i-1})}}{e^{jn\omega t^\prime_{s+i}}}-\dfrac{I}{e^{jn\omega t^\prime_{s+i-1}}}\right)\Bigg(\xi^\prime_{s+i-1}+\nonumber\\
&&{+}\:\psi_\mathcal{D}(t^\prime_{s+i-1})\Bigg).
\end{eqnarray}
\setlength{\arraycolsep}{5pt}\end{corollary}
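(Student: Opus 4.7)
The plan is to mirror the derivation of Theorem~\ref{thm1} exactly, substituting the disturbance-driven quantities developed in Section~\ref{sec:322}. First, invoking Theorem~\ref{T1} for the pair $(r,d)=(0,d_0\sin(\omega t))$ secures the existence and uniqueness of a periodic steady-state solution $\bar{x}(t)$, whose closed form is the analogue of (\ref{E-316}) with $\psi$ replaced by $\psi_\mathcal{D}$ and with reset instants and values $t'_s,\xi'_s$ determined by (\ref{E-335}). Since $r\equiv 0$, the error satisfies $e(t)=-y(t)=-\bar{C}\bar{x}(t)$, so that
\begin{equation*}
PS_n(j\omega)=\dfrac{\mathlarger{\int}_{t'_s}^{t'_s+\frac{2\pi}{\omega}} e(t)\,e^{-jn\omega t}\,dt}{\mathlarger{\int}_{t'_s}^{t'_s+\frac{2\pi}{\omega}} d_0\sin(\omega t)\,e^{-j\omega t}\,dt}.
\end{equation*}

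Next, I would split the numerator exactly as in (\ref{E-329}), writing $\bar{x}(t)$ on each reset interval as $\mathcal{X}'_i(t)-d_0\,\psi_\mathcal{D}(t)$, where $\mathcal{X}'_i$ is the primed counterpart of (\ref{E-330}). The half-period antisymmetry exploited in (\ref{E-331}) carries over because Corollary~\ref{co2} extends to the disturbance case verbatim—the argument only uses that the forcing is a single sinusoid of frequency $\omega$ and does not depend on which channel of $w$ carries it. Consequently, even harmonics vanish and the sum over the $2q'$ reset intervals in one period collapses to twice the sum over $i=1,\dots,q'$, producing the summation of $\mathcal{R}_\mathcal{D}(i,n,\omega)$ that appears in $\mathfrak{P}(n,\omega)$. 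The residual $\psi_\mathcal{D}$ integral, evaluated as in (\ref{E-332}) with $\mathcal{F}$ replaced by $\mathcal{F}_\mathcal{D}$, contributes $\pi(I-j\bar{A}/\omega)\mathcal{F}_\mathcal{D}$ for $n=1$ and vanishes for $n>1$ odd, which yields the extra $\bar{C}(j\omega I+\bar{A})\mathcal{F}_\mathcal{D}$ term in the stated formula.

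The step that requires the most care is the sign bookkeeping. The outer minus sign coming from $e=-y$ combines with the identity $(\bar{A}-jn\omega I)^{-1}=-(jn\omega I-\bar{A})^{-1}$: together they convert the direct transcription of $\mathfrak{T}(n,\omega)$ (with $\mathcal{R}\to\mathcal{R}_\mathcal{D}$) into the form of $\mathfrak{P}(n,\omega)$ given in the statement, and simultaneously flip the sign in front of the $\bar{C}(j\omega I+\bar{A})$ term so that it appears with a $+$ rather than the $-$ in (\ref{E-327}). Beyond this sign tracking, every analytical ingredient—existence and uniqueness of $\bar{x}$, the half-period antisymmetry $\bar{x}(t+\pi/\omega)=-\bar{x}(t)$, periodicity of the reset sequence, and the explicit evaluation of the exponential integrals—transfers verbatim from the proofs of Theorem~\ref{T1}, Corollary~\ref{co2}, and Theorem~\ref{thm1}, so no genuinely new obstacle arises.
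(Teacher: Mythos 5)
Your proposal is correct and follows essentially the same route the paper intends: the paper gives no explicit proof of this corollary, stating only that the result follows ``using the same procedure provided in Section~\ref{sec:321}'' with $\psi\to\psi_\mathcal{D}$, $\mathcal{F}\to\mathcal{F}_\mathcal{D}$ and the reset condition (\ref{E-335}) in place of (\ref{E-324}). Your sign bookkeeping is the one detail the paper leaves entirely implicit, and it checks out: $e=-y$ together with $-(\bar{A}-jn\omega I)^{-1}=(jn\omega I-\bar{A})^{-1}$ converts the transcription of $\mathfrak{T}$ into $\mathfrak{P}$ and flips the residual term to $+\bar{C}(j\omega I+\bar{A})\mathcal{F}_\mathcal{D}$, exactly as stated in (\ref{E-337}).
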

\begin{corollary}\label{coro111}
Consider the reset control system (\ref{E-303}) with $d(t)=d_0\sin(\omega t)$ and $r(t)=0$, for all $t\geq0$. Let $CS_{d_n}(j\omega)$ be the ratio of the $n^\text{th}$ harmonic component of the control input signal $u$ to the first harmonic component of $d$. If the plant is stable, then 
\begin{equation}\label{E-338}
	CS_{d_n}(j\omega)=\begin{cases}
	\dfrac{-PS_1(j\omega)}{G(j\omega)}-1, & n=1,\\
	\dfrac{-PS_n(j\omega)}{G(nj\omega)}, & n>1.\\  
	\end{cases}
	\end{equation}
\end{corollary}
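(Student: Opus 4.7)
The plan is to exploit the linearity of the plant $G$ harmonic-by-harmonic, together with the identity $e(t)=-y(t)$ that holds when $r\equiv 0$. First I would observe that, under Assumption~\ref{AS1} and the $H_\beta$ condition, the construction in Section~\ref{sec:322} (which mirrors Theorem~\ref{T1} with $\psi$ replaced by $\psi_\mathcal{D}$) produces a unique periodic steady state with fundamental frequency $\omega$ and with vanishing even harmonics, so that the coefficients defining $PS_n(j\omega)$ and $CS_{d_n}(j\omega)$ are well posed and the Fourier series of $u$, $y$ and $e$ converge in the usual sense.

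Next I would use the summing-junction equation $e=r-y$ from the top diagram of Fig.~\ref{F-31}, which with $r=0$ gives $e(t)=-y(t)$; extracting the $n$-th Fourier coefficient and dividing by the first harmonic of $d$ yields $Y_n(j\omega)=-PS_n(j\omega)\,d_0$, where $Y_n$ denotes the $n$-th harmonic component of $y$. Because the disturbance enters at the plant input and $G$ is a stable LTI block, the steady-state output is obtained by superposition on each harmonic, so
\begin{equation*}
Y_n(j\omega)=G(jn\omega)\bigl(U_n(j\omega)+D_n(j\omega)\bigr),
\end{equation*}
where $U_n$ and $D_n$ are the $n$-th Fourier coefficients of $u$ and $d$, respectively. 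Since $d(t)=d_0\sin(\omega t)$ contributes only at $n=1$, this reduces to $U_1(j\omega)+d_0$ for $n=1$ and to $U_n(j\omega)$ for $n>1$.

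Finally I would equate the two expressions for $Y_n$ and divide by $d_0$. For $n=1$ this gives $-PS_1(j\omega)=G(j\omega)\bigl(CS_{d_1}(j\omega)+1\bigr)$, which rearranges to the first branch of~\eqref{E-338}; for $n>1$ it gives $-PS_n(j\omega)=G(jn\omega)\,CS_{d_n}(j\omega)$, which rearranges to the second branch. I do not anticipate any serious obstacle: the only points requiring care are the sign convention at the summing junction and the justification that the LTI superposition applies termwise to the Fourier series, which is standard once stability of $G$ and periodicity of the steady-state signals (established in Section~\ref{sec:322}) are in place.
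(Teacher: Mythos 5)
Your proposal is correct and is precisely the argument the paper leaves implicit: the corollary is stated without proof, and the intended justification is exactly the harmonic-by-harmonic LTI relation $Y_n=G(jn\omega)(U_n+D_n)$ through the stable plant (with $d$ entering at the plant input, so $D_n=0$ for $n>1$) combined with $e=-y$ when $r\equiv0$, which gives both branches of \eqref{E-338} after dividing by the first harmonic of $d$. Your sign bookkeeping matches the paper's conventions (note $PS_n$ is defined from $d$ to $e=-y$, which is why the linear-case check $CS_{d}=-GC/(1+GC)$ comes out consistent), so no gap remains.
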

\begin{definition}
The families of complex valued functions $PS_n(j\omega)$ and $CS_{d_n}(j\omega)$, $n=1,2,...$, are the process-sensitivity and the control sensitivity due to the presence of the disturbance of the reset control system (\ref{E-303}), respectively. 
\end{definition}
\subsection{Pseudo-Sensitivities For Reset Control Systems}\label{sec:3.4}
The analysis of the error signal $e$ and of the control input $u$ is one of the main factors while designing a controller. In linear systems this analysis is performed using the closed-loop transfer functions \cite{schmidt2014design}. As discussed in Section \ref{sec:1}, although reset control systems may be analyzed using the DF of the reset controller in the closed-loop sensitivity equations, this yields an approximation which is not precise due to the existence of high order harmonics. On the other hand, it is not trivial to analyze reset controllers considering all harmonics. In order to perform the analysis of reset control systems straightforwardly we combine all harmonics into one frequency function for each closed-loop frequency response. In the literature, there are several studies about definition of Bode plot for non-linear systems \cite{beker2002analysis,pavlov2006frequency}. However, all of these focus only on the gain of the system. In the following, pseudo-sensitivities, which have both gain and phase components, are defined.

It has been proven that the error and the control input signals of the reset control system (\ref{E-303}) are periodic with period $\frac{2\pi}{\omega}$ (Fig.~\ref{F-33}). 
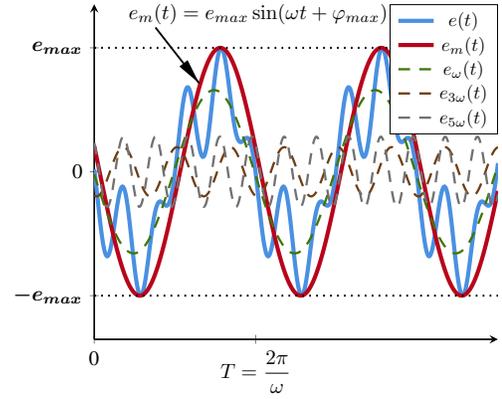
\begin{figure}
	\begin{center}
		\resizebox{.75\hsize}{!}{
			\begin{tikzpicture}
			\begin{axis}[
			axis lines = left,
			axis line style = thick,
			ymin=-0.18,
			ymax=0.18,
			ytick={-0.1331,0,0.1331},
			xtick={0.00693,0.0808},
			yticklabels={$\boldsymbol{-e_{max}}$,0,$\boldsymbol{e_{max}}$},
			xticklabels={0,$T=\dfrac{2\pi}{\omega}$},
			legend style={at={(1,1)},nodes={scale=0.9}}
			]
			\addplot [
			domain=0.006930:0.191605, 
			samples=500, 
			line width=2,
			color={rgb, 255:red, 74; green, 144; blue, 226 },
			]
			{0.0377*sin((426.5*x-0.78)*180/3.14)+0.02645*sin((255.9*x-3.606)*180/3.14)-0.0877*sin((85.3*x-0.56)*180/3.14)};
			\addlegendentry{$e(t)$}
			
			\addplot [
			domain=0.006930:0.191605,  
			samples=500, 
			line width=2,
			color={rgb, 255:red, 184; green, 4; blue, 27 },
			]
			{-0.1331*sin((85.3*x-0.7996)*180/3.14)};
			\addlegendentry{$e_m(t)$}
			\addplot [
			domain=0.006930:0.191605, 
			samples=500, 
			line width=1.1,
			dash pattern={on 5.63pt off 4.5pt},
			color={rgb, 255:red, 65; green, 117; blue, 5 },
			]
			{-0.0877*sin((85.3*x-0.56)*180/3.14)};
			\addlegendentry{$e_\omega(t)$}
			\addplot [
			domain=0.006930:0.191605, 
			samples=500, 
			line width=1.1,
			dash pattern={on 5.63pt off 4.5pt},
			color={rgb, 255:red, 111; green, 62; blue, 20 },
			]
			{0.02645*sin((255.9*x-3.606)*180/3.14)};
			\addlegendentry{$e_{3\omega}(t)$}
			\addplot [
			domain=0.006930:0.191605, 
			samples=500, 
			line width=1.1,
			dash pattern={on 5.63pt off 4.5pt},
			color={rgb, 255:red, 114; green, 113; blue, 113 },
			]
			{0.0377*sin((426.5*x-0.78)*180/3.14)};
			\addlegendentry{$e_{5\omega}(t)$}
			\addplot [
			domain=0.006930:0.15467, 
			samples=500, 
			line width=1,
			dash pattern={on 0.84pt off 2.51pt},
			color={rgb, 255:red, 0; green, 0; blue, 0 },
			]
			{0.1331};
			\addplot [
			domain=0.006930:0.191605, 
			samples=500, 
			line width=1,
			dash pattern={on 0.84pt off 2.51pt},
			color={rgb, 255:red, 0; green, 0; blue, 0 },
			]
			{-0.1331};
			\draw (75,350) node [scale=1]  {$e_m(t)=e_{max}\sin(\omega t+\varphi_{max})$};
			\draw [color={rgb, 255:red, 0; green, 0; blue, 0 }  ,draw opacity=1 ][line width=0.9]    (28,335) -- (48,275) ;
			\draw [shift={(48,275)}, rotate = 128] [fill={rgb, 255:red, 0; green, 0; blue, 0 }  ,fill opacity=1 ][line width=0.9]  [draw opacity=0] (11.61,-5.58) -- (0,0) -- (11.61,5.58) -- cycle    ;
			\end{axis} 
			\end{tikzpicture}}
		\caption{The error signal $e(t)$ with its 1st, 3rd, and 5th harmonics. $e_m(t)$ is fitted to $e(t)$ and it is an indicator of the maximum error of the system.}  
		\label{F-33}                                 
	\end{center}                                 
\end{figure}
We define the pseudo-sensitivity as the ratio of the maximum error of the reset control system (\ref{E-303}), for $r(t)=r_0\sin(\omega t)$ and $d(t)=0$, for all $t\geq0$, to the amplitude of the reference at each frequency.
\begin{definition}\label{d1}
The Pseudo-sensitivity $S_\infty$ is, for all $\omega\in\mathbb{R}^{+}$,
	$$S_\infty(j\omega)=e_{\max}(\omega)e^{j\varphi_{\max}(\omega)},$$
	where $\varphi_{\max}=\frac{\pi}{2}-\omega t_{\max}$,	
\setlength{\arraycolsep}{0.0em}
\begin{eqnarray}
e_{\max}(\omega)&{=}&\dfrac{\underset{t_{s}\leq t\leq t_{s+2q}}{\max}(r(t)-y(t))}{r_0}\nonumber\\
&{=}&\sin(\omega t_{\max})-\dfrac{1}{r_0}\bar{C}\bar{x}(t_{\max})\nonumber,\\
t_{\max}&{\in}&\{t_{ext} \mid  \dot{e}(t_{ext})=0,\ t_{s}\leq t_{ext} \leq t_{s+2q}\}\nonumber\\
&{\cup}&\{t_{s+i} \mid i\in\mathbb{Z},\ 0\leq i\leq 2q\}.\nonumber
\end{eqnarray}
\setlength{\arraycolsep}{5pt}\end{definition}
Using (\ref{E-303}) and (\ref{E-316}) $t_{ext}$ can be obtained from
\setlength{\arraycolsep}{0.0em}
\begin{eqnarray}\label{E-340}
\dot{e}(t_{ext})&{=}&0\Rightarrow \omega\cos(\omega t_{ext})-\bar{C}\bar{B}\begin{bmatrix}1\\0\end{bmatrix}\sin(\omega t_{ext})\nonumber\\
&{=}&\bar{C}\bar{A}\Big(e^{\bar{A}(t_{ext}-t_{s+i})}\big(\xi_{s+{i}}+\psi(t_{s+{i}})\big)-\psi(t_{ext})\Big),\nonumber\\
t_{ext}&{\in}&(t_{s+{i}},t_{s+i+1}],\ i=\{i\in\mathbb{Z}^+\mid i<2q\}.
\end{eqnarray}
\setlength{\arraycolsep}{5pt}Similarly, the pseudo-process sensitivity is defined as the ratio of the maximum error signal of the reset control system (\ref{E-303}) for $d(t)=d_0\sin(\omega t)$ and $r(t)=0$, for all $t\geq0$, to the amplitude of the disturbance at each frequency.
\begin{definition}\label{d2}
The Pseudo-process sensitivity $PS_\infty$ is, for all $\omega\in\mathbb{R}^{+}$,
	$$PS_\infty(j\omega)=e_{\max_d}(\omega)e^{j\varphi_{\max_d}(\omega)},$$
	where $\varphi_{\max_d}=\frac{\pi}{2}-\omega t_{\max_d},$
	\setlength{\arraycolsep}{0.0em}
\begin{eqnarray}
e_{\max_d}(\omega)&{=}&\dfrac{\underset{t^\prime_{s}\leq t\leq t^\prime_{s+2q}}{\max}-y(t)}{d_0}=-\dfrac{1}{d_0}\bar{C}\bar{x}(t_{\max_d}),\nonumber\\
t_{\max_d}&{\in}&\{t_{ext_d}\}\cup\{t^\prime_{s+i}, i\in\mathbb{Z}, 0\leq i\leq2q^\prime\}.\nonumber
\end{eqnarray}
\setlength{\arraycolsep}{5pt}\end{definition}
In a similar way $t_{ext_d}$ is obtained from 
\setlength{\arraycolsep}{0.0em}
\begin{eqnarray}\label{E-341}
\dot{e}(t_{ext_d})&{=}&\dfrac{1}{d_0}\bar{C}\dot{\bar{x}}(t_{ext_d})=0\Rightarrow \bar{C}\bar{B}\begin{bmatrix}0\\1\end{bmatrix}\sin(\omega t_{ext_d})\nonumber\\
&{=}&\resizebox{0.845\hsize}{!}{$\bar{C}\bar{A}\Big(\psi_\mathcal{D}(t_{ext_d})-e^{\bar{A}(t_{ext_d}-t^\prime_{s+i})}\big(\xi^\prime_{s+{i}}+\psi_\mathcal{D}(t^\prime_{s+{i}})\big)\Big),$}\nonumber\\
t_{ext_d}&{\in}&(t^\prime_{s+{i}},t^\prime_{s+i+1}],\ i=\{i\in\mathbb{Z}^+\mid i<2q^\prime\}.
\end{eqnarray}
\setlength{\arraycolsep}{5pt}In order to analyze the noise rejection capability of the system the pseudo-complementary sensitivity is defined as the ratio of the maximum output of the reset control system (\ref{E-303}) for $r(t)=r_0\sin(\omega t)$ and $d=0$, for all $t\geq0$, to the amplitude of the reference at each frequency.
\begin{definition}\label{d2-3-1}
The Pseudo-complementary sensitivity $T_\infty$ is, for all $\omega\in\mathbb{R}^{+}$, 
	$$T_\infty(j\omega)=y_{\max}(\omega)e^{j\varphi_{\max_y}(\omega)},$$
	where $\varphi_{\max_y}=\frac{\pi}{2}-\omega t_{\max_y},$
	\setlength{\arraycolsep}{0.0em}
        \begin{eqnarray}	
	y_{\max}(\omega)&{=}&\dfrac{\underset{t_{s}\leq t\leq t_{s+2q}}{\max}y(t)}{r_0}=\dfrac{1}{r_0}\bar{C}\bar{x}(t_{\max_y}),\nonumber\\
	t_{\max_y}&{\in}&\{t_{ext_y}\}\cup\{t_{s+i}, i\in\mathbb{Z}, 0\leq i\leq2q\}.\nonumber
     \end{eqnarray}
    \setlength{\arraycolsep}{5pt}\end{definition}
Similarly, 
\setlength{\arraycolsep}{0.0em}
        \begin{eqnarray}\label{E-342}
\dot{y}(t_{ext_y})&{=}&\dfrac{1}{r_0}\bar{C}\dot{\bar{x}}(t_{ext_y})=0\Rightarrow \bar{C}\bar{B}\begin{bmatrix}1\\0\end{bmatrix}\sin(\omega t_{ext_y})\nonumber\\
&{=}&\bar{C}\bar{A}\Big(\psi(t_{ext_y})-e^{\bar{A}(t_{ext_y}-t_{s+i})}\big(\xi_{s+{i}}+\psi(t_{s+{i}})\big)\Big),\nonumber\\
t_{ext_y}&{\in}&(t_{s+{i}},t_{s+i+1}],\ i=\{i\in\mathbb{Z}^+\mid i<2q\}.
\end{eqnarray}
    \setlength{\arraycolsep}{5pt}The pseudo-control sensitivity is defined as the ratio of the maximum control input signal of the reset control system (\ref{E-303}) for $r(t)=r_0\sin(\omega t)$ and $d=0$, for all $t\geq0$, to the amplitude of the reference at each frequency.
\begin{definition}\label{d3}
The Pseudo-control sensitivity $CS_\infty$ is, for all $\omega\in\mathbb{R}^{+}$, 	
	$$CS_\infty(j\omega)=u_{\max}(\omega)e^{j\varphi_{\max_u}(\omega)},$$
where $\varphi_{\max_u}=\frac{\pi}{2}-\omega t_{\max_u},$
\setlength{\arraycolsep}{0.0em}
        \begin{eqnarray}
u_{\max}(\omega)&{=}&\dfrac{\underset{t_{s}\leq t\leq t_{s+2q}}{\max}u(t)}{r_0}\nonumber\\
&{=}&\dfrac{1}{r_0}\bar{C}_u\bar{x}(t_{\max_u})+\bar{D}_u\sin(\omega t_{\max_u}),\nonumber\\
t_{\max_u}&{\in}&\{t_{ext_u}\}\cup\{t^\prime_{s+i}, i\in\mathbb{Z}, 0\leq i\leq2q^\prime\}.\nonumber		
\end{eqnarray}
\setlength{\arraycolsep}{5pt}\end{definition}
In addition, $t_{ext_u}$ can be found utilizing the relation
\setlength{\arraycolsep}{0.0em}
        \begin{eqnarray}\label{E-343}
\dot{u}(t_{ext_u})&{=}&0\Rightarrow \bar{D}_u\omega\cos(\omega t_{ext_u})+\bar{C}_u\bar{B}\begin{bmatrix}1\\0\end{bmatrix}\sin(\omega t_{ext_u})\nonumber\\
&{=}&\bar{C}_u\bar{A}\Big(\psi(t_{ext_u})-e^{\bar{A}(t_{ext_u}-t_{s+i})}\big(\xi_{s+{i}}+\psi(t_{s+{i}})\big)\Big),\nonumber\\
t_{ext_u}&{\in}&(t_{s+{i}},t_{s+i+1}],\ i=\{i\in\mathbb{Z}^+\mid i<2q\}.
\end{eqnarray}
\setlength{\arraycolsep}{5pt}In linear control theory, the transfer function of the closed-loop system from the disturbance input $d$ to the control input $u$ is equal to minus the transfer function from the reference signal $r$ to the output signal $y$. However, this relation does not hold for the pseudo-sensitivities due to the non-linear nature of the controller. Hence, the pseudo-control sensitivity of the disturbance is defined as the ratio of the maximum amplitude of the control input, for $d(t)=d_0\sin(\omega t)$ and $r=0$ for all $t\geq0$, to the amplitude of the disturbance, at each frequency.  
\begin{definition}\label{d4}
The Pseudo-control sensitivity of the disturbance $CS_{d_\infty}$ is, for all $\omega\in\mathbb{R}^{+}$,	
	$$CS_{d_\infty}(j\omega)=u_{\max_d}(\omega)e^{j\varphi_{\max_{u_d}}(\omega)},$$
where $\varphi_{\max_{u_d}}=\frac{\pi}{2}-\omega t_{\max_{u_d}},$
\setlength{\arraycolsep}{0.0em}
        \begin{eqnarray}		
	u_{\max_d}(\omega)&{=}&\dfrac{\underset{t^\prime_{s}\leq t\leq t^\prime_{s+2q}}{\max}u(t)}{d_0}=\dfrac{1}{d_0}\bar{C}_u\bar{x}(t_{\max_{u_d}}),\nonumber\\
	t_{\max_{u_d}}&{\in}&\{t_{ext_{u_d}}\}\cup\{t^\prime_{s+i}, i\in\mathbb{Z}, 0\leq i\leq2q^\prime\}.
\end{eqnarray}
\setlength{\arraycolsep}{5pt}\end{definition}
Finally, $t_{ext_{u_d}}$ is calculated through the relation 
\setlength{\arraycolsep}{0.0em}
        \begin{eqnarray}\label{E-344}
\dot{u}(t_{ext_{u_d}})&{=}&0\Rightarrow\bar{C}_u\bar{B}\begin{bmatrix}0\\1\end{bmatrix}\sin(\omega t_{ext_{u_d}})=\bar{C}_u\bar{A}\Big(\psi_\mathcal{D}(t_{ext_{u_d}})\nonumber\\
&&{-}\:e^{\bar{A}(t_{ext_u}-t^\prime_{s+i})}\big(\xi^\prime_{s+{i}}+\psi_\mathcal{D}(t^\prime_{s+{i}})\big)\Big),\nonumber\\\
t_{ext_{u_d}}&{\in}&(t^\prime_{s+{i}},t^\prime_{s+i+1}],\ i=\{i\in\mathbb{Z}^+\mid i<2q^\prime\}.
\end{eqnarray}
\setlength{\arraycolsep}{5pt}We conclude this series of definitions with the following result.
\begin{corollary}\label{coro5}
Consider the reset control system (\ref{E-303}). The pseudo-sensitivities and the closed-loop HOSIDFs are independent of the amplitude of the harmonic excitation input.
\end{corollary}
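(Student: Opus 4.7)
The plan is to exploit the scaling homogeneity of the closed-loop reset system under amplitude scaling of a sinusoidal input, starting from Remark~\ref{RR2} and equation~(\ref{E-316}).

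First, I would note that by Remark~\ref{RR2} (which follows directly from~(\ref{E-324})--(\ref{E-325}) after dividing through by $r_0$, observing that $\bar{C}_{e_R}$ and $D_{e_R}$ act linearly and $\psi(t)$ is independent of $r_0$), the steady-state reset sequence $\{t_{s+i}\}$ and the normalized reset values $\xi_{s+i}$ depend only on $\omega$, not on $r_0$. Consequently, from~(\ref{E-316}) the steady-state state trajectory satisfies $\bar{x}(t)=r_0\,\bar{x}_\sharp(t,\omega)$ for some vector-valued function $\bar{x}_\sharp(\cdot,\omega)$ that is independent of $r_0$. The same argument, using~(\ref{E-335}) in place of~(\ref{E-324}) and $\psi_\mathcal{D}$ in place of $\psi$, gives $\bar{x}(t)=d_0\,\bar{x}_\sharp^\mathcal{D}(t,\omega)$ for the disturbance case.

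Next, I would verify the independence of the closed-loop HOSIDFs by inspection of the formulas. For $T_n(j\omega)$, formula~(\ref{E-3278})--(\ref{E-328}) expresses $T_n$ entirely in terms of $t_{s+i}$, $\xi_{s+i}$, $\psi$, and the system matrices; none of these depend on $r_0$. The same is true for $PS_n(j\omega)$ via~(\ref{E-337}), while $S_n=\delta_{n,1}-T_n$ and $CS_n=T_n/G(jn\omega)$ (and the analogous disturbance relations) inherit the independence automatically from Corollaries~\ref{coro1}, \ref{coro11}, and~\ref{coro111}.

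For the pseudo-sensitivities I would argue by the same homogeneity principle. Since $\bar{x}(t)=r_0\bar{x}_\sharp(t,\omega)$, the extremum-defining equation~(\ref{E-340}) becomes, after dividing by $r_0$, an equation only in $\omega$, $\bar{A}$, $\bar{C}$, $\xi_{s+i}$, and $\psi$; hence its solutions $t_{\max}$ are independent of $r_0$. Substituting into Definition~\ref{d1} gives $e_{\max}(\omega)=\sin(\omega t_{\max})-\bar{C}\bar{x}_\sharp(t_{\max},\omega)$, which depends only on $\omega$, so $S_\infty(j\omega)$ is amplitude-independent. Exactly the same reasoning, applied term by term through Definitions~\ref{d2}--\ref{d4} and equations~(\ref{E-341})--(\ref{E-344}), handles $PS_\infty$, $T_\infty$, $CS_\infty$, and $CS_{d_\infty}$.

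I do not expect a serious obstacle here: the whole argument is a dimensional/scaling observation, so the main care is only bookkeeping to confirm that every formula defining the quantities in question is homogeneous of degree one in the relevant amplitude, which is then cancelled by the normalization in the corresponding definition.
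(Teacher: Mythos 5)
Your proof is correct and follows exactly the reasoning the paper intends (the paper states this corollary without an explicit proof, relying on Remark~\ref{RR2} and the structure of~(\ref{E-316}), (\ref{E-324})--(\ref{E-325})): the closed-loop dynamics, the jump map, and the reset condition are all homogeneous of degree one in the state--input pair, so by uniqueness of the steady state the reset instants and normalized reset values are amplitude-independent, and every HOSIDF and pseudo-sensitivity is a degree-zero ratio. No gaps.
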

\subsection{High Frequency Analysis}\label{sec:3.5}
The evaluation of the sensitivities and the pseudo-sensitivities may be computationally expensive, particularly at high frequencies. In order to simplify these relations the reset instants at high frequencies can be approximated. For sufficiently large frequencies, since the open-loop transfer function is strictly proper and (\ref{E-327}) and~(\ref{E-314}) hold, one has   
\begin{equation}\label{E-3455}
\lim_{\omega \to \infty}\dfrac{\underset{t_s\leq t\leq t_{s+q}}{\max}|e_R(t)-e_{R_1}(t)|}{\underset{t_s\leq t\leq t_{s+q}}{\max}|e_{R_1}(t)|}=0,
\end{equation} 
where $e_{R_1}(t)=R_1\sin(\omega t+\varphi_{e_{R_1}})$ is the first harmonic of $e_R(t)$ (see Appendix~\ref{APC}). Thus,
\begin{equation}\label{E-345}
\begin{array}{*{35}{c}}
\forall \epsilon\in(0,1)\ \exists\ \omega_h\in\mathbb{R}^{+} \mid \forall\ \omega\geq\omega_h:\\
\dfrac{\underset{t_s\leq t\leq t_{s+q}}{\max}|e_R(t)-e_{R_1}(t)|}{\underset{t_s\leq t\leq t_{s+q}}{\max}|e_{R_1}(t)|}\leq \epsilon.
\end{array}
\end{equation} 
Therefore, if $\epsilon$ is chosen sufficiently small, the steady-state reset instants for $\omega\geq\omega_h$ can be approximated as
\begin{equation}\label{E-346}
t_k\approx\dfrac{k\pi-\varphi_{e_{R_1}}}{\omega},
\end{equation}
in which 
\begin{equation}\label{E-347}
\varphi_{e_{R_1}}\approx\phase{\dfrac{C_{\mathfrak{L}_1}(j\omega)}{1+C_{\mathfrak{L}_1}C_{\mathcal{R}_{DF}}C_{\mathfrak{L}_2}G(j\omega)}},
\end{equation}
\newline where $C_{\mathcal{R}_{DF}}$ is the DF of $C_{\mathcal{R}}$ obtained using (\ref{E-203}). 
\begin{remark}
{\rm The accuracy of the approximation depends on the magnitude of $\epsilon$. The smaller the value of $\epsilon$, the more accurate the approximation is.}
\end{remark}
Let $\omega\geq\omega_h$ and $r(t)=\sin(\omega t-\varphi_{e_{R_1}})$. Then (\ref{E-303}) can be re-written as
\begin{equation}\label{E-348}
\left\{
\begin{aligned}
\dot{\bar{x}}(t)&=\bar{A}\bar{x}(t)+\bar{B}\sin(\omega t-\varphi_{e_{R_1}}), & t\neq\frac{k\pi}{\omega}, \\
\bar{x}(t^+)&=\bar{A}_\rho\bar{x}(t), & t=\frac{k\pi}{\omega},\\
y(t)&=\bar{C}\bar{x}(t).
\end{aligned}
\right.
\end{equation}
Thus, $\psi(t)$ in (\ref{E-314}) is given by
\begin{equation}\label{E-349}
\psi_\varphi(t)=(\omega I\cos(\omega t-\varphi_{e_{R_1}})+\bar{A}\sin(\omega t-\varphi_{e_{R_1}}))\mathcal{F}.
\end{equation}
The steady-state reset instants are $\{\frac{(2k)\pi}{\omega},\frac{(2k+1)\pi}{\omega}\}$, $k\in\mathbb{N}$. Therefore,
\setlength{\arraycolsep}{0.0em}
        \begin{eqnarray}\label{E-350}		
	\xi_{s}&{=}&-\xi_{s+1}=\dfrac{-\bar{A}_\rho(I+e^{\frac{\bar{A}\pi}{\omega}})\psi_\varphi(0)}{I+\bar{A}_\rho e^{\frac{\bar{A}\pi}{\omega}}}\Rightarrow\mathcal{R}(1,n,\omega)\nonumber\\
	&{=}&-(e^{\frac{\bar{A}\pi}{\omega}}+I)(\xi_{s}+\psi_\varphi(0)).
\end{eqnarray}
\setlength{\arraycolsep}{5pt}Hence, for $\omega\geq\omega_h$, $T_n(j\omega)$ for the reset control system (\ref{E-303}) are approximated by  
	\begin{equation}\label{E-351}
	\begin{cases}
	\bar{C}(\bar{A}-j\omega I)^{-1}\theta_{\varphi}(\omega)-\bar{C}(j\omega I+\bar{A})\mathcal{F}, &  n=1,\\
	\bar{C}(\bar{A}-jn\omega I)^{-1}\theta_{\varphi}(\omega), & n>1\text{ odd,}\\
	0, & n\text{ even,}
	\end{cases}
	\end{equation}  
in which
\setlength{\arraycolsep}{0.0em}
        \begin{eqnarray}\label{E-354}		
	\theta_{\varphi}(\omega)&{=}&\dfrac{-2j\omega e^{j\varphi_{e_{R_1}}}}{\pi}(I+e^{\frac{\bar{A}\pi}{\omega}})\Bigg(I\nonumber\\
	&&{-}\:(I+\bar{A}_\rho e^{\frac{\bar{A}\pi}{\omega}})^{-1}(\bar{A}_\rho(I+e^{\frac{\bar{A}\pi}{\omega}}))\Bigg)\psi_\varphi(0).
\end{eqnarray}
\setlength{\arraycolsep}{5pt}A similar analysis holds for the steady-state response of the reset control system (\ref{E-303}) to a disturbance input. Similarly, let $d=\sin(\omega t-\varphi_{e_{d_1}})$. Then
\begin{equation}\label{E-355}
\varphi_{e_{d_1}}\approx\phase{\dfrac{-C_{\mathfrak{L}_1}(j\omega)G(j\omega)}{1+C_{\mathfrak{L}_1}C_{\mathcal{R}_{DF}}C_{\mathfrak{L}_2}G(j\omega)}},
\end{equation}
and $\psi_\mathcal{D}$ in (\ref{E-336}) is given by 
\begin{equation}\label{E-356}
\psi_{\mathcal{D}_\varphi}(t)=(\omega I\cos(\omega t-\varphi_{e_{d_1}})+A\sin(\omega t-\varphi_{e_{d_1}}))\mathcal{F}_\mathcal{D}.
\end{equation} 
Similarly,
\setlength{\arraycolsep}{0.0em}
        \begin{eqnarray}\label{E-357}		
	\xi^\prime_{s}&{=}&-\xi^\prime_{s+1}=\dfrac{-\bar{A}_\rho(I+e^{\frac{\bar{A}\pi}{\omega}})\psi_{\mathcal{D}_\varphi}(0)}{I+\bar{A}_\rho e^{\frac{\bar{A}\pi}{\omega}}}\Rightarrow\mathcal{R}_\mathcal{D}(1,n,\omega)\nonumber\\
	&{=}&-(e^{\frac{\bar{A}\pi}{\omega}}+I)(\xi^\prime_{s}+\psi_{\mathcal{D}_\varphi}(0)).
\end{eqnarray}
\setlength{\arraycolsep}{5pt}Therefore, for $\omega$ sufficiently large $PS_n(j\omega)$ for the reset control system (\ref{E-303}) are approximated as
\begin{equation}\label{E-358}
  \begin{cases}
	\bar{C}(j\omega I-\bar{A})^{-1}\theta_{\mathcal{D}_\varphi}(\omega)+\bar{C}(j\omega I+\bar{A})\mathcal{F}_\mathcal{D}, &  n=1,\\
	\bar{C}(jn\omega I-\bar{A})^{-1}\theta_{\mathcal{D}_\varphi}(\omega), & n>1\text{ odd,}\\
	0, & n \text{ even,}
	\end{cases}
\end{equation}
in which
\setlength{\arraycolsep}{0.0em}
        \begin{eqnarray}\label{E-360}		
	\theta_{\mathcal{D}_\varphi}(\omega)&{=}&\dfrac{-2j\omega e^{j\varphi_{e_{d_1}}}}{\pi}(I+e^{\frac{\bar{A}\pi}{\omega}})\Bigg(I\nonumber\\
	&&{-}\:(I+\bar{A}_\rho e^{\frac{\bar{A}\pi}{\omega}})^{-1}(\bar{A}_\rho(I+e^{\frac{\bar{A}\pi}{\omega}}))\Bigg)\psi_{\mathcal{D}_\varphi}(0).
\end{eqnarray}
\setlength{\arraycolsep}{5pt}
\begin{remark}\label{tool}
{\rm The presented results have been integrated into an open source toolbox, which has been developed using Matlab, see \cite{Toolbox}. This toolbox facilitates the analysis and design for reset control systems.} 
\end{remark}
\section{Periodic inputs}\label{sec:4}
In Section~\ref{sec:3} a notion of frequency response and pseudo-sensitivities for reset control systems have been defined. These serve as graphical tools for performance analysis of reset controllers. The pseudo-sensitivities determine how a system amplifies harmonic inputs at various frequencies, information which is essential for control designers. However, this information is obtained for a single harmonic excitation and since the superposition principle does not hold, it provides only an approximation in the case of multi-harmonics excitation. In this section the steady-state performance in the presence of multi-harmonics excitation and periodic inputs is investigated. This is reasonable since most references and disturbances are periodic \cite{pavlov2013steady,huang2002application}. 
\newline For ease of notation let $\text{lcm}\left(\dfrac{a_1}{b_1},\dfrac{a_2}{b_2},...,\dfrac{a_i}{b_i}\right)$ denote the least common multiple of $\dfrac{a_1}{b_1},\dfrac{a_2}{b_2},...,\text{ and }\dfrac{a_i}{b_i}$ and $\text{gcd}\left(\dfrac{a_1}{b_1},\dfrac{a_2}{b_2},...,\dfrac{a_i}{b_i}\right)$ denote the greatest common divisor of $\dfrac{a_1}{b_1},\dfrac{a_2}{b_2},...,\text{ and }\dfrac{a_i}{b_i}$ in which $a_i\in\mathbb{N}$ and $b_i\in\mathbb{N}$.  
\begin{theorem}\label{thm3}
Consider the reset control system (\ref{E-303}). Suppose the $H_\beta$ condition and Assumption \ref{AS1} hold. Then for any periodic excitation of the form
\begin{equation}\label{E-401}
w(t)=w_0\sin(\dfrac{2\pi}{T_0} t)+w_1\sin(\dfrac{2\pi}{T_1} t)+...+w_N\sin(\dfrac{2\pi}{T_N} t),
\end{equation}
with $w_i=[r_i,\ d_i]^T$, the reset control system (\ref{E-303}) has a periodic steady-state solution of the form
\setlength{\arraycolsep}{0.0em}
        \begin{eqnarray}		
	\bar{x}(t)&{=}&a_0+\mathlarger{\sum}\limits ^{\infty}_{n=1}a_n\cos(n\omega_Mt)+b_n\sin(n\omega_Mt),\nonumber\\
	\omega_M&{=}&2\pi\times \text{gcd}(\dfrac{1}{T_0},\dfrac{1}{T_1},...,\dfrac{1}{T_N}).\nonumber
\end{eqnarray}
\setlength{\arraycolsep}{5pt}\end{theorem}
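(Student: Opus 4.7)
My plan is to reduce the problem to the machinery already established in Lemma~\ref{L2} and Theorem~\ref{T1}, and then exploit the uniqueness of the steady-state solution of a uniformly exponentially convergent system under a periodic Bohl input. First I would observe that the input~(\ref{E-401}) is a finite sum of sinusoids, hence a Bohl function, and that it is periodic with period
\begin{equation*}
T_M = \frac{2\pi}{\omega_M} = \operatorname{lcm}(T_0,T_1,\ldots,T_N),
\end{equation*}
because $\omega_M = 2\pi\,\gcd(1/T_0,\ldots,1/T_N)$ divides every individual frequency $2\pi/T_i$, so every component $\sin(2\pi t/T_i)$ has $T_M$ as a (integer multiple) period.

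Next, since the $H_\beta$ condition holds and $w$ is Bohl, Remark~\ref{R0} supplies a unique well-defined forward solution from any initial condition, and Lemma~\ref{L2} gives uniform exponential convergence of the reset control system~(\ref{E-303}). By the convergent-systems theory of \cite{pavlov2007frequency} invoked in the proof of Theorem~\ref{T1}, this guarantees the existence of a (unique) bounded steady-state trajectory $\bar{x}(t)$ defined for all $t\in\mathbb{R}$ toward which every solution converges exponentially.

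The key remaining step is to show that $\bar{x}(t)$ inherits the period $T_M$ of the input. For this I would argue by time-shift invariance: because $w(t+T_M)=w(t)$ and the closed-loop equations~(\ref{E-303}) are time-invariant, the shifted trajectory $\tilde{x}(t):=\bar{x}(t+T_M)$ is itself a bounded solution of~(\ref{E-303}) driven by the same input $w$ (its reset sequence is simply $\{t_k-T_M\}$, and the jump/flow rules are preserved). Applying the uniform exponential convergence estimate of Lemma~\ref{L2} to the two bounded solutions $\bar{x}$ and $\tilde{x}$ yields, analogously to~(\ref{E-309999}),
\begin{equation*}
\|\bar{x}(t) - \tilde{x}(t)\|_P^2 \leq \mathcal{K} e^{-\alpha_m t}
\end{equation*}
for all $t$ sufficiently large, and since both trajectories are defined on all of $\mathbb{R}$ and bounded, letting the initial time tend to $-\infty$ forces $\bar{x}(t) = \tilde{x}(t) = \bar{x}(t+T_M)$. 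Once periodicity with period $T_M$ is established, the Fourier expansion of $\bar{x}(t)$ about the fundamental frequency $\omega_M = 2\pi/T_M$ immediately yields the stated representation.

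The main obstacle I foresee is the time-shift step: one must be careful that the shifted reset sequence is consistent with the reset condition $e_R=0$ and with the well-posedness lower bound $\lambda$ on inter-reset intervals, so that $\tilde{x}(t)$ is genuinely another solution of~(\ref{E-303}) rather than an artifact of the shift. This is handled by exactly the same manipulation used in the proof of Corollary~\ref{co2} to propagate the periodicity of the reset sequence; the only adaptation is that the half-period $\pi/\omega$ there is replaced by the full period $T_M$ here, because for multi-harmonic inputs the sign-reversal symmetry $w(t+\pi/\omega)=-w(t)$ is, in general, lost.
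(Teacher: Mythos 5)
Your proof is correct, and it reaches the conclusion by a more abstract route than the paper. The paper's own proof is constructive: it writes the steady-state trajectory explicitly on each inter-reset interval using $\psi_M(t)=\sum_i\psi_i(t)$, invokes Lemma~\ref{L2} to discard the initial-condition-dependent terms (arriving at~(\ref{E-403})), and then observes that the reset condition~(\ref{E-404}) is invariant under the shift $t\mapsto t+2\pi/\omega_M$, so the reset sequence --- and hence the explicit formula --- is $2\pi/\omega_M$-periodic. You instead argue at the level of solutions: the input is a Bohl function with period $T_M=\text{lcm}(T_0,\dots,T_N)=2\pi/\omega_M$; Lemma~\ref{L2} gives uniform exponential convergence and therefore a unique bounded steady-state trajectory; and time-invariance of~(\ref{E-303}) makes the shifted trajectory $\bar x(\cdot+T_M)$ another bounded steady-state solution for the same input, which by uniqueness (pushing the initial time to $-\infty$, exactly as in the uniqueness step of Theorem~\ref{T1}) must coincide with $\bar x$. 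Both arguments rest on the same two pillars --- Lemma~\ref{L2} and shift invariance of the reset condition --- but yours is the standard convergent-systems argument and avoids the bookkeeping with the reset-instant formula, while the paper's version buys, as a by-product, the explicit expression~(\ref{E-403}) for $\bar x$ that the computational framework of Section~\ref{sec:3} relies on. Your attention to the shifted reset sequence (checking that $\{t_k-T_M\}$ is consistent with $e_R=0$ and with the well-posedness lower bound) addresses the one point where the shift argument could silently fail, and your remark that the sign-reversal symmetry exploited in Corollary~\ref{co2} is lost for multi-harmonic inputs --- forcing the full period and admitting even harmonics and the constant term $a_0$ --- is exactly the right adaptation.
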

\begin{proof}
Let $t_{s_{M+i}}$ be the steady-state reset instants of the reset control system (\ref{E-303}) for $w$ is given in (\ref{E-401}). By (\ref{E-303}) the steady-state solution for $w$ as in (\ref{E-401}) is given by
\begin{equation}\label{E-402}
\resizebox{\hsize}{!}{$
\bar{x}(t)=e^{\bar{A}(t-t_{s_M})}\Big(\xi_{s_M}+\psi_M(t_{s_M})\Big)-\psi_M(t),\ t\in(t_{s_M},t_{s_M+1}],
$}
\end{equation} 	
where
\begin{align*}
\psi_M(t) &=\psi_0(t)+\psi_1(t)+...+\psi_N(t),\\
\psi_i(t) &=(\omega_i I\cos(\omega_i t)+\bar{A}\sin(\omega_i t))\mathcal{F}_i,\\ 
\mathcal{F}_i &=(\omega_i^2I+\bar{A}^2)^{-1}\bar{B}w_i.
\end{align*}
By Lemma \ref{L2} the reset control system (\ref{E-303}) forgets the initial condition; thus, using as similar procedure as the one in Section \ref{sec:31}, yields
\setlength{\arraycolsep}{0.0em}
\begin{eqnarray}\label{E-403}
\bar{x}(t)&{=}&e^{\bar{A}(t-t_{s_M})}\Bigg((I-\bar{A}_\rho)\psi_M(t_{s_M})+e^{\bar{A}(t_{s_M}-t_{s_M-1})}\bigg(\nonumber\\
&&{}\,(I-\bar{A}_\rho)\psi_M(t_{s_M-1})+\dots+\bar{A}_\rho e^{\bar{A}(t_{s_M-1}-t_{s_M-2})}\dots\nonumber\\
&&{}\,\bar{A}_\rho e^{\bar{A}(t_{s_M-m+1}-t_{s_M-m})}(I-\bar{A}_\rho)\psi_M(t_{s_M-m})\Bigg)\Bigg)\nonumber\\
&&{-}\:\psi_M(t),\ t\in(t_{s_M},t_{s_M+1}].\nonumber
\end{eqnarray}
\setlength{\arraycolsep}{5pt}Since the reseting condition is 
\begin{equation}\label{E-404} 
\bar{C}_{e_R}\bar{x}(t)+D_{e_R}[1\ 0]w_M(t)=0,
\end{equation} 
if $\{t_{s_M},t_{s_M-1},...,t_{s_M-m}\}$ are reset instants and satisfy (\ref{E-404}), then $t\in\{t_{s_M},t_{s_M-1},...,t_{s_M-m}\}+\dfrac{2\pi}{\omega_M}$ are such that (\ref{E-404}) holds, which implies that the sequence of reset instants is periodic with period $\dfrac{2\pi}{\omega_M}$; hence, $\bar{x}(t)=\bar{x}(t+\dfrac{2\pi}{\omega_M})$, and using the Fourier series representation yields
\begin{equation}\label{E-405} 
\bar{x}(t)=a_0+\mathlarger{\sum}\limits ^{\infty}_{n=1}a_n\cos(n\omega_Mt)+b_n\sin(n\omega_Mt).
\end{equation}
\end{proof}
\begin{corollary}\label{coro6}
Consider the reset control system~(\ref{E-303}). Suppose the $H_\beta$ condition and Assumption~\ref{AS1} hold. Then for any periodic input $w_P(t)=w_P(t+T_P)$ the reset control system~(\ref{E-303}) has a steady-state periodic solution with the same period time $T_P$. 
\end{corollary}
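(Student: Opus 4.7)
The plan is to deduce Corollary~\ref{coro6} as a direct consequence of the uniform exponential convergence established in Lemma~\ref{L2}, combined with a time-shift uniqueness argument, rather than redoing the Fourier expansion as in Theorem~\ref{thm3}. A periodic excitation $w_P$ with period $T_P$ is bounded and can be treated as a Bohl (or more generally, admissible periodic) input to which the machinery of Lemma~\ref{L2} applies. The cost of this approach is light: all the hard work on existence, uniqueness and forgetting of initial conditions has been done upstream.

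First, I would invoke Lemma~\ref{L2} (together with Remark~\ref{R0}) to conclude that the reset control system~\eqref{E-303} driven by $w_P$ is uniformly exponentially convergent; in particular, for any two initial conditions the corresponding solutions $x_1$ and $x_2$ satisfy $\|x_1(t)-x_2(t)\|_P^2\leq \mathcal{K}e^{-\alpha_m t}$. This already guarantees existence of a unique bounded steady-state trajectory $\bar{x}(t)$ defined for all $t$, following the same reasoning used in the proof of Theorem~\ref{T1} based on~\cite{pavlov2007frequency}.

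Second, I would exploit periodicity of the input via a time-shift: define $\tilde{x}(t):=\bar{x}(t+T_P)$. Because $w_P(t+T_P)=w_P(t)$, the curve $\tilde{x}$ satisfies the same flow equation and the same reset law as $\bar{x}$ (its reset instants are simply $\{t_k-T_P\}$, still well-posed by Assumption~\ref{AS1} and the $H_\beta$ condition), and is bounded on all of $\mathbb{R}$. Hence $\tilde{x}$ is also a bounded steady-state trajectory for the same input $w_P$. By uniqueness of the steady-state, $\bar{x}(t)=\tilde{x}(t)=\bar{x}(t+T_P)$ for all $t$, which is exactly the claim.

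The delicate step in this plan is ensuring that the time-shifted trajectory $\tilde{x}$ is genuinely a solution of~\eqref{E-303} in the same hybrid sense, i.e.\ that the reset sequence of $\tilde{x}$ satisfies the well-posedness property and that no reset is missed or doubled by the shift. This is handled by observing that well-posedness of $\{t_k\}$ for $\bar{x}$ (guaranteed by Remark~\ref{R0} under the $H_\beta$ condition) transfers verbatim to $\{t_k-T_P\}$, since the reset condition $e_R(\cdot)=0$ is invariant under time translation when the input itself is $T_P$-periodic. Once this is verified, the uniqueness argument from Lemma~\ref{L2}, applied with $\Delta x(t)=\bar{x}(t)-\tilde{x}(t)$, forces $\Delta x\equiv 0$ and the corollary follows immediately, providing an alternative (and much shorter) route than recasting $w_P$ as an infinite Fourier sum and invoking Theorem~\ref{thm3}.
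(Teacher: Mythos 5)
Your proposal is correct in substance, but it takes a genuinely different route from the paper. The paper proves Corollary~\ref{coro6} by expanding $w_P$ in its Fourier series and invoking Theorem~\ref{thm3}: there the periodicity of the steady-state reset sequence is obtained by an explicit computation on the reset condition~(\ref{E-404}), and the period is identified as $\frac{2\pi}{\omega_M}=T_P$ via the gcd formula. You instead use the abstract property of uniformly exponentially convergent systems: the steady-state trajectory is the unique solution bounded on the whole time axis, the time-shifted curve $\tilde{x}(t)=\bar{x}(t+T_P)$ is another such solution for the same $T_P$-periodic input (the reset condition being translation-invariant under the shift), and uniqueness forces $\bar{x}(t)=\bar{x}(t+T_P)$. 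This is the classical Demidovich--Pavlov argument; it is shorter, and it sidesteps the step in the paper where Theorem~\ref{thm3}, stated for a \emph{finite} sum of sinusoids, is applied to an infinite Fourier sum. What you give up is the explicit Fourier/harmonic representation of $\bar{x}$ that the paper's route delivers. Two caveats you should make explicit. First, Lemma~\ref{L2} is stated for Bohl inputs, and a general $T_P$-periodic function is not Bohl (a periodic Bohl function is a finite trigonometric polynomial); so, like the paper, you are implicitly extending the convergence machinery beyond its stated hypotheses --- you at least flag this, but it deserves a sentence. Second, the estimate $\|\bar{x}(t)-\tilde{x}(t)\|_P^2\leq\mathcal{K}e^{-\alpha_m t}$ by itself only gives asymptotic coincidence; to conclude $\bar{x}\equiv\tilde{x}$ you must run the contraction estimate from an initial time $t_0$ and let $t_0\to-\infty$, using that both trajectories are bounded on all of $\mathbb{R}$ (or, alternatively, note that two periodic functions that converge to each other must coincide). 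With those two points spelled out, your argument stands as a valid and more economical alternative proof.
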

\begin{proof}
A periodic function can be written as
\begin{equation}\label{E-406} 
w_P(t)=a^\prime_0+\mathlarger{\sum}\limits ^{\infty}_{n=1}a^\prime_n\cos(n\dfrac{2\pi}{T_P}t)+b^\prime_n\sin(n\dfrac{2\pi}{T_P}t).
\end{equation}
Using Theorem \ref{thm3} the steady-state solution of~(\ref{E-303}) for the input~(\ref{E-406}) is 
\setlength{\arraycolsep}{0.0em}
\begin{eqnarray}
\bar{x}(t)&{=}&a_0+\mathlarger{\sum}\limits ^{\infty}_{n=1}a_n\cos(n\omega_Mt)+b_n\sin(n\omega_Mt),\nonumber\\
\omega_M&{=}&2\pi\times gcd(\dfrac{1}{T_P},\dfrac{2}{T_P},...,\dfrac{n}{T_P})=\dfrac{2\pi}{T_P},\text{ hence the claim.}\nonumber
\end{eqnarray}
\setlength{\arraycolsep}{5pt}\end{proof}
\section{An Illustrative example}\label{sec:5}
In this section an illustrative example showing the effectiveness of the developed results is presented. A 3DOF precision positioning system, see Fig.~\ref{F-04-05}, \cite{dastjerdi2018tuning} and \cite{saikumar2019complex}, is selected for this purpose. In this system we only consider mass 3 and actuator 1A (see Fig.~\ref{F-04-05}) which can be modelled via the transfer function, see~\cite{Houu},  
\begin{equation}\label{E-501}
G(s)=\dfrac{8695}{s^2+4.36s+7627}.
\end{equation}
Note that the $H_\beta$ condition holds for all of the controllers designed in the following subsections.
\begin{figure}
	\begin{center}
		\includegraphics[width=0.6\hsize]{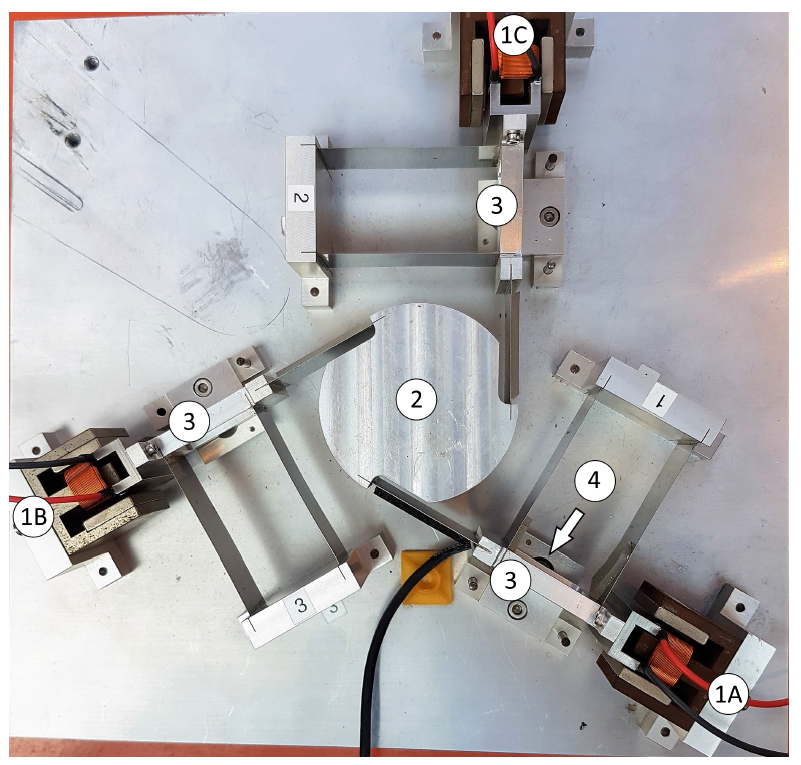}     
		\caption{A 3 DOF planar precision positioning Spyder stage. The voice coil actuators 1A, 1B and 1C control three masses (labelled as 3) which are constrained by leaf flexures. The three masses are connected to a central mass (labelled as 2) through leaf flexures. A Linear encoder (labelled as 4) is placed under mass 3 to provide the position feedback} 
		\label{F-04-05}                                 
	\end{center}                              
\end{figure}
\subsection{The Optimal Structure for CI}\label{sec:5.1}
The closed-loop frequency responses of the system with two reset controllers are compared against the closed-loop frequency responses achievable with a ``tamed" PID controller~\cite{dastjerdi2018tuning} with base linear transfer function 
\begin{equation}\label{E-502}
C_{PID}(s)=k_p\left(1+\dfrac{\omega_i}{s}\right) \left(\dfrac{\dfrac{s}{\omega_d}+1}{\dfrac{s}{\omega_t}+1}\right).
\end{equation}
The first reset controller is obtained by replacing the integrator in (\ref{E-502}) with a CI yielding
\begin{equation}\label{E-503}
C_{SP(CI)D}(s)=k_p\left(1+\cancelto{}{\frac{\omega_i}{s}}\right)\left(\dfrac{\dfrac{s}{\omega_d}+1}{\dfrac{s}{\omega_t}+1}\right),
\end{equation} 
and the second reset controller is the parallel form of (\ref{E-503}), that is
\begin{equation}\label{E-5033}
C_{PP(CI)D}(s)=k_p\left(1+\cancelto{}{\frac{\omega_i}{s}}+\dfrac{\dfrac{s}{\omega_d}}{\dfrac{s}{\omega_t}+1}\right).
\end{equation} 
Note that, unlike the case of linear controllers, the parallel and series configuration of reset controllers can result in totally different responses. In this example we show that in contrast with the DF method, our method is capable of exposing this difference. Setting $100$ Hz as the crossover frequency $\omega_c$, the control parameters have been tuned based on the method proposed in~\cite{schmidt2014design,krijnen2017application,dastjerdi2018tuning} as $K_p=\dfrac{1}{3|G(j\omega_c)|}=14.35$, $\omega_t=3\omega_c=600\pi$, $\omega_i=\dfrac{\omega_c}{10}=20\pi$, and $\omega_d=\dfrac{\omega_c}{3}=66.6\pi$. All frequency responses are obtained utilizing the toolbox in~\cite{Toolbox}. The open-loop frequency response of the system with the controller $C_{PID}$, and the DFs and the amplitudes of third harmonics of the system with the controllers $C_{SP(CI)D}$, $C_{PP(CI)D}$ are shown in Fig.~\ref{F-05}. Based on the DF analysis it is expected that the tracking performances and the disturbance rejection capabilities of the system with controllers $C_{PP(CI)D}$ and $C_{SP(CI)D}$ are the same, and these performance capabilities are superior to those of the system with the controller $C_{PID}$. In addition, the control inputs and the noise attenuation capabilities of the system with these controllers are expected to be almost the same. However, the magnitude of high order harmonics of the reset controllers is different. The time-domain results (Fig.~\ref{F-08}) disprove the predictions which rely on the DF method. In this figure the tracking errors and the amplitude of the control inputs of the system with these controllers are displayed for $r(t)=100\sin(2\pi t)$. It is seen that the control input of the system with the controller $C_{SP(CI)D}$ is much larger than the amplitude of the control inputs of the system with the controllers $C_{PID}$ and $C_{PP(CI)D}$, whereas the tracking performance of the system with the controller $C_{SP(CI)D}$ is worse than the tracking performances of the system with the controllers $C_{PID}$ and $C_{PP(CI)D}$. Note that similar to results presented in~\cite{saikumar2019constant,palanikumar2018no,mecha,Nima}, the amplitudes of even harmonics of the response are zero. 
\begin{figure}
	\begin{center}
		\includegraphics[width=0.5\hsize]{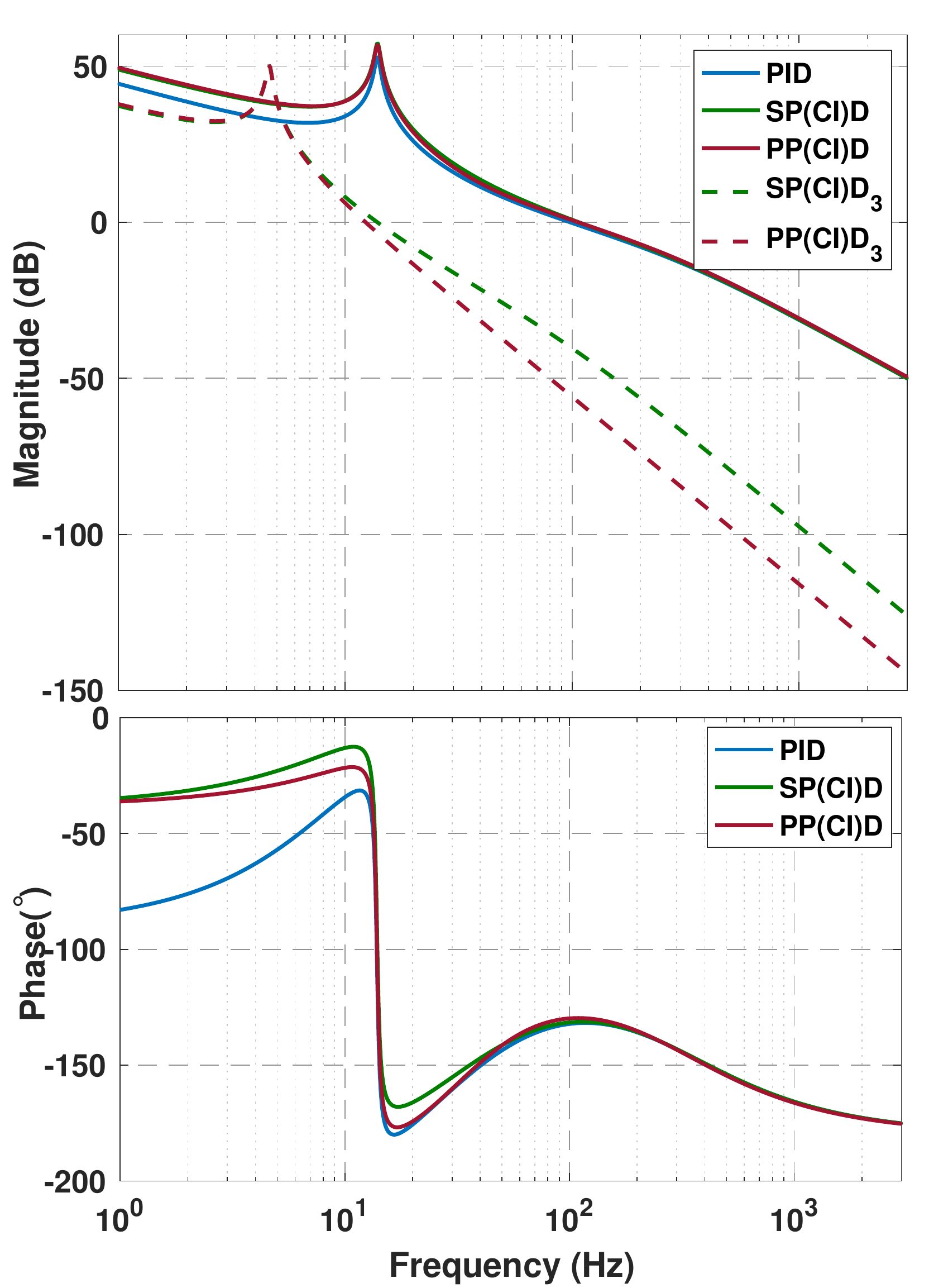}    
		\caption{The DFs and the amplitudes of the third harmonics of the open-loop system with the controllers $C_{SPI(CI)D}$ and $C_{PPI(CI)D}$, and open-loop frequency response of the system with the controller $C_{PID}$} 
		\label{F-05}                                 
	\end{center}                              
\end{figure}
\begin{figure}[!t]
	\centering
	\begin{subfigure}[b]{0.45\columnwidth}
		\includegraphics[width=\hsize]{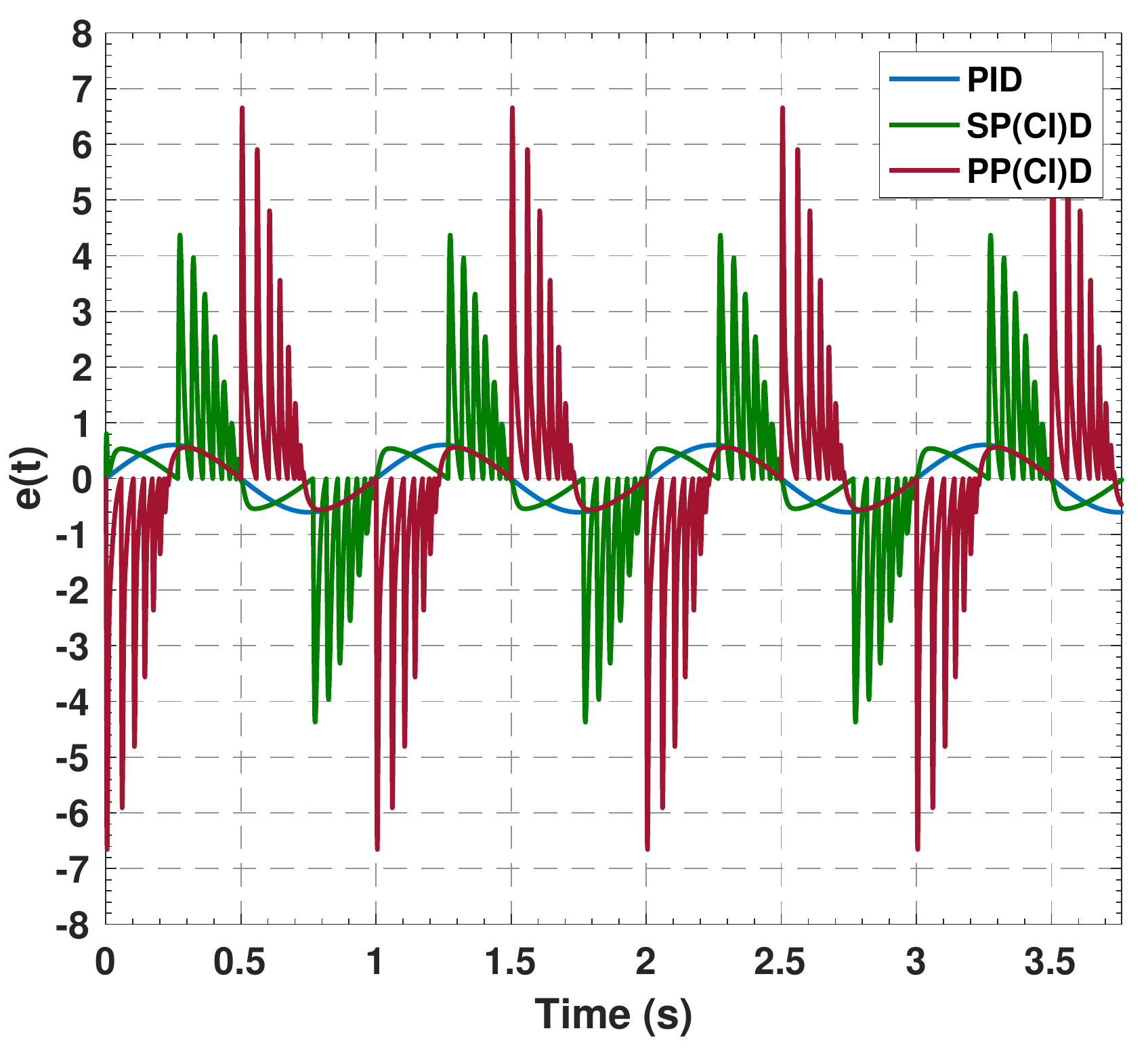}
		\caption{Tracking error}
		\label{F-81}
	\end{subfigure}
	\hfil
	\begin{subfigure}[b]{0.45\columnwidth}
		\includegraphics[width=\hsize]{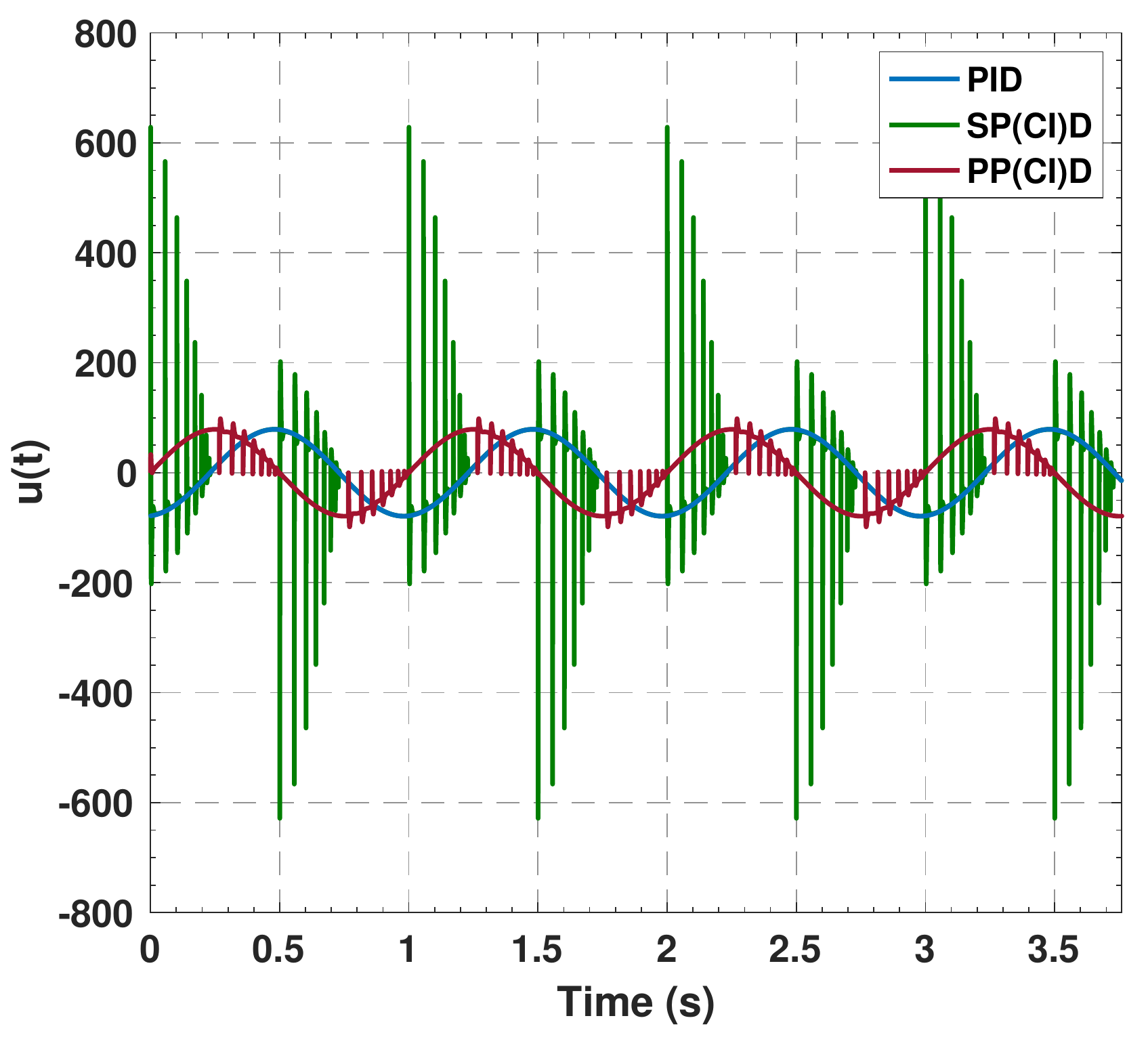}
		\caption{Control input}
		\label{F-82}
	\end{subfigure}
	\caption{Time histories of the tracking errors and of the control inputs of the system with the controllers $C_{PP(CI)D}$, $C_{SP(CI)D}$  and $C_{PID}$ for $r(t)=100\sin(2\pi t)$}
	\label{F-08}
\end{figure}
\begin{figure*}
	\centering
	\begin{subfigure}[b]{0.49\columnwidth}
		\centering    
		\includegraphics[width=\hsize]{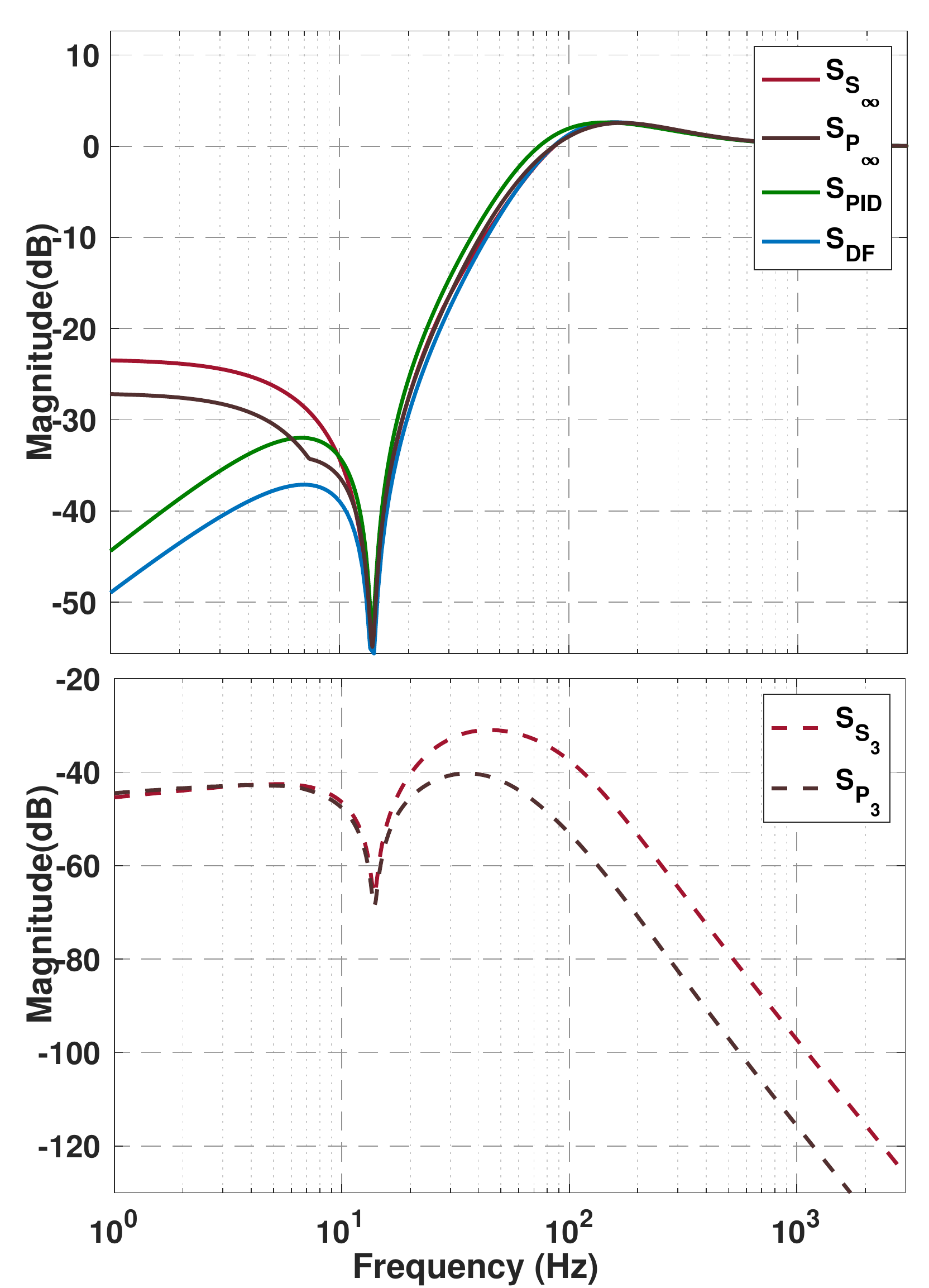}
		\caption{Sensitivity}        
		\label{F-71}
	\end{subfigure}
	\hfil
	\begin{subfigure}[b]{0.49\columnwidth}
		\centering
		\includegraphics[width=\hsize]{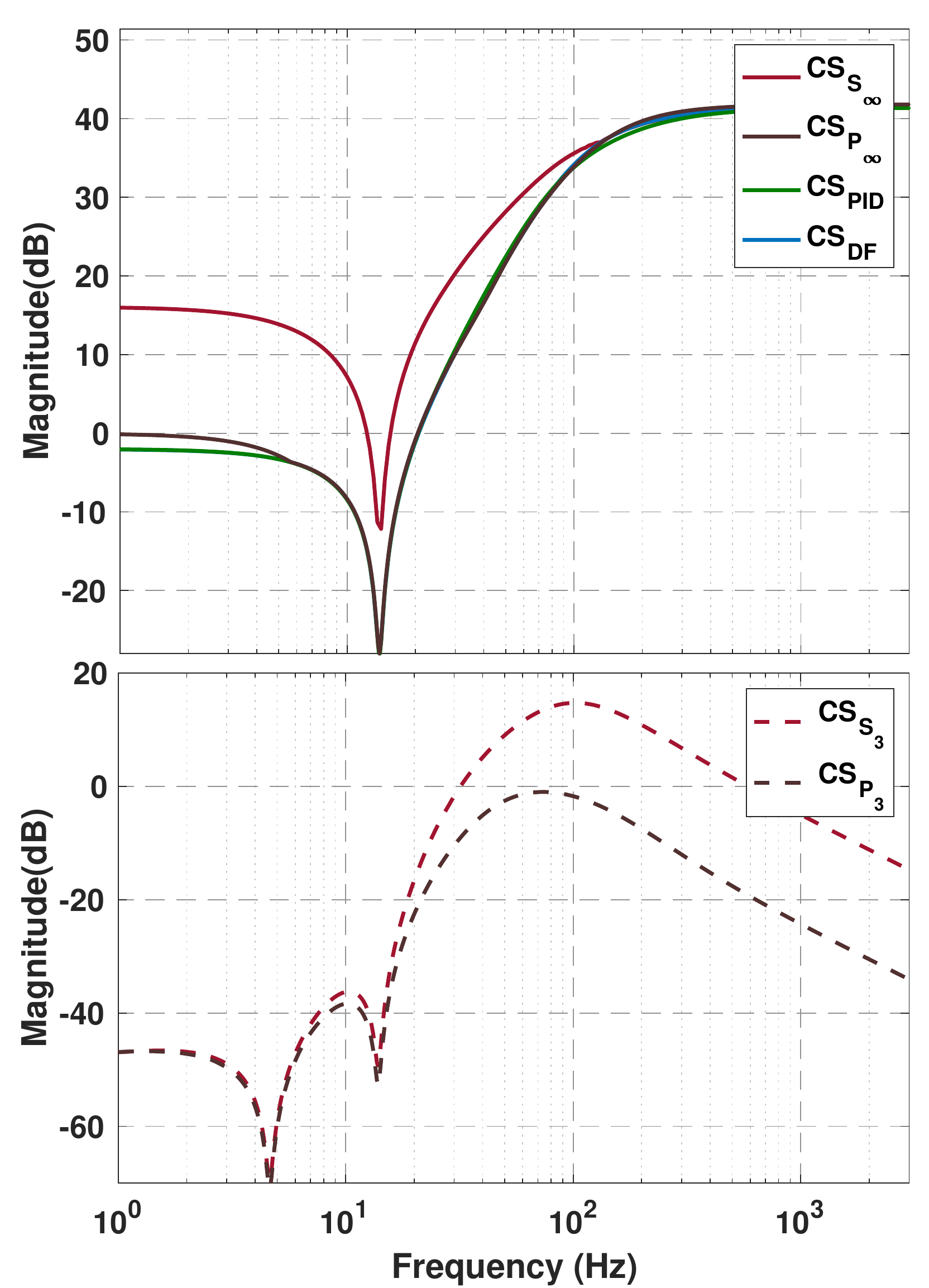}
		\caption{Control sensitivity}
		\label{F-73}
	\end{subfigure}
	\hfil
	\begin{subfigure}[b]{0.49\columnwidth}
		\centering 
		\includegraphics[width=\hsize]{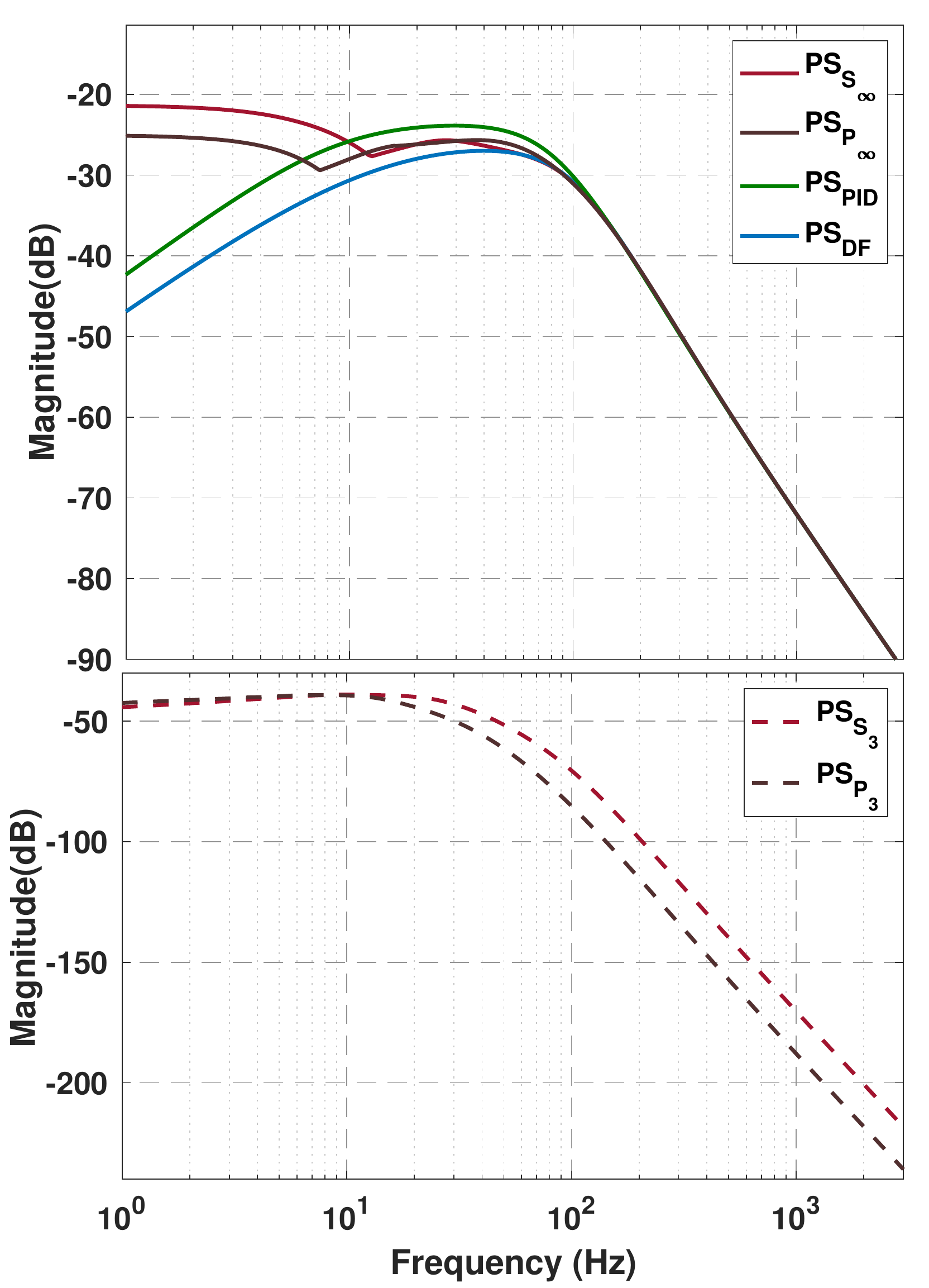}
		\caption{Process sensitivity}
		\label{F-72}
	\end{subfigure}
	\hfil
	\begin{subfigure}[b]{0.49\columnwidth}
		\centering
		\includegraphics[width=\hsize]{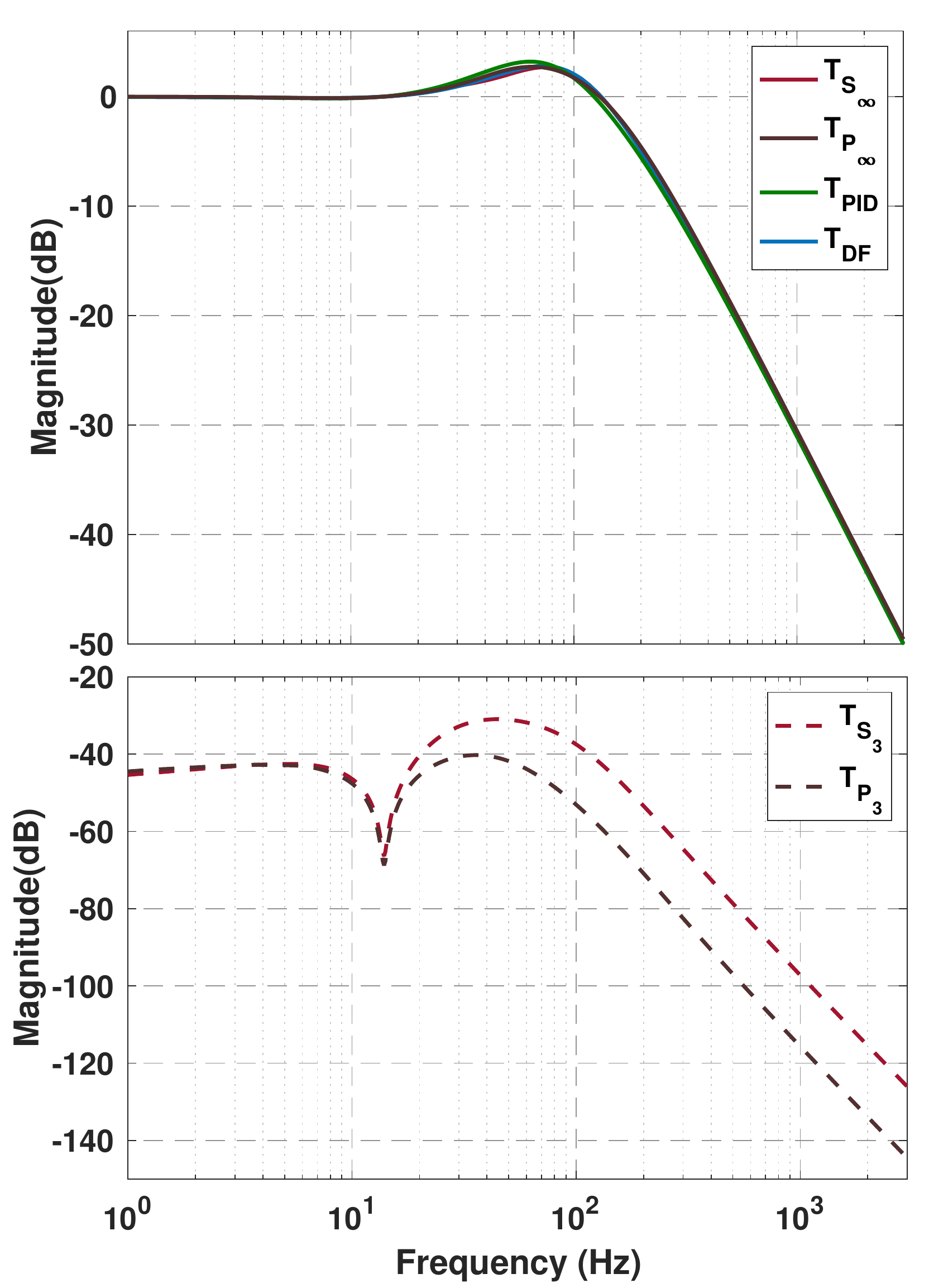}
		\caption{Complementary sensitivity}
		\label{F-74}
	\end{subfigure}
	~ 
	\caption{The DFs $(.\_$ DF), amplitudes of the third harmonics of the sensitivities (.$\_3$), and amplitudes of pseudo-sensitivities (.$\_\ \infty$) of the closed-loop system with the controllers $C_{SP(CI)D}$, $C_{PP(CI)D}$, and closed-loop sensitivities of the system with the controller $C_{PID}$}
	\label{F-07}
\end{figure*}

Unlike the DF method, the pseudo-sensitivities (Fig.~\ref{F-07}) allows justifying why the performance of the system with the controller $C_{PID}$ is superior to the performances of the system with the controllers $C_{PP(CI)D}$ and $C_{SP(CI)D}$ in terms of precision and control effort. As illustrated in Fig.~\ref{F-71}, at low frequency the tracking performance of the system with the controller $C_{PID}$ is better than that of the system with the controllers $C_{PP(CI)D}$ and $C_{SP(CI)D}$. Moreover, the tracking performance of the system with the controller $C_{PP(CI)D}$ is better than the tracking performance of the system with the controller $C_{SP(CI)D}$ at frequencies around the cross-over frequency. As it can be seen in Fig.~\ref{F-73}, the amplitude of the function $CS_\infty$ of the system with the controller $C_{SP(CI)D}$ is much higher than that of the system with the controller $C_{PP(CI)D}$ and of the control sensitivity of the system with the controller $C_{PID}$. Thus, to avoid saturation problems designers should use the function $CS_\infty$ instead of using the result obtained from the DF method when reset controllers are used.  

In addition, as shown in Fig.~\ref{F-72}, the low frequency disturbance rejection capability of the system with the controller $C_{PID}$ is better than that of the system with the controllers $C_{PP(CI)D}$ and $C_{SP(CI)D}$. Furthermore, as illustrated in Fig.~\ref{F-74}, the noise attenuating capabilities of the system with these three controllers are the same. The differences between the performances of the system with the controllers $C_{PP(CI)D}$ and $C_{SP(CI)D}$ are due to the differences in the amplitude and phase of the high order harmonics produced by these controllers. 

To sum up, although it has been shown that using CIs, instead of linear integrators, improves the transient response of the system, the proposed results show that this deteriorates the tracking performance of the system, and the system needs a ``stronger" actuator. Moreover, the actual implementation of the CI has significant effects on the performance of the system which cannot be exposed by using the results obtained with the DF method. This analysis reveals that the CI should be used in the parallel architecture (\ref{E-5033}), yielding a system with better precision and lower control input once compared with the system with the CI in the series architecture (\ref{E-503}). 
\subsection{Performance of "Constant in gain Lead in phase (CgLp)" Compensators}\label{sec:5.2}
Reset elements are utilized to introduce new compensators to enhance performance of control systems \cite{hunnekens2014synthesis,van2018hybrid,palanikumar2018no,chen2019development,valerio2019reset,saikumar2019constant}. In this section a new reset compensator called Constant in gain Lead in phase (CgLp) is analyzed. It consists of a reset filter FORE and a Proportional Derivative (PD) filter in series~\cite{saikumar2019constant,palanikumar2018no}. The DF of a CgLp compensator is given in Fig.~\ref{F-10}. Note that the combination of a PD and a FORE produces a compensator with a constant gain, while providing a phase lead.
\begin{figure}[!t]
	\centering
	\includegraphics[width=0.6\hsize]{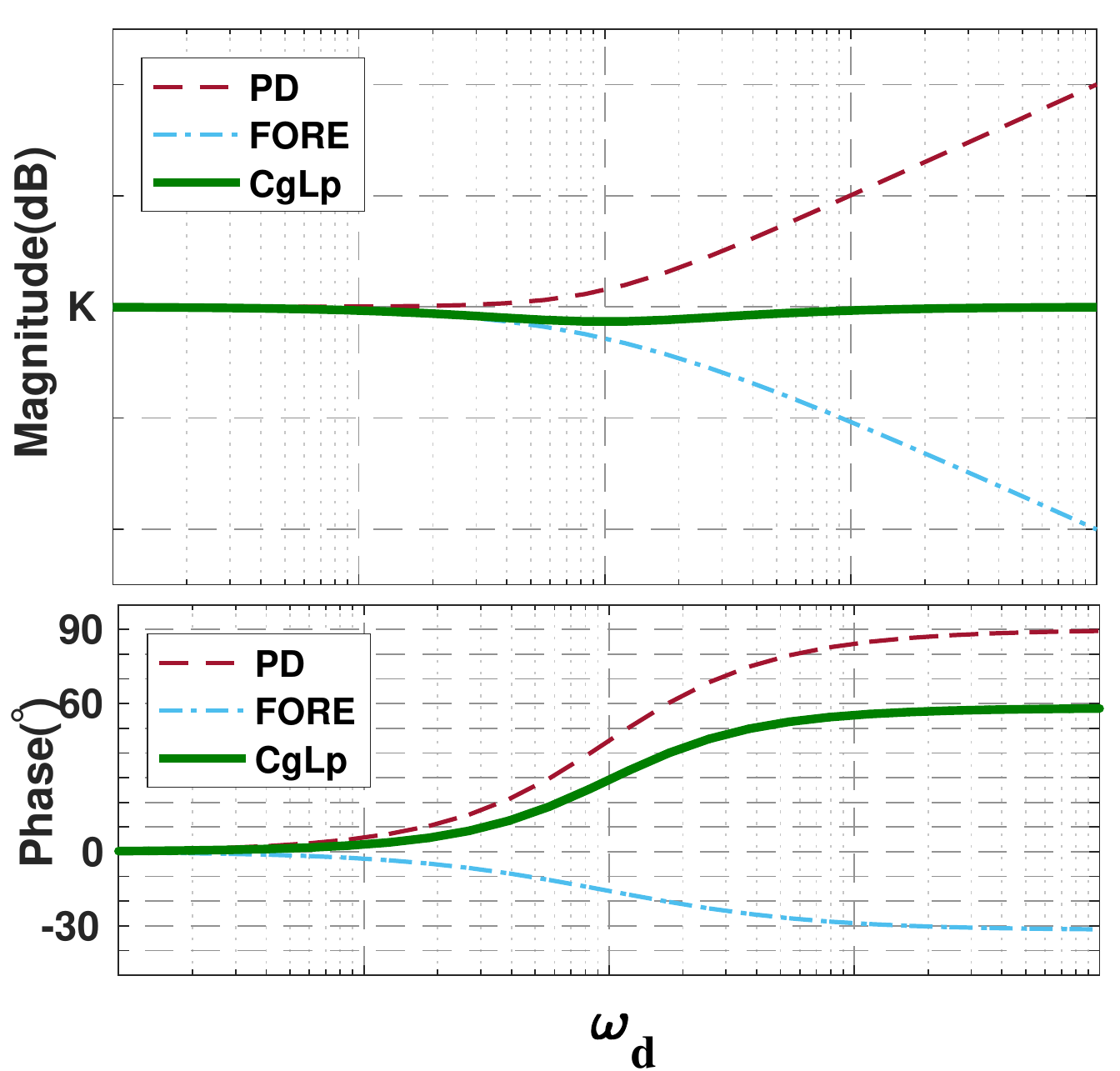}
	\caption{The DF of a CgLp compensator}
	\label{F-10}
\end{figure}
To study the effects of the ``position" of the control elements on the performance of systems with reset controllers, two controllers (see Fig.~\ref{F-N1}) with CgLp compensators are considered. Both controllers are described by
\begin{equation}\label{E-83}
C _ {g_i}(s) = k_p\underbrace{\overbrace{\left(\cancelto{\gamma}{\dfrac{1}{\frac{s}{\omega_r}+1}}\right)}^{\mathrm{FORE}}\overbrace{\left(\dfrac{\frac{s}{\omega_d}+1}{\frac{s}{\omega_t}+1}\right)}^{\mathrm{Lead}}}_{\mathrm{CgLp}}\underbrace{\overbrace{\left(1+\frac{\omega_i}{s}\right)}^{\mathrm{PI}}\overbrace{\left(\dfrac{\frac{s}{\omega_l}+1}{\frac{s}{\omega_f}+1}\right)}^{\mathrm{Lead}}}_{\mathrm{Tamed\ PID}}.
\end{equation}
The parameters of these two controllers are the same and tuned optimally based on the method described in \cite{mecha}, yielding $k_p=25.5,\ \omega_r=111\pi,\ \omega_d=105.2\pi,\ \omega_t=1640\pi,\ \omega_i=20\pi,\ \omega_l=105.2\pi,\ \omega_f=260\pi,$ and $\gamma=0.3$. The only difference between these two controllers is in the ``position" of the filters. For $C_{g_1}$ is FORE-lead-proportional-integrator, while for $C_{g_2}$ one has lead-FORE-proportional-integrator. The DFs and the amplitudes of the third harmonic of the open-loop system with both controllers are given in Fig.~\ref{F-11}. The DFs of the open-loop system with both controllers are the same, but the amplitudes of their third harmonic are different which yields different performances. 
\begin{figure}
  \centering
  \begin{subfigure}{\columnwidth}
    \centering
    \resizebox{\textwidth}{!}{
    \tikzset{every picture/.style={line width=0.75pt}} 
\begin{tikzpicture}[x=0.75pt,y=0.75pt,yscale=-1,xscale=1]
\draw  [line width=1.5]  (451.5,65) -- (552.5,65) -- (552.5,133) -- (451.5,133) -- cycle ;
\draw  [line width=1.5]  (42.63,96.45) .. controls (42.63,88.84) and (49.3,82.68) .. (57.53,82.68) .. controls (65.76,82.68) and (72.43,88.84) .. (72.43,96.45) .. controls (72.43,104.06) and (65.76,110.22) .. (57.53,110.22) .. controls (49.3,110.22) and (42.63,104.06) .. (42.63,96.45) -- cycle ;
\draw [line width=1.5]    (642,102) -- (642,210) -- (57,210) -- (57.51,114.22) ;
\draw [shift={(57.53,110.22)}, rotate = 450.3] [fill={rgb, 255:red, 0; green, 0; blue, 0 }  ][line width=0.08]  [draw opacity=0] (11.61,-5.58) -- (0,0) -- (11.61,5.58) -- cycle    ;
\draw [line width=1.5]    (-0.5,98) -- (38.63,98.41) ;
\draw [shift={(42.63,98.45)}, rotate = 180.6] [fill={rgb, 255:red, 0; green, 0; blue, 0 }  ][line width=0.08]  [draw opacity=0] (11.61,-5.58) -- (0,0) -- (11.61,5.58) -- cycle    ;
\draw [line width=1.5]    (629.77,101.37) -- (667.5,101.94) ;
\draw [shift={(671.5,102)}, rotate = 180.87] [fill={rgb, 255:red, 0; green, 0; blue, 0 }  ][line width=0.08]  [draw opacity=0] (11.61,-5.58) -- (0,0) -- (11.61,5.58) -- cycle    ;
\draw [line width=1.5]    (553.49,101.5) -- (578.5,101.93) ;
\draw [shift={(582.5,102)}, rotate = 180.99] [fill={rgb, 255:red, 0; green, 0; blue, 0 }  ][line width=0.08]  [draw opacity=0] (11.61,-5.58) -- (0,0) -- (11.61,5.58) -- cycle    ;
\draw  [line width=1.5]  (581.17,73.09) -- (629.5,73.09) -- (629.5,131) -- (581.17,131) -- cycle ;
\draw  [fill={rgb, 255:red, 241; green, 241; blue, 241 }  ,fill opacity=1 ][dash pattern={on 1.69pt off 2.76pt}][line width=1.5]  (93,68.2) .. controls (93,51.52) and (106.52,38) .. (123.2,38) -- (213.8,38) .. controls (230.48,38) and (244,51.52) .. (244,68.2) -- (244,164.8) .. controls (244,181.48) and (230.48,195) .. (213.8,195) -- (123.2,195) .. controls (106.52,195) and (93,181.48) .. (93,164.8) -- cycle ;
\draw [line width=1.5]    (233.5,104) -- (233.5,167) -- (218.5,167) ;
\draw [line width=1.5]    (73.43,99.45) -- (112.5,99.04) ;
\draw [shift={(116.5,99)}, rotate = 539.4] [fill={rgb, 255:red, 0; green, 0; blue, 0 }  ][line width=0.08]  [draw opacity=0] (11.61,-5.58) -- (0,0) -- (11.61,5.58) -- cycle    ;
\draw  [line width=1.5]  (115.5,100.37) .. controls (115.5,92.76) and (122.17,86.6) .. (130.4,86.6) .. controls (138.63,86.6) and (145.3,92.76) .. (145.3,100.37) .. controls (145.3,107.97) and (138.63,114.14) .. (130.4,114.14) .. controls (122.17,114.14) and (115.5,107.97) .. (115.5,100.37) -- cycle ;
\draw  [line width=1.5]  (172.17,72.09) -- (210.5,72.09) -- (210.5,130) -- (172.17,130) -- cycle ;
\draw [line width=1.5]    (145.5,101) -- (168.5,101) ;
\draw [shift={(172.5,101)}, rotate = 180] [fill={rgb, 255:red, 0; green, 0; blue, 0 }  ][line width=0.08]  [draw opacity=0] (11.61,-5.58) -- (0,0) -- (11.61,5.58) -- cycle    ;
\draw  [dash pattern={on 4.5pt off 4.5pt}]  (98.97,99.23) -- (99,120) -- (171.5,121) ;
\draw [line width=1.5]    (212.5,103) -- (238.49,103.28) -- (253.5,103.06) ;
\draw [shift={(257.5,103)}, rotate = 539.1600000000001] [fill={rgb, 255:red, 0; green, 0; blue, 0 }  ][line width=0.08]  [draw opacity=0] (11.61,-5.58) -- (0,0) -- (11.61,5.58) -- cycle    ;
\draw  [line width=1.5]  (168.56,168.69) -- (219.33,144.9) -- (218.5,191) -- cycle ;
\draw    (166.5,137) -- (218.78,62.46) ;
\draw [shift={(220.5,60)}, rotate = 485.04] [fill={rgb, 255:red, 0; green, 0; blue, 0 }  ][line width=0.08]  [draw opacity=0] (8.93,-4.29) -- (0,0) -- (8.93,4.29) -- cycle    ;
\draw [line width=1.5]    (168.56,168.69) -- (130,168) -- (130.37,118.14) ;
\draw [shift={(130.4,114.14)}, rotate = 450.42] [fill={rgb, 255:red, 0; green, 0; blue, 0 }  ][line width=0.08]  [draw opacity=0] (11.61,-5.58) -- (0,0) -- (11.61,5.58) -- cycle    ;
\draw  [line width=1.5]  (256.5,40) -- (421,40) -- (421,166) -- (256.5,166) -- cycle ;
\draw [line width=1.5]    (422.49,103.5) -- (447.5,103.93) ;
\draw [shift={(451.5,104)}, rotate = 180.99] [fill={rgb, 255:red, 0; green, 0; blue, 0 }  ][line width=0.08]  [draw opacity=0] (11.61,-5.58) -- (0,0) -- (11.61,5.58) -- cycle    ;
\draw (502,99) node  [font=\large,xscale=1.2,yscale=1.2]  {$k_{p}\left( 1+\dfrac{\omega _{i}}{s}\right)$};
\draw (57.53,94.61) node  [font=\large,xscale=1.2,yscale=1.2]  {$-$};
\draw (19.84,82.76) node  [font=\large,xscale=1.5,yscale=1.5]  {$\boldsymbol{r}$};
\draw (651.62,82.76) node  [font=\large,xscale=1.5,yscale=1.5]  {$\boldsymbol{y}$};
\draw (81.35,83.76) node  [font=\large,xscale=1.5,yscale=1.5]  {$\boldsymbol{e}$};
\draw (605.34,102.04) node  [font=\large,xscale=1.2,yscale=1.2]  {$G( s)$};
\draw (131.4,98.37) node  [font=\large,xscale=1.2,yscale=1.2]  {$-$};
\draw (191.34,101.04) node  [font=\large,xscale=1.2,yscale=1.2]  {$\dfrac{1}{s}$};
\draw (202,163) node  [font=\large,xscale=1.2,yscale=1.2]  {$\omega _{r}$};
\draw (229,52) node  [font=\large,xscale=1.2,yscale=1.2]  {$\rho $};
\draw (338.75,103) node  [font=\large,xscale=1.2,yscale=1.2]  {$\dfrac{\left(\dfrac{s}{\omega _{d}} +1\right)\left(\dfrac{s}{\omega _{l}} +1\right)}{\left(\dfrac{s}{\omega _{t}} +1\right)\left(\dfrac{s}{\omega _{f}} +1\right)}$};
\draw (503,48) node  [font=\large,xscale=1.2,yscale=1.2] [align=left] {{\fontfamily{ptm}\selectfont {\large \textbf{PI}}}};
\draw (605,58) node  [font=\large,xscale=1.2,yscale=1.2] [align=left] {{\fontfamily{ptm}\selectfont \textbf{{\large Plant}}}};
\draw (341,21) node  [font=\large,xscale=1.2,yscale=1.2] [align=left] {{\fontfamily{ptm}\selectfont {\large \textbf{Lead}}}};
\draw (168,20) node  [font=\large,xscale=1.2,yscale=1.2] [align=left] {{\fontfamily{ptm}\selectfont {\large \textbf{FORE}}}};
\end{tikzpicture}}
\caption{The system with control $C_{g_1}$}
\label{F-N11}
\end{subfigure}
\begin{subfigure}{\columnwidth}
    \centering
    \resizebox{\textwidth}{!}{
  \tikzset{every picture/.style={line width=0.75pt}} 

\begin{tikzpicture}[x=0.75pt,y=0.75pt,yscale=-1,xscale=1]

\draw  [line width=1.5]  (451.5,65) -- (552.5,65) -- (552.5,133) -- (451.5,133) -- cycle ;
\draw  [line width=1.5]  (42.63,96.45) .. controls (42.63,88.84) and (49.3,82.68) .. (57.53,82.68) .. controls (65.76,82.68) and (72.43,88.84) .. (72.43,96.45) .. controls (72.43,104.06) and (65.76,110.22) .. (57.53,110.22) .. controls (49.3,110.22) and (42.63,104.06) .. (42.63,96.45) -- cycle ;
\draw [line width=1.5]    (642,102) -- (642,210) -- (57,210) -- (57.51,114.22) ;
\draw [shift={(57.53,110.22)}, rotate = 450.3] [fill={rgb, 255:red, 0; green, 0; blue, 0 }  ][line width=0.08]  [draw opacity=0] (11.61,-5.58) -- (0,0) -- (11.61,5.58) -- cycle    ;
\draw [line width=1.5]    (-0.5,98) -- (38.63,98.41) ;
\draw [shift={(42.63,98.45)}, rotate = 180.6] [fill={rgb, 255:red, 0; green, 0; blue, 0 }  ][line width=0.08]  [draw opacity=0] (11.61,-5.58) -- (0,0) -- (11.61,5.58) -- cycle    ;
\draw [line width=1.5]    (629.77,101.37) -- (667.5,101.94) ;
\draw [shift={(671.5,102)}, rotate = 180.87] [fill={rgb, 255:red, 0; green, 0; blue, 0 }  ][line width=0.08]  [draw opacity=0] (11.61,-5.58) -- (0,0) -- (11.61,5.58) -- cycle    ;
\draw [line width=1.5]    (553.49,101.5) -- (578.5,101.93) ;
\draw [shift={(582.5,102)}, rotate = 180.99] [fill={rgb, 255:red, 0; green, 0; blue, 0 }  ][line width=0.08]  [draw opacity=0] (11.61,-5.58) -- (0,0) -- (11.61,5.58) -- cycle    ;
\draw  [line width=1.5]  (581.17,73.09) -- (629.5,73.09) -- (629.5,131) -- (581.17,131) -- cycle ;
\draw  [fill={rgb, 255:red, 241; green, 241; blue, 241 }  ,fill opacity=1 ][dash pattern={on 1.69pt off 2.76pt}][line width=1.5]  (284,60.9) .. controls (284,44.39) and (297.39,31) .. (313.9,31) -- (403.6,31) .. controls (420.11,31) and (433.5,44.39) .. (433.5,60.9) -- (433.5,158.1) .. controls (433.5,174.61) and (420.11,188) .. (403.6,188) -- (313.9,188) .. controls (297.39,188) and (284,174.61) .. (284,158.1) -- cycle ;
\draw [line width=1.5]    (424.5,97) -- (424.5,160) -- (409.5,160) ;
\draw [line width=1.5]    (264.43,92.45) -- (303.5,92.04) ;
\draw [shift={(307.5,92)}, rotate = 539.4] [fill={rgb, 255:red, 0; green, 0; blue, 0 }  ][line width=0.08]  [draw opacity=0] (11.61,-5.58) -- (0,0) -- (11.61,5.58) -- cycle    ;
\draw  [line width=1.5]  (306.5,93.37) .. controls (306.5,85.76) and (313.17,79.6) .. (321.4,79.6) .. controls (329.63,79.6) and (336.3,85.76) .. (336.3,93.37) .. controls (336.3,100.97) and (329.63,107.14) .. (321.4,107.14) .. controls (313.17,107.14) and (306.5,100.97) .. (306.5,93.37) -- cycle ;
\draw  [line width=1.5]  (363.17,65.09) -- (401.5,65.09) -- (401.5,123) -- (363.17,123) -- cycle ;
\draw [line width=1.5]    (336.5,94) -- (359.5,94) ;
\draw [shift={(363.5,94)}, rotate = 180] [fill={rgb, 255:red, 0; green, 0; blue, 0 }  ][line width=0.08]  [draw opacity=0] (11.61,-5.58) -- (0,0) -- (11.61,5.58) -- cycle    ;
\draw  [dash pattern={on 4.5pt off 4.5pt}]  (289.97,92.23) -- (290,113) -- (362.5,114) ;
\draw [line width=1.5]    (403.5,96) -- (429.49,96.28) -- (448.5,96.05) ;
\draw [shift={(452.5,96)}, rotate = 539.31] [fill={rgb, 255:red, 0; green, 0; blue, 0 }  ][line width=0.08]  [draw opacity=0] (11.61,-5.58) -- (0,0) -- (11.61,5.58) -- cycle    ;
\draw  [line width=1.5]  (359.56,161.69) -- (410.33,137.9) -- (409.5,184) -- cycle ;
\draw    (357.5,130) -- (409.78,55.46) ;
\draw [shift={(411.5,53)}, rotate = 485.04] [fill={rgb, 255:red, 0; green, 0; blue, 0 }  ][line width=0.08]  [draw opacity=0] (8.93,-4.29) -- (0,0) -- (8.93,4.29) -- cycle    ;
\draw [line width=1.5]    (359.56,161.69) -- (321,161) -- (321.37,111.14) ;
\draw [shift={(321.4,107.14)}, rotate = 450.42] [fill={rgb, 255:red, 0; green, 0; blue, 0 }  ][line width=0.08]  [draw opacity=0] (11.61,-5.58) -- (0,0) -- (11.61,5.58) -- cycle    ;
\draw  [line width=1.5]  (100.5,34) -- (265,34) -- (265,160) -- (100.5,160) -- cycle ;
\draw [line width=1.5]    (72.43,96.45) -- (97.44,96.88) ;
\draw [shift={(101.44,96.95)}, rotate = 180.99] [fill={rgb, 255:red, 0; green, 0; blue, 0 }  ][line width=0.08]  [draw opacity=0] (11.61,-5.58) -- (0,0) -- (11.61,5.58) -- cycle    ;

\draw (502,99) node  [font=\large,xscale=1.2,yscale=1.2]  {$k_{p}\left( 1+\dfrac{\omega _{i}}{s}\right)$};
\draw (57.53,94.61) node  [font=\large,xscale=1.2,yscale=1.2]  {$-$};
\draw (19.84,82.76) node  [font=\large,xscale=1.5,yscale=1.5]  {$\boldsymbol{r}$};
\draw (651.62,77.76) node  [font=\large,xscale=1.5,yscale=1.5]  {$\boldsymbol{y}$};
\draw (81.35,83.76) node  [font=\large,xscale=1.5,yscale=1.5]  {$\boldsymbol{e}$};
\draw (605.34,102.04) node  [font=\large,xscale=1.2,yscale=1.2]  {$G( s)$};
\draw (322.4,91.37) node  [font=\large,xscale=1.2,yscale=1.2]  {$-$};
\draw (382.34,94.04) node  [font=\large,xscale=1.2,yscale=1.2]  {$\dfrac{1}{s}$};
\draw (393,156) node  [font=\large,xscale=1.2,yscale=1.2]  {$\omega _{r}$};
\draw (420,45) node  [font=\large,xscale=1.2,yscale=1.2]  {$\rho $};
\draw (503,48) node  [font=\large,xscale=1.2,yscale=1.2] [align=left] {{\fontfamily{ptm}\selectfont {\large \textbf{PI}}}};
\draw (605,58) node  [font=\large,xscale=1.2,yscale=1.2] [align=left] {{\fontfamily{ptm}\selectfont \textbf{{\large Plant}}}};
\draw (359,13) node  [font=\large,xscale=1.2,yscale=1.2] [align=left] {{\fontfamily{ptm}\selectfont {\large \textbf{FORE}}}};
\draw (182.75,97) node  [font=\large,xscale=1.2,yscale=1.2]  {$\dfrac{\left(\dfrac{s}{\omega _{d}} +1\right)\left(\dfrac{s}{\omega _{l}} +1\right)}{\left(\dfrac{s}{\omega _{t}} +1\right)\left(\dfrac{s}{\omega _{f}} +1\right)}$};
\draw (185,19) node  [font=\large,xscale=1.2,yscale=1.2] [align=left] {{\fontfamily{ptm}\selectfont {\large \textbf{Lead}}}};
\end{tikzpicture}}
\caption{The system with control $C_{g_2}$}
\label{F-N12}
\end{subfigure}
\caption{Block diagrams of the Spyder plant with controllers $C_{g_1}$ (top), and $C_{g_2}$(bottom)}
\label{F-N1}
\end{figure}
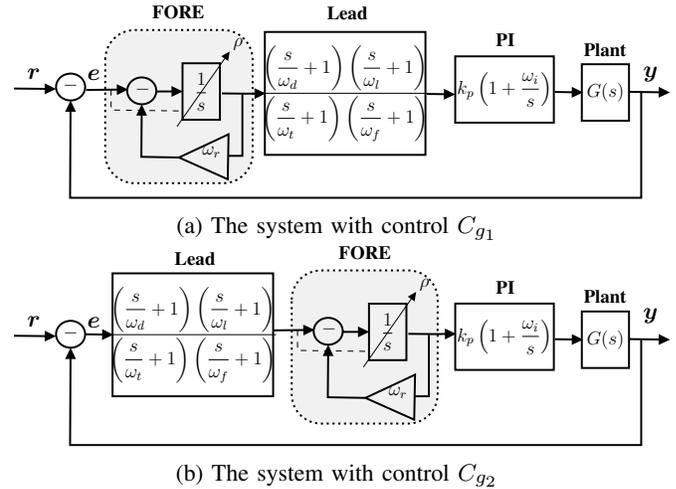
\begin{figure}[!t]
	\centering
	\includegraphics[width=0.52\hsize]{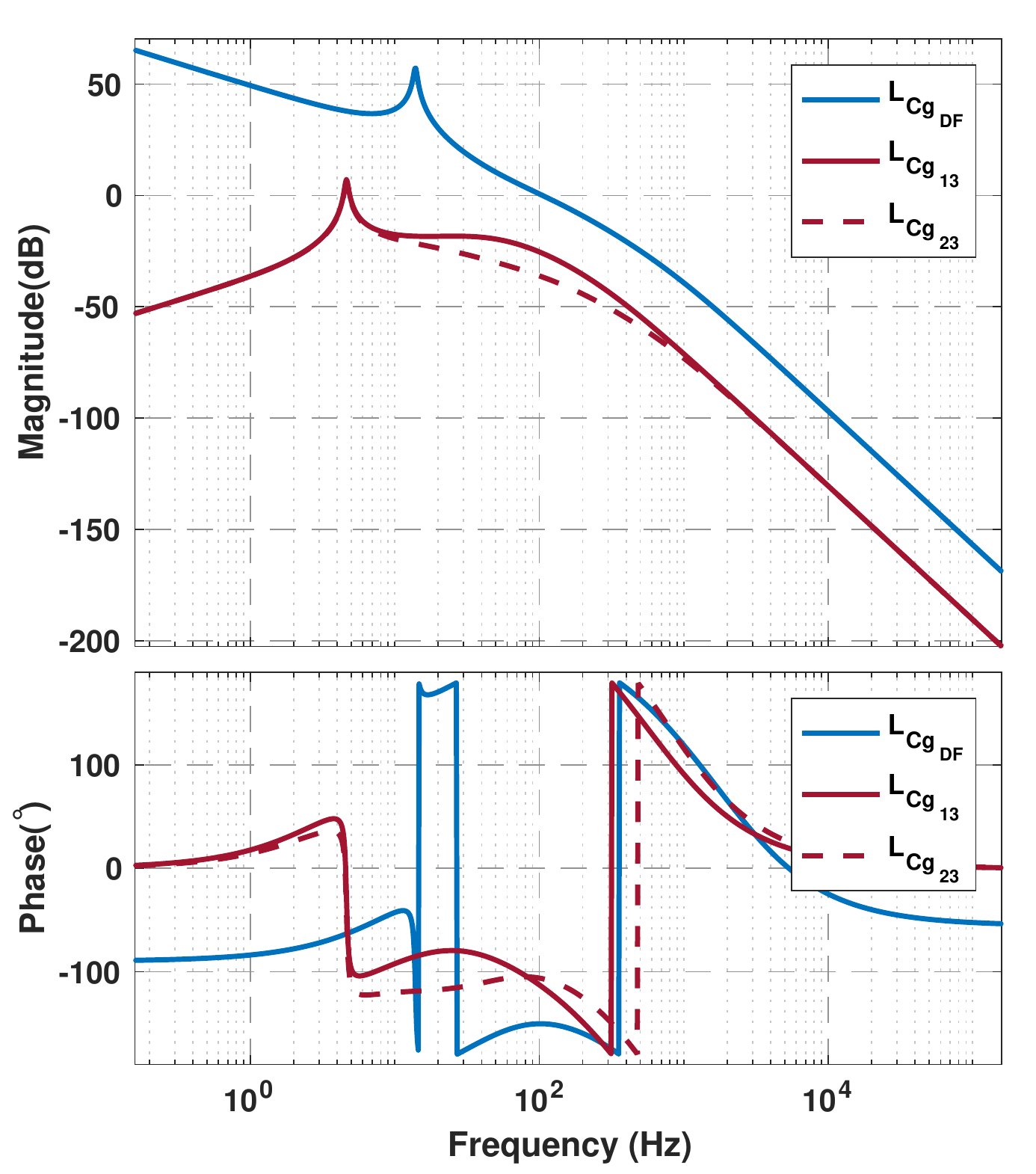}
	\caption{The DFs and the amplitudes of the third harmonics of the open-loop system with the controllers $C_{g_1}$ and $C_{g_2}$}
	\label{F-11}
\end{figure}
\begin{figure}[!t]
	\centering
	\begin{subfigure}[t]{0.48\columnwidth}
		\centering
		\includegraphics[width=\hsize]{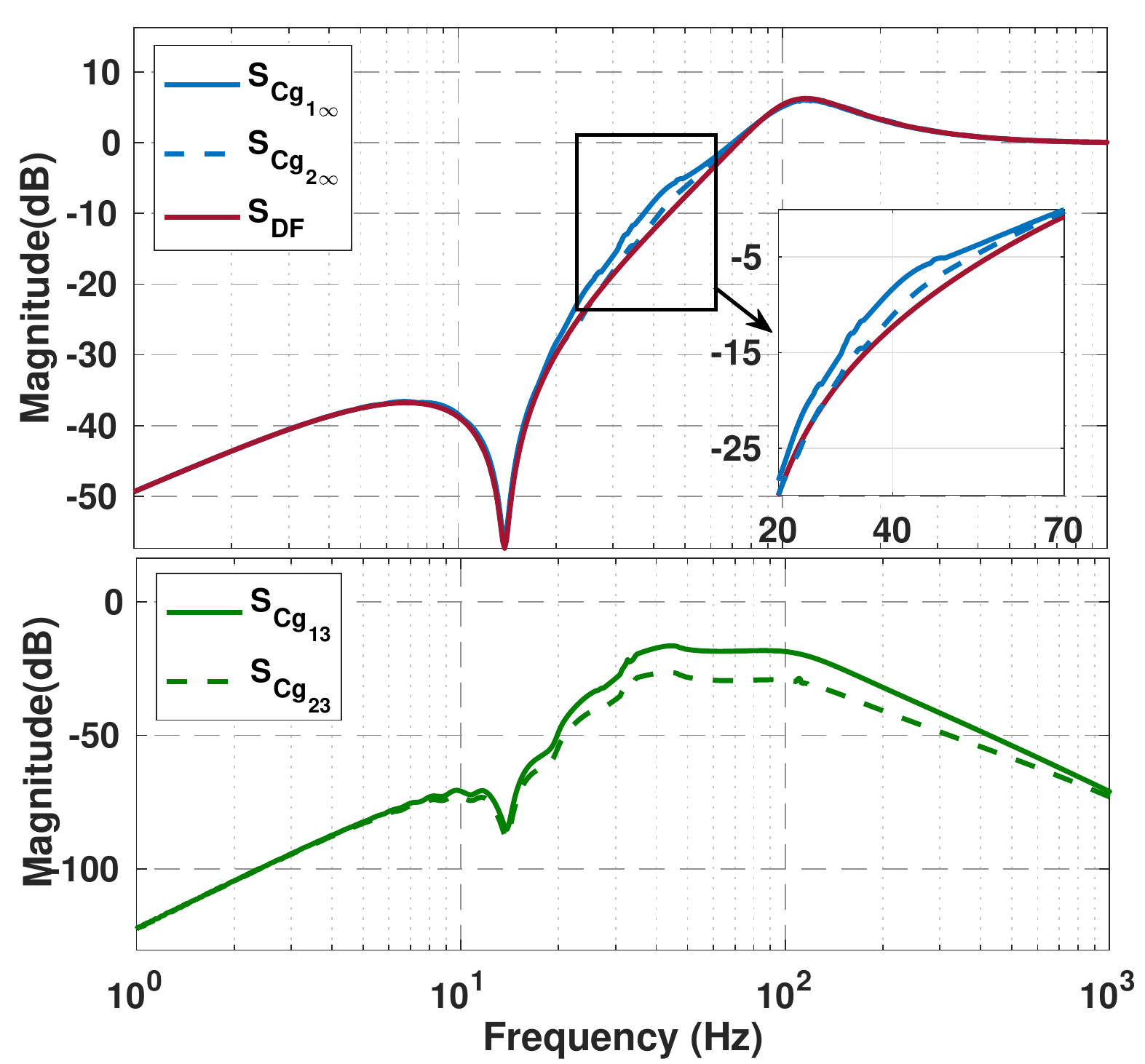}
		\caption{Sensitivity}
		\label{F-131}
	\end{subfigure}
	\hfil
	\begin{subfigure}[t]{0.48\columnwidth}
		\centering
		\includegraphics[width=\hsize]{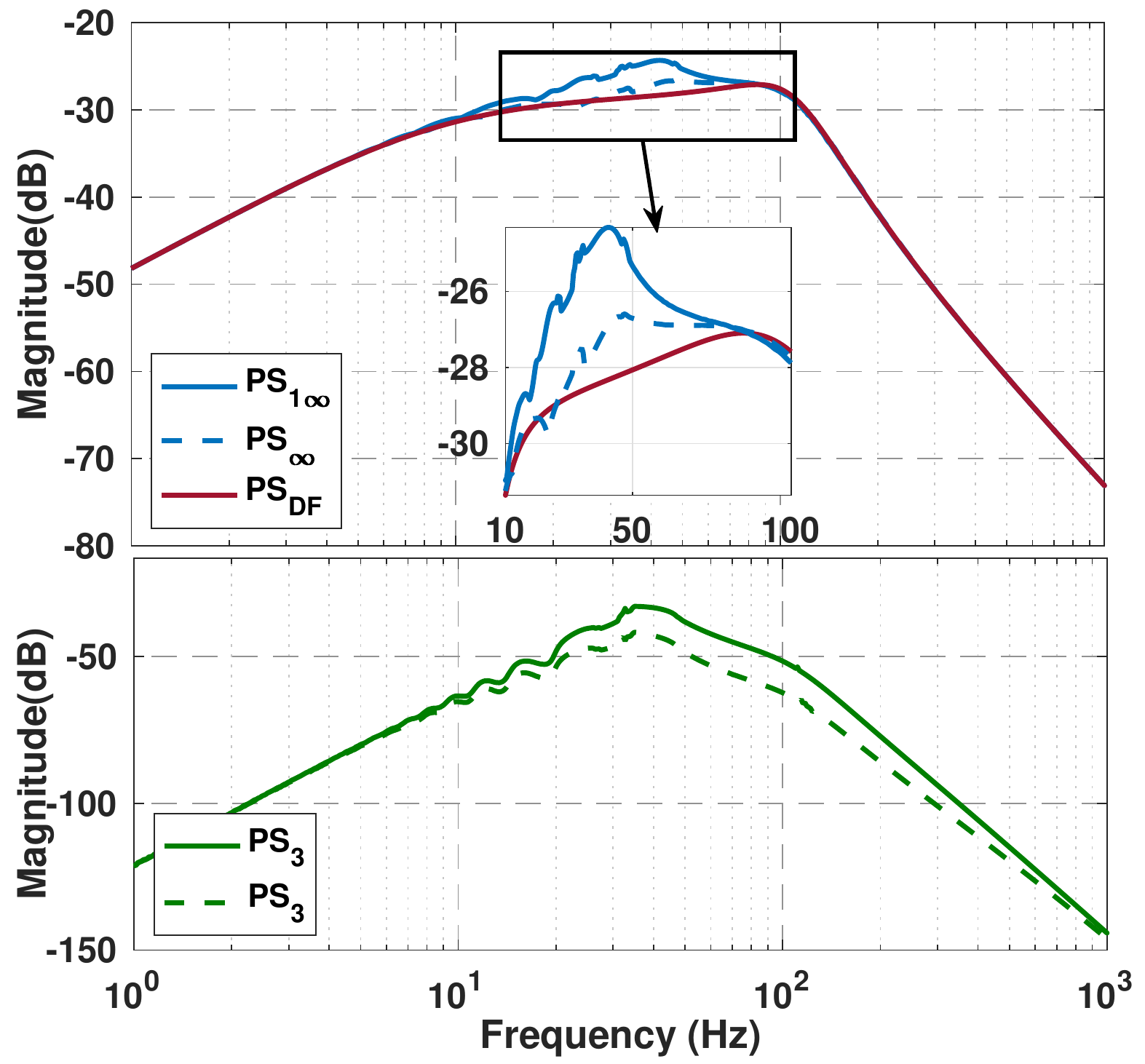}
		\caption{Process sensitivity}
		\label{F-132}
	\end{subfigure}
	\begin{subfigure}[t]{0.48\columnwidth}
		\centering
		\includegraphics[width=\hsize]{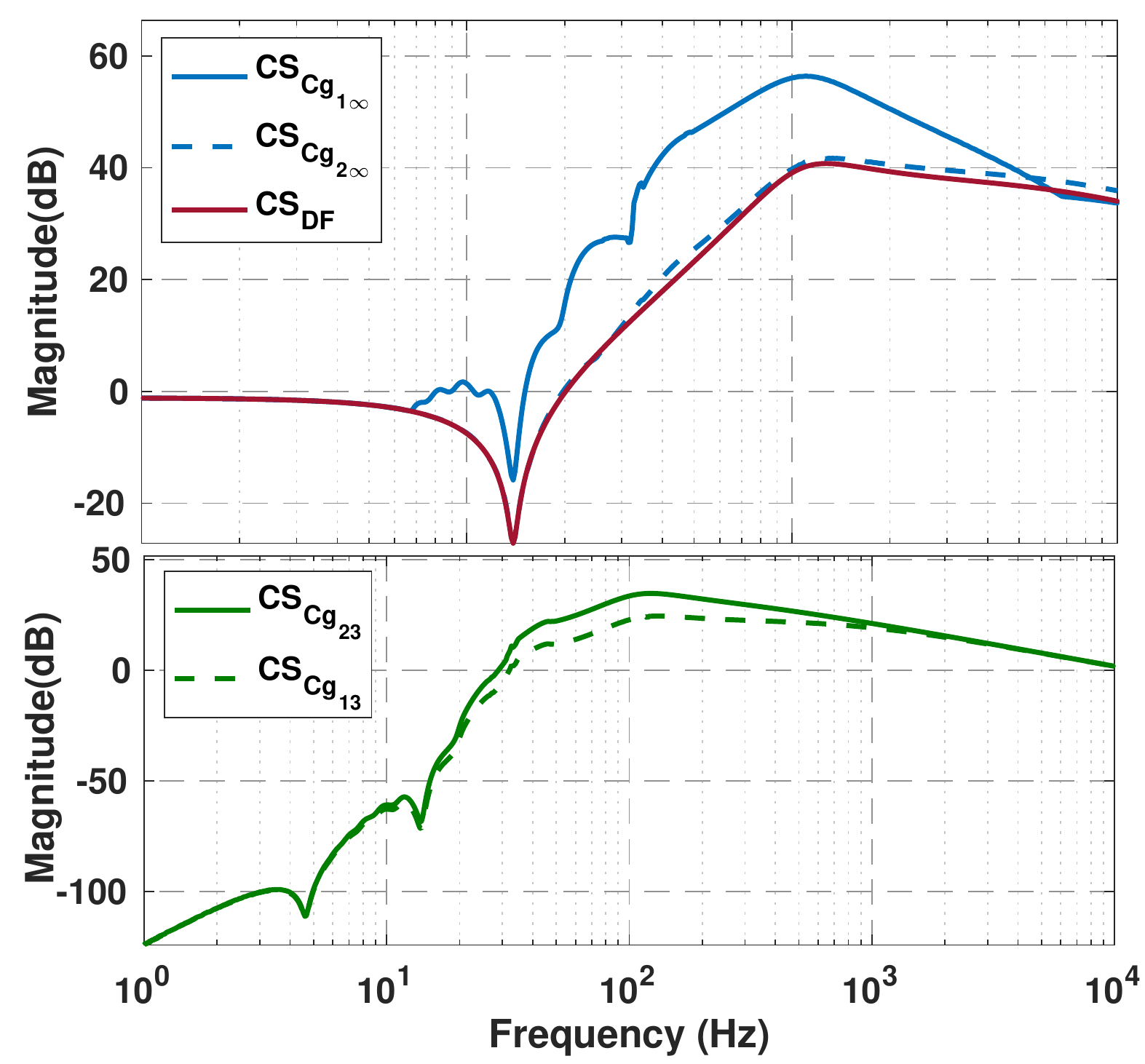}
		\caption{Control sensitivity}
		\label{F-133}
	\end{subfigure}
	\hfil
	\begin{subfigure}[t]{0.48\columnwidth}
		\centering
		\includegraphics[width=\hsize]{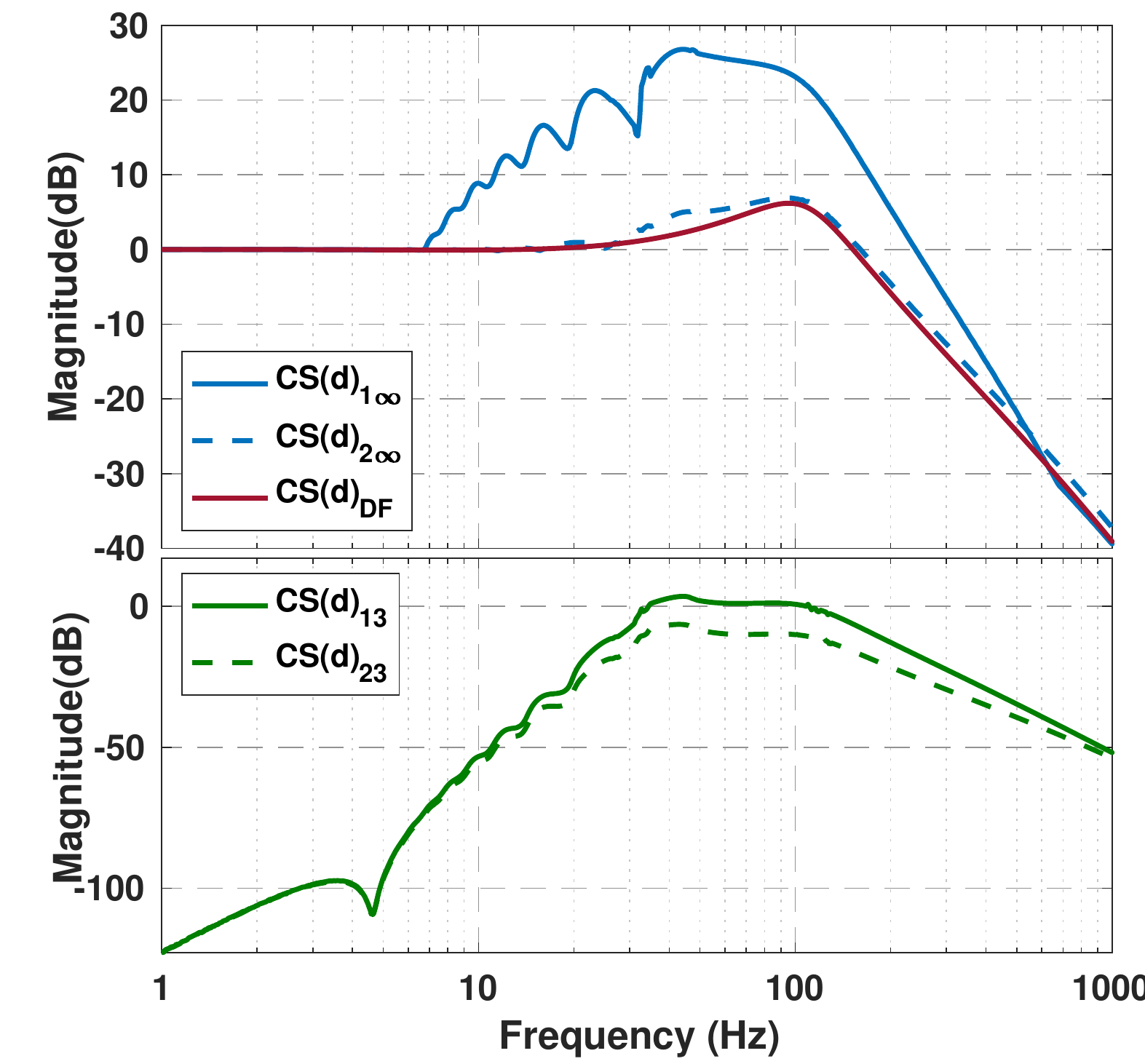}
		\caption{Control sensitivity to disturbance}
		\label{F-134}
	\end{subfigure}
	~ 
	\caption{The DFs $(.\_$ DF), amplitudes of the third harmonics of the sensitivities (.$\_3$), and amplitudes of pseudo-sensitivities (.$\_\ \infty$) of the closed-loop system with the controllers $C_{g_1}$ and $C_{g_2}$}
	\label{F-13}
\end{figure}
In Fig.~\ref{F-13} the closed-loop frequency responses of the system with both controllers, including the amplitudes of the third harmonics, the DFs and amplitudes of pseudo-sensitivities, are presented. Note that there are significant differences between the results obtained using the DF method and the proposed tools. Unlike the DF method, the proposed tools reveal the effects of the ``position" of the control filters on the performance of the reset control systems. The differences in magnitude and phase of the high order harmonics of the open-loop system with these controllers (Fig.~\ref{F-11}) leads to discrepancies between the closed-loop frequency responses. As shown in Fig.~\ref{F-131}, the system with the controller $C_{g_2}$ has better tracking performance than that of the system with the controller $C_{g_1}$. In addition, the amplitude of the third harmonic of the sensitivity of the system with $C_{g_2}$ is smaller than that resulting from the use of the controller $C_{g_1}$ around the cross-over frequency. Moreover, as illustrated in Fig.~\ref{F-132}, the system with the controller $C_{g_2}$ has better disturbance rejection capability than the system with the controller $C_{g_1}$. As shown in Fig.~\ref{F-133} and Fig.~\ref{F-134}, the system with the controller $C_{g_1}$ has larger control input in comparison with the system with the controller $C_{g_2}$. As discussed, unlike the case of linear controller and the results obtained using the DF method, the control sensitivity due to the disturbance $CS_{d_{\infty}}$ is different from the complementary sensitivity, particularly at middle frequencies.
\newline In addition, the tracking error and the error due to the presence of disturbance of the system with the controller $C_{g_1}$ at 5 Hz are obtained experimentally (Table \ref{T-02}). As was shown, there are negligible differences between experimental and our proposed results. These small differences between the theoretical and the experimental results are due to quantization, digitalization of the controller, numerical approximations, and the presence of noise.
    \begin{table}
	\centering
	\caption{Comparison between the theoretical and experimental results in terms of tracking performance and disturbance rejection}
	\label{T-02}
		\begin{tabular}{|c|c|c|}
			\hline
			\multirow{2}{*}{\textbf{Performance}}        & \multicolumn{2}{c|}{$C_{g_1}$}\\ \cline{2-3} 
			& Theory            & Experiment\\ \hline
			Tracking $\dfrac{e_r}{|r|}$(dB)              & -37.57            & -35.8  \\ \hline
			Disturbance rejection $\dfrac{e_w}{|w|}$(dB) & -33.1             & -34    \\ \hline
		\end{tabular}%
\end{table}

In summary the proposed methods allow predicting the closed-loop performance of reset control systems more accurately than the DF method. In addition, it reveals important features of reset controllers which are not exposed by the DF method.    
\section{Conclusion}\label{sec:6}
This paper has proposed an analytical approach to obtain closed-loop frequency responses for reset control systems, including high order harmonics. To this end, sufficient conditions for the existence of the steady-state solution of the closed-loop reset control systems driven by periodic inputs have been presented. Moreover, pseudo-sensitivities, which serve as a graphical tool for performance analysis of reset controllers, have been defined: these relate the error and control input of the system to the reference and the disturbance. All calculations can be performed in a user-friendly toolbox to make this approach easy of use. To show the effectiveness of the proposed method, the performances of a high-precision positioning stage with reset controllers have been assessed using the DF method and our proposed method. The results confirm that the proposed method predicts the closed-loop performance of reset control systems more accurately than the DF method.
\appendices
\section{}
\begin{lemma}\label{AP1}
Consider the linear systems $\dot{x}_{p_1}(t)=A_px_{p_1}(t)+B_pu(t)$, $y_{p_1}(t)=C_px_{p_1}(t)$, with $x_{p_1}(0)=x_0$, $\dot{x}_{p_2}(t)=A_px_{p_2}(t)+B_pu(t)$, $y_{p_2}(t)=C_px_{p_2}(t)+W_I(t)$, with $x_{p_2}(0)=0$, and $\dot{z}(t)=A_pz(t)$, $W_I(t)=C_pz(t)$, with $z(0)=x_0$, in which $A_p$, $B_p$, and $C_p$ describe a realization of transfer function $P(s)$ (see Fig.~\ref{last}). Then $y_{p_1}(t)=y_{p_2}(t)$, for all $t\geq0$.   
\end{lemma}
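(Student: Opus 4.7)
The plan is to exploit linearity and reduce the claim to the uniqueness of solutions of a linear ODE, rather than computing outputs directly. Since Systems 1 and 2 share the same dynamic matrix $A_p$, the same input matrix $B_p$, and the same driving input $u(t)$, the natural move is to study the difference of their state trajectories and show that this difference is precisely $z(t)$.

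First, I would define $\Delta x(t) := x_{p_1}(t) - x_{p_2}(t)$. Subtracting the two state equations cancels the common forcing term $B_p u(t)$ and yields the autonomous linear equation $\dot{\Delta x}(t) = A_p \Delta x(t)$, with initial condition $\Delta x(0) = x_0 - 0 = x_0$. This is exactly the initial-value problem satisfied by $z(t)$. Invoking uniqueness of solutions of linear ODEs, I would conclude $\Delta x(t) = z(t)$ for all $t \geq 0$. Left-multiplying by $C_p$ then gives $C_p x_{p_1}(t) - C_p x_{p_2}(t) = C_p z(t) = W_I(t)$, which rearranges to $y_{p_1}(t) = C_p x_{p_1}(t) = C_p x_{p_2}(t) + W_I(t) = y_{p_2}(t)$, as required.

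There is essentially no obstacle here: the lemma is a restatement of the standard zero-state / zero-input decomposition of an LTI system, where System 1 produces the total response, System 2 produces the zero-state (forced) part, and System 3 produces the zero-input (free) part which is injected additively at the output. The only thing worth checking carefully is that the initial conditions of the three systems are compatible, namely $x_{p_1}(0) = x_0$, $x_{p_2}(0) = 0$, and $z(0) = x_0$, so that the difference $\Delta x$ and $z$ start from the same state and therefore remain equal under identical autonomous dynamics. No use of the specific form of $P(s)$ or of stability is needed; the argument works for arbitrary $A_p$, $B_p$, $C_p$ and arbitrary (locally integrable) $u$.
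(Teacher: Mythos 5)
Your proof is correct and rests on the same idea as the paper's, namely superposition for LTI systems: the paper simply writes out the explicit variation-of-constants solution, noting $W_I(t)=C_pe^{A_pt}x_0$ so that both outputs equal $C_pe^{A_pt}x_0+C_p\int_0^te^{A_p(t-\tau)}B_pu(\tau)\,d\tau$, whereas you reach the same conclusion by subtracting the state equations and invoking uniqueness for $\dot{\Delta x}=A_p\Delta x$, $\Delta x(0)=x_0$. The difference is purely presentational; your version avoids writing the matrix exponential explicitly but establishes exactly the same zero-input/zero-state decomposition.
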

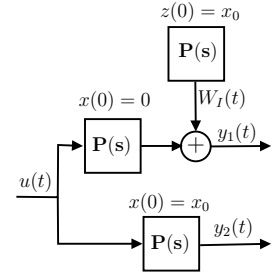
\begin{figure}
	\centering
\resizebox{0.4\hsize}{!}{	
\tikzset{every picture/.style={line width=0.75pt}} 
\begin{tikzpicture}[x=0.75pt,y=0.75pt,yscale=-1,xscale=1]
\draw [line width=1.5]    (1.5,208) -- (44.63,208.45) -- (45,154) -- (67,154.85) ;
\draw [shift={(71,155)}, rotate = 182.2] [fill={rgb, 255:red, 0; green, 0; blue, 0 }  ][line width=0.08]  [draw opacity=0] (11.61,-5.58) -- (0,0) -- (11.61,5.58) -- cycle    ;
\draw [line width=1.5]    (44.63,208.45) -- (45,257) -- (129,257) ;
\draw [shift={(133,257)}, rotate = 180] [fill={rgb, 255:red, 0; green, 0; blue, 0 }  ][line width=0.08]  [draw opacity=0] (11.61,-5.58) -- (0,0) -- (11.61,5.58) -- cycle    ;
\draw [line width=1.5]    (204,155.5) -- (263,155.03) ;
\draw [shift={(267,155)}, rotate = 539.55] [fill={rgb, 255:red, 0; green, 0; blue, 0 }  ][line width=0.08]  [draw opacity=0] (11.61,-5.58) -- (0,0) -- (11.61,5.58) -- cycle    ;
\draw [line width=1.5]    (131.5,155) -- (171,155) ;
\draw [shift={(175,155)}, rotate = 180] [fill={rgb, 255:red, 0; green, 0; blue, 0 }  ][line width=0.08]  [draw opacity=0] (11.61,-5.58) -- (0,0) -- (11.61,5.58) -- cycle    ;
\draw  [line width=1.5]  (73,127) -- (131,127) -- (131,183) -- (73,183) -- cycle ;
\draw  [line width=1.5]  (160,31) -- (218,31) -- (218,87) -- (160,87) -- cycle ;
\draw  [line width=1.5]  (173,155.5) .. controls (173,146.94) and (179.94,140) .. (188.5,140) .. controls (197.06,140) and (204,146.94) .. (204,155.5) .. controls (204,164.06) and (197.06,171) .. (188.5,171) .. controls (179.94,171) and (173,164.06) .. (173,155.5) -- cycle ;
\draw [line width=1.5]    (188,88) -- (188.46,136) ;
\draw [shift={(188.5,140)}, rotate = 269.45] [fill={rgb, 255:red, 0; green, 0; blue, 0 }  ][line width=0.08]  [draw opacity=0] (11.61,-5.58) -- (0,0) -- (11.61,5.58) -- cycle    ;
\draw  [line width=1.5]  (134,229) -- (192,229) -- (192,285) -- (134,285) -- cycle ;
\draw [line width=1.5]    (193,258.5) -- (263,257.08) ;
\draw [shift={(267,257)}, rotate = 538.8399999999999] [fill={rgb, 255:red, 0; green, 0; blue, 0 }  ][line width=0.08]  [draw opacity=0] (11.61,-5.58) -- (0,0) -- (11.61,5.58) -- cycle    ;
\draw (193,15) node  [font=\large,xscale=1.3,yscale=1.3]  {$z( 0) =x_{0}$};
\draw (21.84,192.76) node  [font=\large,xscale=1.3,yscale=1.3]  {$u( t)$};
\draw (101,155) node  [font=\large,xscale=1.3,yscale=1.3]  {$\mathbf{P( s)}$};
\draw (231.84,141.76) node  [font=\large,xscale=1.3,yscale=1.3]  {$y_{1}( t)$};
\draw (227.84,240.76) node  [font=\large,xscale=1.3,yscale=1.3]  {$y_{2}( t)$};
\draw (191,57) node  [font=\large,xscale=1.3,yscale=1.3]  {$\mathbf{P( s)}$};
\draw (188.5,153.5) node  [font=\Large,xscale=1.3,yscale=1.3]  {$+$};
\draw (216,106) node  [font=\large,xscale=1.3,yscale=1.3]  {$W_{I}(t)$};
\draw (104,108) node  [font=\large,xscale=1.3,yscale=1.3]  {$x( 0) =0$};
\draw (162,257) node  [font=\large,xscale=1.3,yscale=1.3]  {$\mathbf{P( s)}$};
\draw (161,215) node  [font=\large,xscale=1.3,yscale=1.3]  {$x( 0) =x_{0}$};
\end{tikzpicture}}
	\caption{Diagram of the result in Lemma \ref{AP1}}
	\label{last}
\end{figure} 
\begin{proof}
Note that $W_I(t)=C_pe^{A_pt}x_0$. Thus, $$y_{p_2}(t)=y_{p_1}(t)=C_pe^{A_p(t)}x_0+\displaystyle\int_{0}^{t}e^{A_p(t-\tau)}B_pu(\tau)d\tau.$$
\end{proof}\section{}
\begin{lemma}\label{AP2}
Consider a positive and bounded function $V(t)$. Suppose that there exists a $\alpha>0$ such that
\begin{equation}\label{E-AP201} 
\begin{cases} 
\dot{V}\leq-\alpha V & t\in\mathcal{M},\\
V(\Delta x(t^+))=V(\Delta x(t))+\Xi(t,\delta), & t\notin\mathcal{M}.\\
\end{cases}
\end{equation}
If for $t$ sufficiently large
\begin{equation}\label{E-AP2011} 
\Xi(t,\delta)\leq0,
\end{equation}
then there exist $\alpha_m>0$ and $\mathcal{K}>0$ such that
\begin{equation}\label{E-AP202} 
V(t)\leq \mathcal{K}e^{-\alpha_m t},\text{ for all }t\geq0.
\end{equation}
\end{lemma}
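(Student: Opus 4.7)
The plan is to split the time axis at a threshold $T_0>0$ after which the jump increment $\Xi(t,\delta)$ is non-positive, by hypothesis. On $[T_0,\infty)$, $V$ is exponentially decaying between jumps and non-increasing at jumps, so a clean exponential bound drops out; on $[0,T_0]$ the conclusion follows from boundedness of $V$ after absorbing the finite contribution into the multiplicative constant $\mathcal{K}$.

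The first step would be to fix $T_0>0$ such that $\Xi(t,\delta)\leq 0$ for every jump time $t\geq T_0$ (such $T_0$ exists by assumption~(\ref{E-AP2011})) and set $M=\sup_{t\geq 0}V(t)$, which is finite since $V$ is bounded. On any open interval $(t_k,t_{k+1}]\subset[T_0,\infty)$ between consecutive jump instants, the standard comparison argument applied to $\dot V\leq -\alpha V$ yields $V(t)\leq V(t_k^+)e^{-\alpha(t-t_k)}$. At the jump $t_{k+1}$, the hypothesis gives $V(t_{k+1}^+)=V(t_{k+1})+\Xi(t_{k+1},\delta)\leq V(t_{k+1})$, so the exponential estimate is preserved across each jump. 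Iterating across the (finitely many in any bounded window, by the well-posedness of reset instants used in Lemma~\ref{L2}) jumps in $[T_0,t]$ produces
\[
V(t)\leq V(T_0)\,e^{-\alpha(t-T_0)}\leq M\,e^{\alpha T_0}\,e^{-\alpha t},\qquad t\geq T_0.
\]

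For the remaining interval $t\in[0,T_0]$, I would use the crude estimate $V(t)\leq M\leq M\,e^{\alpha T_0}\,e^{-\alpha t}$, where the last inequality holds because $e^{\alpha(T_0-t)}\geq 1$ on this interval. Setting $\mathcal{K}=M\,e^{\alpha T_0}$ and $\alpha_m=\alpha$ then merges both regimes into the single bound $V(t)\leq\mathcal{K}e^{-\alpha_m t}$ valid for all $t\geq 0$, as claimed.

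The argument is essentially elementary and I do not expect any serious obstacle: the two ingredients (Gronwall-type decay between jumps, non-positive jumps thereafter) are the textbook pair for impulsive Lyapunov analysis. The only point deserving care is verifying that the countably many jump contributions on $[T_0,t]$ compose correctly into one exponential, which is precisely why the hypothesis $\Xi(t,\delta)\leq 0$ (for large $t$) rather than mere summability is essential; once this is in place, the chaining is automatic.
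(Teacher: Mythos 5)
Your proof is correct, and it is cleaner than the one in the paper, but it follows a genuinely different route. The paper locates the (finite) time $t_{v_m}$ at which the bounded function $V$ attains its maximum, collects the residual jump instants after $t_{v_m}$ at which $\Xi>0$ into a set $\mathcal{T}$ (bounded because $\Xi\leq 0$ eventually), assigns to each such instant $t_i$ a rate $\alpha_i$ through $V(t_i)=e^{-\alpha_i(t_i-t_{v_m})}V(t_{v_m})$, and then degrades the overall decay rate to $\alpha_m=\min(\alpha,\alpha')$ with $\alpha'$ a positive lower bound on the $\alpha_i$, so that a single exponential anchored at $t_{v_m}$ dominates $V$ despite the intervening positive jumps. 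You instead split time at the threshold $T_0$ beyond which all jumps are non-positive, run the standard impulsive Gr\"onwall chaining on $[T_0,\infty)$ to keep the full rate $\alpha$, and absorb the entire pre-$T_0$ history into the constant $\mathcal{K}=Me^{\alpha T_0}$ via boundedness. What your approach buys: you obtain the (formally stronger) rate $\alpha_m=\alpha$ rather than a possibly smaller $\min(\alpha,\alpha')$, and you sidestep the delicate step in the paper where one must argue that the set $\mathcal{A}$ of rates is bounded away from zero --- a claim that really rests on $\mathcal{T}$ being \emph{finite} (via the well-posedness lower bound on inter-reset times), not merely bounded. What the paper's approach buys is a bound explicitly anchored at the peak value $V(t_{v_m})$, which can give a tighter constant when $T_0$ is large. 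Your only implicit assumptions --- absolute continuity of $V$ between jumps so that the comparison lemma applies, and no accumulation of jump instants in finite time --- are exactly the ones the paper also relies on, and you correctly flag the latter by invoking well-posedness.
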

\begin{proof}
Since $V$ is bounded, by (\ref{E-AP201}) and (\ref{E-AP2011}), $V$ achieves its maximum value at some time $t_{v_m}<\infty$. In other words, there exists a time $0\leq t_{v_m}<\infty$ such that
\begin{equation}\label{E-AP203} 
\begin{cases}
V(t_{v_m})\geq V(t), & t\leq t_{v_m},\\
V(t_{v_m})> V(t).      & t> t_{v_m},
\end{cases}
\end{equation}
Therefore, by (\ref{E-AP2011}) and well-posedness property, there exists a bounded set $\mathcal{T}=\{t_i>t_{v_m}|\ t_i\notin\mathcal{M}\land\Xi(t_i,\delta)>0,\ i\in\mathbb{N}\}$. Thus, using (\ref{E-AP203}) there exists a bounded set $\mathcal{A}=\{\alpha_i>0|\ V(t_{i})=e^{-\alpha_i(t_i-t_{v_m})}V(t_{v_m}),\ t_i\in\mathcal{T}\}$. Since the set $\mathcal{A}$ is bounded, there exists a $\alpha^\prime>0$ such that for all $\alpha_i\in\mathcal{A}$ one has that $\alpha^\prime\leq\alpha_i$. Now considering $\alpha_m=\min (\alpha,\alpha^\prime)$, based on (\ref{E-AP201}) and (\ref{E-AP2011}), yields
\begin{equation}\label{E-AP204} 
V(t)\leq e^{-\alpha_m (t-t_{v_m})}V(t_{v_m})=\mathcal{K}e^{-\alpha_m t},\text{ for all }t\geq0.
\end{equation}
Finally, if $\mathcal{T}$ and $\mathcal{A}$ are empty sets, then selecting $\alpha_m=\alpha$ the claim yields. 
\end{proof}
\section{}\label{APC}
Note that by~(\ref{E-314}) and (\ref{E-327}), $\displaystyle\lim_{\omega \to \infty} \bar{y}(t)=0$ and $\displaystyle\lim_{\omega \to \infty} T_n(j\omega)=0$. In addition, since $y(t)=r_0\displaystyle\sum_{n=1}^{\infty}T_n(j\omega)$, $\displaystyle\lim_{\omega \to \infty}\sum_{n=1}^{\infty} T_n(j\omega)=0$. On the other hand, if the transfer function of the controller $C_{\mathfrak{L}_1}$ is proper, then $\displaystyle\lim_{\omega \to \infty}C_{\mathfrak{L}_1}(nj\omega)=K_{c1}$; otherwise, $\displaystyle\lim_{\omega \to \infty}(nj\omega)^{n_c}C_{\mathfrak{L}_1}(nj\omega)=1$, with $n_c\geq1$. In the case in which $C_{\mathfrak{L}_1}$ is proper. 
\begin{equation}\label{E-APC01}
\begin{array}{*{35}{c}}
\displaystyle\lim_{\omega \to \infty}\dfrac{\underset{t_s\leq t\leq t_{s+q}}{\max}|e_R(t)-e_{R_1}(t)|}{\underset{t_s\leq t\leq t_{s+q}}{\max}|e_{R_1}(t)|}\leq\lim_{\omega \to \infty}\dfrac{K_{c1}r_0|\displaystyle\sum_{n=3}^{\infty}T_n(j\omega)|}{K_{c1}r_0}\\
\Rightarrow\displaystyle\lim_{\omega \to \infty}\dfrac{\underset{t_s\leq t\leq t_{s+q}}{\max}|e_R(t)-e_{R_1}(t)|}{\underset{t_s\leq t\leq t_{s+q}}{\max}|e_{R_1}(t)|}=0.
\end{array}
\end{equation} 
In the case in which $C_{\mathfrak{L}_1}$ is strictly proper. 
\begin{equation}\label{E-APC02}
\begin{array}{*{35}{c}}
\displaystyle\lim_{\omega \to \infty}\dfrac{\underset{t_s\leq t\leq t_{s+q}}{\max}|e_R(t)-e_{R_1}(t)|}{\underset{t_s\leq t\leq t_{s+q}}{\max}|e_{R_1}(t)|}<\lim_{\omega \to \infty}\dfrac{\dfrac{r_0}{\omega^{n_c}}|\displaystyle\sum_{n=3}^{\infty}\dfrac{T_n(j\omega)}{n^{n_c}}|}{\dfrac{r_0}{\omega^{n_c}}}\\
\Rightarrow\displaystyle\lim_{\omega \to \infty}\dfrac{\underset{t_s\leq t\leq t_{s+q}}{\max}|e_R(t)-e_{R_1}(t)|}{\underset{t_s\leq t\leq t_{s+q}}{\max}|e_{R_1}(t)|}=0.
\end{array}
\end{equation} 
\section*{Acknowledgment}
The authors would like to thank Prof. Alfonso Ba$\tilde{\text{n}}$os  from University of Murcia, Spain, for his kind guidance and for the useful discussions.

\bibliographystyle{IEEEtran}
\bibliography{phd}

\begin{IEEEbiography}[{\includegraphics[width=1in,height=1.25in,clip,keepaspectratio]{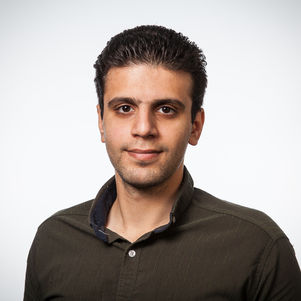}}]{Ali Ahmadi Dastjerdi}
received his master degree in mechanical engineering from Sharif University of Technology, Iran, in 2015. He is currently working as
a PhD candidate at the department of precision and microsystem engineering, TU Delft, The Netherlands. He has also collaborated with Prof. Alessandro Astolfi since 2019 as a sabbatical leave in Imperial College University. His primary research interests are on mechatronic systems design, precision engineering, precision motion control, and nonlinear control.
\end{IEEEbiography}
\begin{IEEEbiography}[{\includegraphics[width=1in,height=1.25in,clip,keepaspectratio]{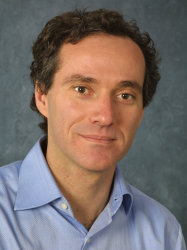}}]{Alessandro Astolfi}
was born in Rome, Italy, in 1967. He graduated in electrical engineering from the University of Rome in 1991. In 1992 he joined ETH-Zurich where he obtained a M.Sc. in Information Theory in 1995 and the Ph.D. degree with Medal of Honor in 1995 with a thesis on discontinuous stabilization of nonholonomic systems. In 1996 he was awarded a Ph.D. from the University of Rome La Sapienza for his work on nonlinear robust control. Since 1996 he has been with the Electrical and Electronic Engineering Department of Imperial College London, London (UK), where he is currently Professor of Nonlinear Control Theory and Head of the Control and Power Group. From 1998 to 2003 he was also an Associate Professor at the Dept. of Electronics and Information of the Politecnico of Milano. Since 2005 he has also been a Professor at Dipartimento di Ingegneria Civile e Ingegneria Informatica, University of Rome Tor Vergata. He has been a visiting lecturer in Nonlinear Control in several universities, including ETH-Zurich (1995-1996); Terza Univer- sity of Rome (1996); Rice University, Houston (1999); Kepler University, Linz (2000); SUPELEC, Paris (2001), Northeastern University (2013). His research interests are focused on mathematical control theory and control applications, with special emphasis for the problems of discontinuous stabilization, robust and adaptive control, observer design and model reduction. He is the author of more than 150 journal papers, of 30 book chapters and of over 240 papers in refereed conference proceedings. He is the author (with D. Karagiannis and R. Ortega) of the monograph Nonlinear and Adaptive Control with Applications (Springer-Verlag).
He is the recipient of the IEEE CSS A. Ruberti Young Researcher Prize (2007), the IEEE RAS Googol Best New Application Paper Award (2009), the IEEE CSS George S. Axelby Outstanding Paper Award (2012), the Automatica Best Paper Award (2017). He is a Distinguished Member of the IEEE CSS, IET Fellow, IEEE Fellow and IFAC Fellow. He served as Associate Editor for Automatica, Systems and Control Letters, the IEEE Trans. on Automatic Control, the International Journal of Control, the European Journal of Control and the Journal of the Franklin Institute; as Area Editor for the Int. J. of Adaptive Control and Signal Processing; as Senior Editor for the IEEE Trans. on Automatic Control; and as Editor-in-Chief for the European Journal of Control. He is currently Editor-in-Chief of the IEEE Trans. on Automatic Control. He served as Chair of the IEEE CSS Conference Editorial Board (2010- 2017) and in the IPC of several international conferences. He has been/is a Member of the IEEE Fellow Committee (2016, 2018/2019).
\end{IEEEbiography}
\begin{IEEEbiography}[{\includegraphics[width=1in,height=1.25in,clip,keepaspectratio]{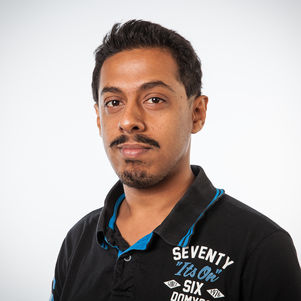}}]{Niranjan Saikumar}
received his PhD degree in electrical engineering from Indian Institute of Science, India in 2015. He is currently working as a postdoc at the department of precision and microsystem engineering, TU Delft, The Netherlands. His research interests are on precision motion control, and nonlinear precision control and mechatronic system with distributed actuation.
\end{IEEEbiography}
\begin{IEEEbiography}[{\includegraphics[width=1in,height=1.25in,clip,keepaspectratio]{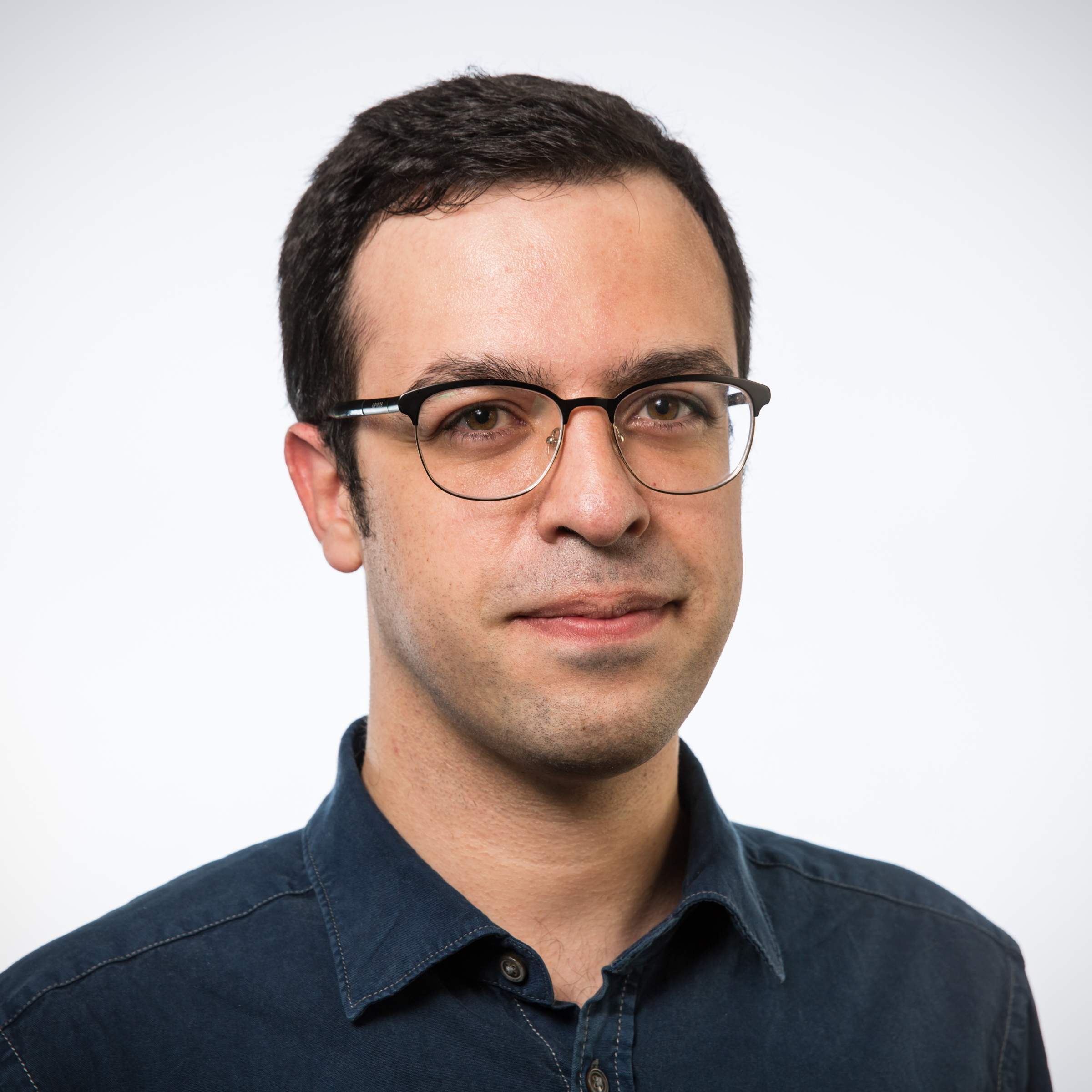}}]{Nima Karbasizadeh}
has received his M.Sc. degree in Mechatronics from University of Tehran, Iran in 2017. He is currently a PhD candidate at the department of precision and microsystem engineering, Delft University of Technology, the Netherlands. His research interests are precision motion control, nonlinear precision control, mechatronic system design and haptics.
\end{IEEEbiography}
\begin{IEEEbiography}[{\includegraphics[width=1in,height=1.25in,clip,keepaspectratio]{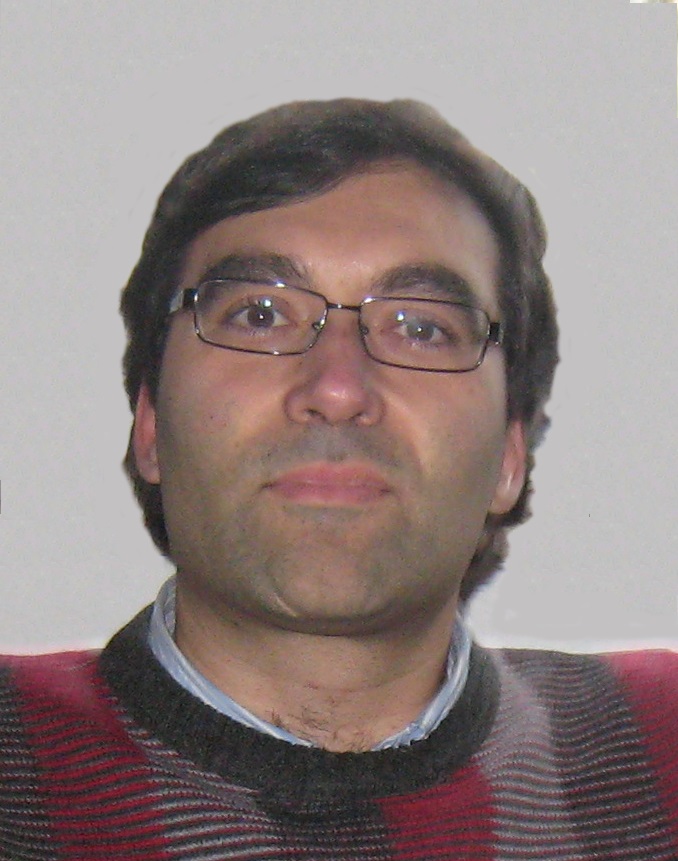}}]{Duarte Val\'erio}
is Associate Professor at Instituto Superior T\'ecnico University of Lisbon, where he got his MSc (2001) and PhD (2005) in Mechanical Engineering, with theses on fractional control, i.e.\ on the use of fractional (non-integer) order derivatives in control. He has worked with fractional control and fractional dynamic systems, and their applications in several areas, ever since. He also researches in the fields of energy conversion (in particular, the control of Wave Energy Converters, that produce electricity from the energy of sea waves) and bioengineering applications (modelling and control of dynamic systems such as biological processes).
He has co-authored over forty papers in journals with impact factors, three books, over sixty papers in conference proceedings, and nine book chapters.
\end{IEEEbiography}
\begin{IEEEbiography}[{\includegraphics[width=1in,height=1.25in,clip,keepaspectratio]{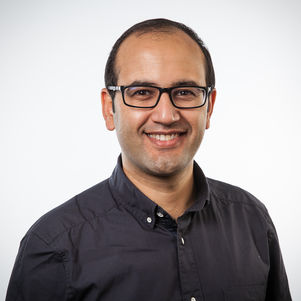}}]{S. Hassan HosseinNia}
received his PhD degree with honour "cum laude" in electrical engineering specializing in automatic control: application in mechatronics, form the University of Extremadura, Spain in 2013. His main research interests are in precision mechatronic system design, precision motion control and mechatronic system with distributed actuation and sensing. He has an industrial background working at ABB, Sweden. Since October 2014 he is appointed as an assistant professor at the department of precession and microsystem engineering at TU Delft, The Netherlands. He is an associate editor of the international journal of advanced robotic systems
and Journal of Mathematical Problems in Engineering.
\end{IEEEbiography}
\end{document}